%% file: main-set-interpretations.tex
\documentclass[a4paper,USenglish,cleveref, autoref, thm-restate, numberwithinsect]{lipics-v2021}
\hideLIPIcs  %uncomment to remove references to LIPIcs series (logo, DOI, ...), e.g. when preparing a pre-final version to be uploaded to arXiv or another public repository
\nolinenumbers

\usepackage{stmaryrd}
\usepackage{todonotes}
\usepackage[notion,quotation, paper]{knowledge}
\knowledgeconfigure{diagnose line=true}
\usepackage{lowco2}

\usepackage{mathtools}
\usepackage{csquotes}
\usepackage{cmll}
\usepackage{booktabs}
\usepackage{comment}
\usepackage{xspace}
\usepackage[oldenum,olditem]{paralist}

\input{kl-set-interpretations.kl}

\input{macros}

\ccsdesc[300]{Transducers / Finite Model Theory}
\keywords{monadic second-order logic, exponential growth, automatic structures}

\title{Expregular functions}
\author{Thomas Colcombet}{CNRS, IRIF, Universit\'e Paris Cité, France \and \url{https://www.irif.fr/~colcombe}}{thomas.colcombet@irif.fr}{https://orcid.org/0000-0001-6529-6963}{}
\author{Nathan Lhote}{LIS, Université Aix-Marseille, France \and \url{https://pageperso.lis-lab.fr/nathan.lhote/}}{nathan.lhote@lis-lab.fr}{https://orcid.org/0000-0003-3303-5368}{This work was partly supported by ANR QUASY 23-CE48-0008.}
\author{Pierre Ohlmann}{CNRS, LIS, Université Aix-Marseille, France \and \url{https://pageperso.lis-lab.fr/pierre.ohlmann/}}{pierre.ohlmann@lis-lab.fr}{https://orcid.org/0000-0002-4685-5253}{}

\authorrunning{T. Colcombet, N. Lhote, P. Ohlmann}
\acknowledgements{We express our gratitude to Lê Thành Dũng (Tito) Nguy{\~{\^{e}}}n for contributing several crucial insights to this project, and in particular the use of yield-Hennie machines.
Nathan Lhote would like to thank Pierre-Alain Reynier, Emmanuel Filiot, Rafa\l\  Stefa\'nski and Miko\l aj Boja\'nczyk for valuable discussions at the early stages of this line of research.}

\begin{document}

\maketitle
\begin{abstract}
	Polyregular functions form a robust class of string-to-string functions with polynomial growth, as evidenced by Boja\'{n}czyk (2018).
	This class admits numerous descriptions and enjoys several closure properties.
	Most notably, polyregular functions are regularity reflecting (\ie the inverse image of a regular language is regular).
   
	In this work, we propose a robust class of string-to-string functions with exponential growth which we call expregular functions. We consider the following three models for describing them:
	\begin{itemize}
		\item MSO set interpretations, which extend MSO interpretations (one of the models capturing polyregular functions), by operating on monadic variables instead of tuples of first-order variables;
		\item yield-Hennie machines, which are branching one-tape Turing machines with bounded visit; and
		\item Ariadne transducers, a new model of 2-way pushdown machines with a bounded visit restriction.
	\end{itemize}
	Our main contribution is a translation from MSO set interpretations to yield-Hennie machines, which are known to be regularity reflecting (Dartois, Nguy\~{\^{e}}n, Peyrat 2026).
	In particular this establishes that MSO set interpretations are regularity reflecting, which in turn settles a major conjecture about automatic structures: every automatic $\omega$-word has a decidable MSO theory.

	Yield-Hennie machine directly translate to Ariadne transducers, and our second contribution is to prove that Ariadne transducers also translate to MSO set interpretations, thus establishing the equivalence of the three models. This is obtained by showing that that Ariadne automata --- the automaton model corresponding to Ariadne transducers --- recognise regular languages.
\end{abstract}

\vskip\topmattervskip\baselineskip
\noindent\textcolor{lipicsGray}{\fontsize{9}{12}\sffamily\bfseries
                      \lowcotwobold\ Low-co2 research paper} (\lowcotwourl[v1]) This research
was developed, written, submitted and presented without the use of air travel.

\bigskip

\textit{The document contains internal hyperlinks for helping the reader on electronic devices.}

\newpage
\tableofcontents
\newpage

% \newpage % Tito: même sans ma todonote, la table des matières déborde, et ça va empirer en rajoutant des sous-sections…

\knowledgenewrobustcmd\universe{\cmdkl{\mathit{U}}}

\section{Introduction}
\label{sec:introduction}
\input{introduction.tex}
\section{Three formalisms for expregular functions}
\label{sec:three-formalisms}
\input{machine-models.tex}

\section{Equivalence of the three models}
\label{sec:equivalence}
\input{equivalence-of-models.tex}

\section{Discussion}
\label{sec:discussion}
\input{discussion.tex}

\section{From MSO set interpretations to yield-Hennie machines}
\label{sec:msosi-to-hennie}
\input{msosi-to-hennie}

\section{From yield-Hennie machines to Ariadne transducers}
\label{sec:hennie-to-ariadne}
\input{hennie-to-ariadne}

\section{From Ariadne transducers to MSO set interpretations}
\label{sec:ariadne-to-msosi}

\input{ariadne-to-msosi}

\section{Regularity of Ariadne automata}
\label{sec:reflection}

\input{ariadne-reflection.tex}

\bibliography{biblio}

\end{document}

%% file: macros.tex
\knowledgenewcommand{\re}[1]{\mathrel{\cmdkl{\xrightarrow{#1}}}}

\newcommand\thomas[1]{}
   
  \newcommand\nathan[1]{}
  \newcommand\nat[1]{}
  \newcommand\pierre[1]{}
  \newcommand\tito[1]{}

\newcommand{\N}{\mathbb N}

\newcommand{\fin}[1]{W_{\mathrm{fin}}}

\knowledgenewcommand{\loo}{\mathrm{\cmdkl{loop}}}

\knowledgenewcommand{\ord}{\mathrm{\cmdkl{order}}}

\knowledgenewcommand\pleq{\mathrel{\cmdkl\preccurlyeq}}

\knowledgenewcommand\convexhull{\mathrm{\cmdkl{hull}}}

\newcommand\axisleq{\mathrel{\kl[\axisleq]{\leqslant}}}
\knowledge{\axisleq}{notion}

\newcommand{\pos}{\mathrm{conf}}
\newcommand{\tup}[1]{\mathbf{#1}}

\newcommand{\la}{\leftarrow}
\newcommand{\ra}{\rightarrow}
\newcommand{\stay}{\circlearrowright}

\newcommand{\Hh}{\mathcal{H}}

\knowledgenewrobustcmd{\typ}[3]{\mathrm{\cmdkl{type}}_{#1}(#2,#3)}

\newcommand{\lef}[1]{\overleftarrow{#1}}
\newcommand{\rig}[1]{\overrightarrow{#1}}

\newcommand{\Xf}[2]{#1^{(#2)}}
\newcommand{\existsf}[1]{\exists^{(#1)}}

\knowledgenewcommand{\qr}{\mathrm{\cmdkl{rank}}}
\newcommand{\midd}{\mathrm{middle}}

\newcommand{\prods}{\otimes_s}

\newcommand{\phimwf}{\phi^{\mathrm{mwf}}}

\newcommand{\mem}{\mathrm{mem}}
\newcommand{\cont}{\mathrm{sec}}
\newcommand{\cov}{\mathrm{cov}}

\newcommand{\dir}{\mathrm{dir}}

\newcommand{\step}{\mathrm{step}}

\newcommand{\childsplopp}{%
  {\operatorname{child\text{-}spl\text{-}opp}}%
}
\newcommand{\childsplsame}{%
  {\operatorname{child\text{-}spl\text{-}same}}%
}
\newcommand{\childsplsucc}{%
  {\operatorname{child\text{-}spl\text{-}suc}}%
}

\newcommand{\childsplx}{%
  {\operatorname{child\text{-}spl\text{-}}\!x}%
}

\newcommand{\childspltilex}{%
  {\operatorname{child\text{-}spl\text{-}tile\text{-}}\!x}%
}

\newcommand{\childint}{%
  {\operatorname{child\text{-}int}}%
}

\newcommand{\dirsucc}{%
  {\operatorname{inter\text{-}int}}%
}

\newcommand{\dirsuccsplprime}{%
  {\operatorname{inter\text{-}spl}}%
}

\newcommand{\suc}{\mathrm{suc}}
\newcommand{\opp}{\mathrm{opp}}
\newcommand{\same}{\mathrm{same}}

\newcommand{\out}{\mathrm{out}}

\newcommand{\borders}{{\bowtie}}
\newcommand{\before}{\mathrm{bef}}

\newcommand{\closed}{\mathrm{closed}}

\newcommand{\triangles}[1]{\triangleright #1 \triangleleft}

\newcommand{\pre}{\mathrm{pre}}

\newcommand{\Typ}{\mathrm{Typ}}

\newcommand{\tile}{\lambda}

\newcommand{\produce}[1]{\mathrm{produce}(#1)}
\newcommand{\producenothing}{\mathrm{no-prod}}

\newcommand{\move}{\mathrm{move}}

\newcommand{\state}{\mathrm{state}}
\newcommand{\tape}{\mathrm{tape}}

\newcommand{\valid}{\mathrm{valid}}

\newcommand{\set}[1]{\{ #1 \}}
\newcommand{\rprefix}{\mathsf{rev{-}prefix}}

\newcommand{\ie}{\textit{i.e.~}}
\newcommand{\eg}{\textit{e.g.~}}
\newcommand{\cf}{\textit{cf.~}}
\newcommand{\etal}{\textit{et al.}~}
\newcommand{\etc}{\textit{etc}}

\newcommand{\type}[2]{\mathrm{type}(#1,#2)}

\knowledgenewrobustcmd{\push}{\mathtt{\cmdkl{push}}}
\knowledgenewrobustcmd{\pop}{\mathtt{\cmdkl{pop}}}
\knowledgenewrobustcmd{\update}{\mathrm{\cmdkl{Update}}}

\newcommand{\view}{\mathrm{view}}
\newcommand{\dest}{\mathrm{dest}}
\newcommand{\action}{\mathsf{action}}

\knowledgenewrobustcmd{\ttop}{\mathrm{\cmdkl{top}}}
\knowledgenewrobustcmd{\untop}{\mathrm{\cmdkl{untop}}}
\newcommand{\init}{q_0}

\knowledgenewcommand{\nextchild}{\mathsf{\cmdkl{nextchild}}}
\newcommand{\finish}{\textsf{finish}}
\newcommand{\tower}{\textsc{Tower}\xspace}

\newcommand{\mypiccore}[1]{
	\includegraphics[page=#1,scale=0.3]{msosi}
}

\newcommand{\mypic}[1]{
	\begin{center}
		\mypiccore{#1}
	\end{center}
}
\newcommand{\mypicsmaller}[1]{
	\begin{center}
		\includegraphics[page=#1,scale=0.261]{msosi}
	\end{center}
}

\newcommand{\true}{\mathtt{true}}
\newcommand{\false}{\mathtt{false}}

\knowledgenewrobustcmd\locview[2]{#1\cmdkl{|}_{#2}}

\newrobustcmd\newfreshHennie[2]{%
	\def\Hennie{\kl[\Hennie{#1}]{\mathcal{H}}}%
	\knowledge{\Hennie{#1}}{notion}%
	\def\stateshennie{\kl[\stateshennie{#1}]{Q#2}}%
	\knowledge{\stateshennie{#1}}{notion}
	\def\transhennie{\kl[\transhennie{#1}]{\delta#2}}%
	\knowledge{\transhennie{#1}}{notion}%
	\def\tapehennie{\kl[\tapehennie{#1}]{\Theta#2}}%
	\knowledge{\tapehennie{#1}}{notion}%
	}
\newfreshHennie{}{}
\newrobustcmd\newfreshAriadne[2]{%
	\def\Ariadne{\kl[\Ariadne{#1}]{\mathcal A}}%
	\knowledge{\Ariadne{#1}}{notion}%
	\def\statesariadne{\kl[\statesariadne{#1}]{Q#2}}%
	\knowledge{\statesariadne{#1}}{notion}%
	\def\transariadne{\kl[\transariadne{#1}]{\delta#2}}%
	\knowledge{\transariadne{#1}}{notion}%
	\def\acceptariadne{\kl[\acceptariadne{#1}]{F#2}}%
	\knowledge{\acceptariadne{#1}}{notion}%
	}
\newfreshAriadne{}{}
\def\aria{{\Ariadne}}

\knowledgenewcommand{\True}{\cmdkl{\mathtt T}}

%% file: introduction.tex
% !TEX root =  main-set-interpretations.tex

\reintro{MSO set interpretations} are a logical formalism introduced by Colcombet and Löding~\cite{ColcombetL07} for defining functions on relational structures. They extend the usual monadic second-order (MSO) transductions (\cf\eg\cite{DBLP:conf/icla/Filiot15}) with the use of set variables, making the output size grow up to exponentially in the input size. A basic feature of "MSO set interpretations" is that the inverse image of a first-order definable predicate on outputs is an MSO-definable (\ie regular) predicate on inputs. This feature is closely connected to the decidability of the first-order theory of automatic structures (\cf the survey~\cite{Gradel20}).
In the \emph{string-to-string} case, we prove a recent conjecture of Filiot, Lhote and Reynier~\cite[Conjecture~2]{Lex}:
\begin{theorem}\label{thm:reg-refl}
  String-to-string "MSO set interpretations" are ""regularity reflecting"":\footnote{We borrow from~\cite{BojanczykN23} this terminology inspired by preservation vs.\ reflection of limits in category theory. We avoid the term \enquote{regularity preserving} which evokes direct images. Some other authors such as~\cite{Pin2005,lmcs:4336} refer to \emph{regular-continuous} functions, alluding to the topology of profinite words.} the inverse image of a regular language is effectively regular.
\end{theorem}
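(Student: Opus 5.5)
The plan follows the route announced in the abstract. I would not prove regularity reflection directly on MSO set interpretations. Instead, I would translate every string-to-string MSO set interpretation into an equivalent yield-Hennie machine, and then invoke the result of Dartois, Nguy\~{\^{e}}n and Peyrat that yield-Hennie machines are regularity reflecting. Effectiveness of the whole statement then reduces to effectiveness of this translation, since their result is itself effective.

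For the translation, I would first normalise the interpretation. Output positions are tuples of sets $(X_1,\dots,X_k)$ satisfying a domain formula, ordered by an MSO formula $\phi_{<}$, and labelled by formulas $\phi_a$. By standard tricks (encoding a tuple of sets as one set over an enlarged alphabet, and composing with a regular, hence Hennie-computable, preprocessing) I would reduce to a single set variable. An output position is then a word $x \in (\Sigma\times\{0,1\})^*$ projecting onto the input $w$, and the output order is an MSO-definable linear order on these annotations of $w$.

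The machine would then enumerate the domain in output order by branching. Reading the tape left to right, it guesses the annotation bit position by position and forks into subcomputations, and the yield of the computation tree concatenates the leaves. For this to produce the correct order, at each position the branches must be ordered so that every annotation extending the left branch precedes every annotation extending the right branch, which fails for a general MSO order. The key step is therefore a structural lemma. Using compositionality of MSO types (Shelah/Feferman--Vaught), I would show that the order $\phi_<(X,Y)$ is determined by the pair of MSO types of $(w,X)$ and $(w,Y)$ around the positions where $X$ and $Y$ differ. I would then try to prove that such a definable order can be realised by a bounded-visit, tree-shaped recursive procedure: split the input at a suitable position, determine by a regular lookaround which "block" of output comes first, and recurse on segments. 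Each tape cell is revisited only a bounded number of times, with the bound depending on the quantifier rank, because the relevant information is finite (types). Labels $\phi_a$ and membership in the domain are handled at leaves via regular lookaround, which Hennie machines can simulate with bounded visits.

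I expect this structural decomposition of the MSO-definable order into a bounded-visit, yield-compatible branching schedule to be the main obstacle. I would first attempt it via a factorisation forest (Simon's theorem) over the semigroup of MSO types of annotated words. A recursive traversal of the forest of bounded height would give the bounded-visit property. At idempotent nodes I would need to show that the induced order on combinations of children is lexicographic-like up to finitely many type-determined permutations, so that it can be emitted by branching in a fixed order.
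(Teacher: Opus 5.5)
Your top-level reduction is the paper's: translate the set interpretation into a yield-Hennie machine (\Cref{thm:msosi-to-hennie}), then use regularity reflection of yield-Hennie machines from \cite{DTP26}. \textbf{The gap is that this translation is the entire technical content of the theorem, and you leave it as a hope.} Your key step is that, at idempotent nodes of a factorisation forest, the order is \enquote{lexicographic-like up to finitely many type-determined permutations}. You give no proof of this and no reason to expect it; the paper deliberately avoids factorisation theorems here. At an idempotent node with unboundedly many factors, $X<Y$ is an arbitrary regular condition on the sequence of joint types of $(u_i,X_i,Y_i)$. Note that what matters is the joint type of $(w,X,Y)$, not the separate types of $(w,X)$ and $(w,Y)$: two singletons deep inside a long unary word have equal types yet are ordered by position. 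Nothing in your sketch shows that such an order can be emitted in yield order while crossing each cut only boundedly often.

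That boundedness is exactly what the paper's notion of simplicity provides. Nodes are \emph{intervals of configurations}, not prefixes of an annotation guessed left to right; you rightly note the latter cannot respect a general MSO order, but \enquote{recurse on segments} does not give a concrete replacement tree. An interval $I$ is $d$-simple at a split $s$ if, for some witness $A_0$, every subinterval containing no configuration $s$-equivalent to $A_0$ is $(d-1)$-simple at $\bar s$. \Cref{lem:all-intervals-are-d-simple} bounds simplicity uniformly, via bad chains, Ramsey and the $3\times 3$ grid argument, which is where totality of $\phi_<$ is used. \Cref{lem:children-funnels} then splits an interval into boundedly many children, using a bounded alternating chain of witnesses, so that revisiting a cut strictly lowers simplicity. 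This is what yields the bounded-visit property.

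Two further obstacles are not addressed at all.
\begin{itemize}
\item \emph{Locally updatable encoding.} In a yield-Hennie machine a child can change the tape only at the head position, so the current block of output must be encoded in a form that can be updated locally. Writing an interval as two configurations $[A,B]$ does not allow this. The paper needs definable bases (\Cref{lem:existence-of-a-basis}): bounded-congestion covers that preserve simplicity. It also needs funnels encoded as labelled tilings, from which the current interval is recovered by guessing one configuration and section indices (\Cref{lem:definability-funnels}).
\item \emph{Cost of lookaround.} \enquote{Determine by a regular lookaround} at every internal node costs a full pass per step along a branch, which breaks bounded visit. This cannot be precomputed once, because the needed information depends on the tape contents, which change along the branch. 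The paper instead keeps back-types on the tape and the complementary type in the state, updating both incrementally.
\end{itemize}
Without a proven substitute for bounded simplicity, and without these encoding devices, the proposal does not yet prove the theorem.
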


In fact, Filiot \etal have shown~\cite[Proposition~14]{Lex} that this has the following consequence, conjectured by Bárány~\cite[Chapter~5]{BaranyPhD} (see also~\cite[Section 9]{Barany08RAIRO}) nearly two decades ago:

\begin{corollary}\label{cor:automatic-words}
  Every automatic $\omega$-word has a decidable MSO theory.
\end{corollary}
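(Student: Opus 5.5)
The corollary follows by combining \cref{thm:reg-refl} with the reduction of Filiot, Lhote and Reynier~\cite[Proposition~14]{Lex}. The plan is to encode the MSO theory of an automatic $\omega$-word as a regularity question about the inverse image of an MSO set interpretation, and then let \cref{thm:reg-refl} do the work.

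\textbf{Step 1: present the word as an interpretation.} Fix an automatic $\omega$-word $w$, given by automatic presentations of its domain, its order and its letter predicates. Its $n$-th letter is a first-order definable property of the $n$-th element of the domain. By a standard re-encoding (convolution of the representatives, ordered length-lexicographically and then by the automatic order), each finite prefix of the structure can be produced uniformly. Concretely, I expect to obtain an MSO set interpretation $f$ with the following property: on input $a^n$, or a similar unary input, $f$ outputs a finite word $u_n$. The words $u_n$ are prefixes of $w$ that grow unboundedly with $n$. The use of set variables is essential here, because the number of representatives of length at most $n$ is exponential in $n$.

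\textbf{Step 2: reduce the MSO sentence to finite prefixes.} Given an MSO sentence $\varphi$ on $\omega$-words, I would use the Büchi/Ramsey argument to replace it by a regular language $L$ of finite words. There are two routes. One is to take a deterministic Muller automaton $\mathcal A$ and express acceptance in terms of the set of states visited infinitely often along the run on the prefixes $u_n$. The other is to use the merging/Ramsey factorisation: $w$ satisfies $\varphi$ iff, for a suitable finite monoid $M$ recognising $\varphi$, there are a linked pair $(s,e)$ such that $w$ factors as $s$ followed by infinitely many $e$-blocks.

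\textbf{Step 3: apply \cref{thm:reg-refl}.} The inverse images $f^{-1}(L)$ for the relevant regular languages $L$ (for instance ``the run of $\mathcal A$ on $u_n$ ends in state $q$'', or ``$u_n$ has monoid image $m$'') are effectively regular unary languages. They are therefore ultimately periodic and computable. From them one can compute which monoid values or states occur for infinitely many $n$.

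\textbf{Main obstacle.} The hard part is turning this information about infinitely many prefix values into the truth of $\varphi$ on $w$. Knowing the set of states at the ends of infinitely many prefixes $u_n$ is not the same as knowing the set of states visited infinitely often. To handle this I would use a richer output: let $f$ output $u_n$ marked with positions, so that queries such as ``between the prefix of length $|u_n|$ and that of length $|u_{n+p}|$, the run visits exactly the set $S$ of states'' become regular properties of a single output word. Periodicity of the regular inverse images then yields an ultimately periodic description of the run, and hence the Muller condition is decidable. This is exactly the content of~\cite[Proposition~14]{Lex}, which I would invoke rather than reprove.
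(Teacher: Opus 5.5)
Your proposal is correct and follows the paper's argument: the paper proves the corollary simply by combining \Cref{thm:reg-refl} with the reduction of Filiot, Lhote and Reynier~\cite[Proposition~14]{Lex}, and this is exactly what you end up invoking. Your Steps~1--3 are only an informal unpacking of that reduction. For instance, obtaining genuine prefixes $u_n$ from short representatives needs the standard fact that, in an automatic $\omega$-order, the predecessors of $x$ have representatives of length at most $|x|+c$. Since you cite \cite[Proposition~14]{Lex} rather than reprove it, nothing is missing.
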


That said, this paper is not about automatic structures. Rather, we focus on the class of string-to-string functions defined by "MSO set interpretations". We call them the \textbf{"expregular functions"}, and we believe that they capture a canonical notion of \enquote{exponential finite-state computation} --- in the same way that the \emph{"polyregular functions"} (surveyed in~\cite{PolyregSurvey}) are considered the finite-state counterpart of the polynomial-time computable functions.
In this work, we provide several arguments that support this intuition, and in particular several equivalent machine models that characterise the class.

\subparagraph*{Context: (poly)regular functions.}

The informal concept of \enquote{finite-state computation} is to generalise the \emph{regular languages} to settings beyond string-to-boolean functions. \emph{Transducer theory} studies such generalisations for functions that may output strings, or even trees. There is a wide variety of transducer models, \ie automata with output; while they are not all equivalent, many of them cluster into a few function classes that can then be considered as \enquote{robust} and \enquote{canonical}.

\AP In the late 1990s, Engelfriet and Hoogeboom~\cite{EngelfrietHoogeboom} proved that \emph{two-way transducers} and \emph{string-to-string MSO transductions} define the same class of functions, which they called the \enquote{regular string transductions}. Many other characterisations were later found, see \eg \cite[\S1]{BojanczykN23} for a non-exhaustive enumeration. By the mid-2010s, the name ""regular function"" had become standard, and the properties of these functions had been thoroughly investigated, \cf \cite{siglog,MuschollPuppis} --- though a few major open problems remained, one of which was settled recently~\cite{theoretics:13747}. Regular functions have \emph{linear growth}: the output size is bounded by a linear function of the input size. Conversely and strikingly, for some superlinear string transducer models, if a function that they compute happens to have linear growth, then it must be regular~\cite{LinearCompositionTWT,StructPoly}.

\AP The \emph{$k$-pebble transducers}~\cite{Pebble,PebbleString,PebbleComposition} extend two-way transducers with up to $O(n^k)$ growth. 
Starting from the late 2010s, Bojańczyk promoted the functions computed by pebble transducers as a canonical notion of polynomial finite-state computation, while introducing the term ""polyregular functions"" and proving a list of several equivalent characterisations~\cite{polyregular,Bojanczyk23}. To this list, Bojańczyk, Kiefer and Lhote~\cite{msoInterpretations} soon added \emph{MSO interpretations}, a formalism that sits in-between MSO transductions and "MSO set interpretations".

This can be summarised as follows:
\begin{center}
\begin{tabular}{r|l|l}
  transductions & $\text{output positions} \subseteq \text{input positions} \times \text{finite set}$&regular\\
  \hline
  interpretations & $\text{output positions} \subseteq \text{finite set} \to \text{input positions}$ & polyregular\\
  \hline
  set interpretations & $\text{output positions} \subseteq \text{input positions} \to \text{finite set}$ & expregular
\end{tabular}
\end{center}
Polyregular functions have been an active topic of study in the past few years, with investigations of interesting subclasses~\cite{NguyenNP21,gaetanPhD} as well as applications to \eg linguistics~\cite{Reduplication} and string compression~\cite{JordonPhD}. It is informally conjectured (\cf\eg\cite[\S1.4.6]{titoPhD}) that for \enquote{reasonable} string transducer models, polynomial growth implies polyregular.
% \nathan{Par qui?}
% \tito{Par moi lol (hélas je pense pas que Mikołaj ou Thomas l'aient écrit quelque part)}

\subparagraph*{Exponential growth.}

Functions with super-polynomial growth rates have been part of transducer theory for half a century. Very simple devices, such as variants of L-systems or top-down tree transducers (see \eg\cite{ERS}), already exhibit exponential growth. However, beyond polynomial growth, the story has not been as satisfying as for (poly)regular functions: there is a proliferation of non-equivalent transducer models. HDT0L-systems are expressively equivalent to several other formalisms~\cite{FerteMarinSenizergues,CopyfulSST,gaetanPhD}, but they are incomparable with polyregular functions~\cite[\S8]{NguyenNP21}. As a common generalisation of HDT0L-systems and polyregular functions, Filiot \etal \cite{Lex} recently introduced the lexicographic string transductions, defined by an extension of pebble transducers or equivalently by a restriction of "MSO set interpretations"; they suspect that this restriction strictly decreases the expressive power~\cite[\S7]{Lex}. 

"MSO set interpretations" thus emerge as a candidate for the largest class of finite-state functions of exponential growth… provided that they are, indeed, \enquote{finite-state} in some sense. In the context of transducer theory, "regularity reflection" is a usual requirement, and this is why we prove \Cref{thm:reg-refl}.

\subparagraph*{Contribution: two new machine models.}

\emph{Hennie machines} are one of Engelfriet and Hoogeboom's equivalent characterisations of regular functions~\cite[\S7]{EngelfrietHoogeboom}. As language recognisers (studied by Hennie~\cite[\S{}II.C]{Hennie65}), they are one-tape Turing machines subject to a bounded-visit restriction\footnote{Hennie's main result is that linear-time one-tape Turing machines have the bounded-visit property.}. As string transducers, they produce their output from left to right, just like two-way and pebble transducers.
Our two machine models for "expregular functions" are two orthogonal %\footnote{One could consider \enquote{yield-Ariadne-transducers} which would have doubly exponential growth.}
 extensions of Hennie machines:
\begin{itemize}
  \item \reintro{yield-Hennie machines} allow the output to be produced in a \emph{branching} fashion: they can spawn parallel subprocesses that are in charge of producing different factors of the output. The name comes from the yield operation~\cite[\S2]{ERS} that reads the string formed by the leaves of a tree --- in this case, the tree of forking processes.
  In a recent article inspired by this work, Dartois, Nguy\~{\^{e}}n, and Peyrat studied the tree-to-tree version of the same model, for which they proved a translation towards higher-order transducers, establishing that they are regularity reflecting~\cite[Lemma 7.7 and Corollary 8.1]{DTP26}.
  \item \reintro{Ariadne transducers}, in contrast, produce their output sequentially but relax the bounded-visit condition by adding some structure to the memory.
  They easily simulate other transducer models with sequential output production, such as pebble transducers (possibly extended with invisible pebbles~\cite{InvisiblePebbles}) or (nested) marble transducers~\cite{gaetanPhD,Lex}. 
\end{itemize}

\subparagraph*{Contribution: from MSO set interpretations to yield-Hennie machines.}

The main technical contribution of this paper is the compilation from "MSO set interpretations" to "yield-Hennie machines".
Our intricate proof is based on a subtle understanding of MSO definable orders; the techniques we use depart from those used in previous work on polyregular functions, which are based on factorisation theorems in adequate semigroups~\cite{msoInterpretations}.
A high-level overview of our proof is given in Section~\ref{sec:equivalence} and a more detailed one in Section~\ref{sec:overview}.

Combined with the aforementioned result of~\cite{DTP26}, this proves that "MSO set interpretations" are "regularity reflecting" (Theorem~\ref{thm:reg-refl}), and in turns implies decidability of the MSO theory of automatic $\omega$-words (Corollary~\ref{cor:automatic-words}) thanks to~\cite[Proposition~14]{Lex}.

\subparagraph*{Contribution: equivalence of the three models.}

The translation from "yield-Hennie machines" to "Ariadne transducers" is relatively straightforward.
To obtain the remaining translation, from "Ariadne transducers" back to "MSO set interpretations", we prove that "Ariadne automata" (the automaton model underlying "Ariadne transducers") recognise regular languages.
To achieve this, we provide a translation from "Ariadne automata" to "alternating Hennie automata", which are equivalent to one-tape linear time alternating Turing machines, and are known to recognise regular languages thanks to~\cite[Corollary 8.1]{DTP26}.

\subparagraph*{Closure properties.}

Our final argument for the robustness of "expregular functions" is their good composition properties:

\begin{theorem}\label{thm:closure-composition}
  The class of "expregular functions" is closed under:
  \begin{enumerate}
    \item precomposition by "regular functions";
    \item postcomposition by "polyregular functions".
  \end{enumerate}
\end{theorem}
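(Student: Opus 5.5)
We prove the two items separately. In both cases we use the logical model, "MSO set interpretations", and the fact that string-to-string MSO formulas can be composed and substituted syntactically.

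\emph{Item 1 (precomposition by "regular functions").} Take a "regular function" $f$ given as an MSO transduction, following Engelfriet and Hoogeboom. Each position of $f(u)$ is a pair $(x,c)$, where $x$ is a position of $u$ and $c$ is a copy index in a finite set $C$. The domain formula, the order formula and the letter formulas of $f$ are MSO formulas over $u$.

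Let $g$ be an "MSO set interpretation" whose output positions are tuples of sets $\bar X$ of positions of its input $w = f(u)$. A set $X$ of positions of $f(u)$ is determined by the tuple $(X_c)_{c\in C}$, where $X_c$ collects the positions $x$ of $u$ with $(x,c)\in X \cap f(u)$.

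So we replace every set variable of $g$ by a $|C|$-tuple of set variables over $u$. We also replace every first-order variable by a first-order variable together with a finite disjunction over copies. Each atomic formula of $g$ is then rewritten using the formulas of $f$. This is the standard backwards-translation argument for MSO transductions, lifted from formulas to the defining formulas of $g$.

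Finally, we add a conjunct forcing each $X_c$ to contain only positions that actually belong to copy $c$ of the output of $f$. This keeps the encoding injective. The resulting formulas define an "MSO set interpretation" computing $g\circ f$.

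This step is routine. The only care needed is that the output order of $g$ is a formula on set tuples, so it translates in the same way as any other formula.

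\emph{Item 2 (postcomposition by "polyregular functions").} By the result of Bojańczyk, Kiefer and Lhote, a "polyregular function" $h$ is an MSO interpretation: its output positions are $k$-tuples of input positions, filtered by an MSO formula, ordered by an MSO formula, and labelled by MSO formulas.

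Let $g$ be an "MSO set interpretation" whose outputs are tuples $\bar X$ of sets of input positions. A position of $h(g(u))$ is then a $k$-tuple $(\bar X^1,\dots,\bar X^k)$ of output positions of $g$, which is itself a longer tuple of sets. This is exactly the shape of output positions in an "MSO set interpretation".

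The difficulty is that the formulas of $h$ quantify over positions of $g(u)$ with first-order quantifiers, and over sets of such positions with monadic second-order quantifiers. First-order quantifiers are harmless: a quantifier over positions of $g(u)$ becomes a quantifier over tuples of sets of positions of $u$.

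Set quantifiers over $g(u)$ are the \textbf{main obstacle}. Such a set may have exponential size, so it cannot be encoded by sets of positions of $u$. To get around this, I would first convert $h$ into a \emph{first-order} interpretation, possibly preceded by a regular function. The plan is:
\begin{enumerate}
\item Factor $h$ as $h = h_{\mathrm{FO}} \circ r$, where $r$ is a "regular function" (in fact rational, e.g.\ one that annotates each position with MSO types) and $h_{\mathrm{FO}}$ is an FO interpretation. This uses known normal forms for polyregular functions (for instance, FO interpretations equal aperiodic polyregular functions) together with closure under composition.
\item Show that $r\circ g$ is again an "MSO set interpretation". The labelling produced by $r$ amounts to MSO-types of prefixes and suffixes of $g(u)$. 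Their definability from the input would follow from regularity reflection applied to the label-inverse images, i.e.\ from \Cref{thm:reg-refl} in a relativised form. Alternatively, it would follow from the translation to "yield-Hennie machines", where the regular lookaround can be computed by a composition with a deterministic automaton run along the yield.
\item Compose $h_{\mathrm{FO}}$ with the result syntactically, as in the first-order case above.
\end{enumerate}

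Step 2 is where I expect the real work. I would try first to prove it at the machine level: a "yield-Hennie machine" postcomposed with a rational function remains a "yield-Hennie machine", by threading automaton states through the forking tree in guessed-and-verified form. The equivalence of the three models then brings the result back to "MSO set interpretations".
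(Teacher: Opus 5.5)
Your Item~1 is fine: it is the backward-translation argument behind Colcombet and L\"oding's Proposition~2.4, which is exactly what the paper cites.

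\textbf{Item~2 has a gap.} The paper gets it in one line: Filiot et al.~\cite[Proposition~11]{Lex} show that, for expregular functions, closure under postcomposition by polyregular functions is equivalent to regularity reflection, and regularity reflection is \Cref{thm:reg-refl}. Your plan is in effect an attempt to reprove one direction of that proposition. However, the step you yourself flag as the real work (Step~2) is never carried out, and the route you say you would try first does not work as described. A yield-Hennie machine is deterministic, so a node cannot guess and verify the automaton state at the start of its segment of the yield. To know that state it would need the automaton's behaviour on the yields of the exponentially large subtrees of its earlier siblings. Obtaining that information is precisely the regularity reflection of \cite{DTP26}, not a routine closure property. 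Step~1 is also not justified by what you cite: the equality of FO interpretations and aperiodic polyregular functions says nothing about factoring a non-aperiodic $h$ as $h_{\mathrm{FO}}\circ r$. You would need something like Ramseyan splits or factorisation forests.

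\textbf{The missing idea is short, and it makes Step~1 unnecessary.} Let $g$ be an MSO set interpretation over $F$ with formulas $\phi_{\mathrm{pos}},\phi_<,\phi_\gamma$. Consider the MSO set interpretation $g_{A,B}$ with input alphabet $\Sigma\times F\times F$, where the colourings $A,B$ are read off the second and third components. Its universe is $\{X \mid \phi_{\mathrm{pos}}(X)\wedge\phi_<(A,X)\wedge\phi_<(X,B)\}$, and it keeps the order and letter formulas of $g$. It always outputs a string, since its universe is a subset of a totally ordered set. When $A,B$ are configurations, that string is the infix of $g(u)$ strictly between $A$ and $B$; prefixes and suffixes are handled the same way. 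Apply \Cref{thm:reg-refl} to $g_{A,B}$, then rewrite each letter predicate $(\sigma,f,f')(x)$ as $\sigma(x)\wedge A(x)=f\wedge B(x)=f'$. This shows that the statement \emph{the infix of $g(u)$ between $A$ and $B$ has a given MSO type} is an MSO formula over $u$ with free monadic variables $A,B$.

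Now take any formula $\theta(x_1,\dots,x_m)$ of the MSO interpretation for $h$; it has only first-order free variables. By compositionality, its truth on $g(u)$ is determined by three things: the relative order of the $x_i$, their letters, and the types of the factors they delimit. Each of these is MSO-definable over $u$, via $\phi_<$, $\phi_\gamma$ and the formula above. This eliminates the set quantifiers over $g(u)$ directly, with no detour through FO interpretations. Hence $h\circ g$ is an MSO set interpretation with colours $F^k$, plus a constant tag if $h$ has several components. The same observation would also have settled your Step~2, since the labels of $r$ are properties of prefixes and suffixes of $g(u)$.
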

\begin{proof}
  We get these results for free from the literature:
  \begin{enumerate}
    \item Colcombet and Löding~\cite[Proposition~2.4]{ColcombetL07} have observed that "MSO set interpretations" are closed under precomposition by MSO transductions, for arbitrary relational structures.
    \item Filiot \etal \cite[Proposition~11]{Lex} have shown that this postcomposition property is equivalent to "regularity reflection" for "expregular functions", which is our \Cref{thm:reg-refl}. \qedhere
  \end{enumerate}
  % \nathan{un truc que je trouve dommage c'est que on mentionne pas que les ariadne sont naturellement clos par post-composition séquentielle, mais je sais pas trop où l'intégrer}
  % \tito{je remanie la discussion plus haut (\Cref{thm:equivalent-models}) pour mettre l'accent sur le côté left-to-right output, mais je trouve que la postcomp par séq est soit trivialement inférable de là si on connaît bien, soit source de surcharge cognitive si on connaît pas bien}
\end{proof}
\begin{remark}
  "Expregular functions" are not closed under precomposition by polyregular functions. Here is a counterexample (via the monoid isomorphism $\N \cong \{a\}^*$):
  \[ \underbrace{\bigl(n \mapsto 2^{n^2}\bigr)}_{\mathclap{\text{not expregular}}} = \underbrace{\bigl(n \mapsto 2^n\bigr)}_{\text{expregular}} \circ \underbrace{\bigl(n \mapsto n^2\bigr)}_{\text{polyregular}}  \]
  Indeed, on inputs of size $n$, the output size of an expregular function is at most $2^{O(n)}$ --- this is what we mean by exponential growth. This contrasts with the definition of the complexity class $\mathsf{EXPTIME} = \mathsf{TIME}\bigl(2^{n^{O(1)}}\bigr)$, chosen to make it closed under polynomial-time reductions.
  % \nathan{On pourrait dire Eregular puisque la classe $E=\text{TIME}(2^{O(n)})$, mais ça sonne comme irregular :D}
  % \tito{XD}
\end{remark}

%% file: machine-models.tex
% !TEX root =  main-set-interpretations.tex

\subsection{MSO set interpretations}\label{sec:def-msosi}

A ""$\Sigma$-string"" is a finite set of ""positions"" equipped with total order~$<$ and a map from elements to~$\Sigma$.
""Monadic second order logic"" (MSO) on strings is defined in the usual way (see \eg~\cite[Chapter~3]{Thomas97}).
It is convenient to allow the syntax to operate on colourings of the positions by a given nonempty finite set $F$ (\ie maps from positions to $F$).
This extends the usual notion of set variables\footnote{Set variables correspond to colourings of the positions by the $2$-element set.} to ""$F$-monadic variables"" which we denote $\Xf X F$; this does not affect the expressive power of MSO since "$F$-monadic variables" are easily encoded as tuples of set variables (details are given in Section~\ref{sec:msosi-to-hennie}).

\subparagraph*{MSO set interpretations.}\AP
    A ("string"-to-"string") ""MSO set interpretation"" $\Phi$ from $\Sigma^*$ to $\Gamma^*$(over a finite set $F$), where $\Sigma$ and $\Gamma$ are finite alphabets, is given by MSO$[<,\Sigma]$ "formulas" $\phi_\pos(\Xf X F)$, $\phi_<(\Xf X F \Xf Y F)$ and for each $\gamma \in \Gamma$, a formula $\phi_\gamma(\Xf X F)$, where $\Xf X F$ and $\Xf Y F$ are "$F$-variables".
    For each "$\Sigma$-string" $w$, this defines a $[<,\Gamma]$-structure $\Phi(w)$ whose universe is the set of colourings $\Xf A F$ of the "positions" of $w$ with $F$ such that $w \models \phi_\pos(\Xf A F)$ and where $<$ is interpreted as $\phi_<$ and for each $\gamma \in \Gamma$, the associated unary predicate is interpreted as $\phi_\gamma$.
    We assume that for each "$\Sigma$-string" $w$, $\Phi(w)$ defines a "$\Gamma$-string", which means that $<$ is a total order, and the unary predicates associated to the $\gamma \in \Gamma$ partition the "positions".
   % Then the transduction induced by $\Phi$ assigns $\Phi(w)$ to $w$.

% \nathan{une petite remarque sur le fait qu'on peut checker qu'une MSOSI produit bien des strings?}
% \pierre{pas d'avis fort, mais je suis plutôt contre (je préfère gagner un peu de place)}
% \tito{je me range à l'avis de pierre, la majorité gagne}

Given an "MSO set interpretation" $\Phi$, $F$-colourings of the input string that satisfy $\phi_\pos$ are called ""configurations@@msosi""; these correspond to positions in the output string (ordered by $\phi_<$).
We say that a function is ""expregular"" if it is computed by an "MSO set interpretation".

We stress the fact that $\phi_<$ defines the order over the configurations, and not the successor relation.
While being irrelevant for regular functions (the configurations are essentially input positions and therefore the order can be defined as the transitive closure of the successor relation), this distinction becomes crucial for polyregular functions (see~\cite{msoInterpretations}) as well as expregular functions.
Here are some examples.

\begin{example}[Strict prefixes reversed]\label{ex:reverse-prefixes}
    Consider the (polyregular) function from $\{a,b\}^*$ to itself which lists the reverses of the strict prefixes of a string, in order of increasing length.
    Here is a picture of an MSO set interpretation realising this function:

    \mypic{1}
    Here, the configurations are obtained by colouring the input with three colors: black, red and gray, so that there is a unique position $x$ coloured black and a unique position $y$ coloured red such that $y < x$.
    These are ordered according to the lexicographic order over $(x,y)$.
    The output of a configuration is the letter of the input at position $y$.
    
    % Let us give an "MSO set interpretation" $\Phi$ defining this function
    % \footnote{We use syntactic sugar such as $\exists x<y\in X \phi$ instead of $\exists x\exists y\,  (x<y)\wedge X(x)=1\wedge X(y)=1\wedge \phi$.}.
    % Let $F=\set{0,1}$.
    % \begin{itemize}
    %     \item $\phi_\pos(X)=\exists x_1<x_2\in X\, \forall y\in X\, (y=x_1\vee y=x_2)$ %\quad: $X$ contains precisely $2$ positions
        
    %     \item $\phi_<(XY)=\exists x_1<x_2\in X\, \exists y_1<y_2\in Y\, (y_1<y_2)\vee ((y_1=y_2)\wedge x_1>x_2)$.
        
    %     \item $\phi_\sigma(X)=\exists x_1<x_2\in X\, \sigma(x_1)$, for $\sigma\in \Sigma$.

    % \end{itemize}

    % Let us briefly describe how this realizes the $\rprefix$ function.
    % Given a $\Sigma$-string $w$, the positions of $\Phi(w)$ will be interpreted by subsets of positions of $w$ of size $2$.
    % Such a set $\set{i_1,i_2}$ with $i_1<i_2$ represents the $i_1$st letter of $w$ within the prefix stopping at $i_2$.
    % Note that the prefixes are ordered in  incresing length, but positions within a prefix are in reverse order.     
\end{example}

\begin{example}[List of subwords]\label{ex:list-of-subwords}
    Consider the function from $\{a,b,c\}^*$ to itself that lists all subwords of a given string, in lexicographic order of their support.
    Here is a picture:

    \mypic{2}
    In this case, configurations are the colourings with one red position, listed lexicographically with gray < red < black.
    The output corresponds to the input letter from the red position.

    % Let $\Sigma=\set{a,b}$. We define the function $\subwords:\Sigma^*\rightarrow \Sigma^*$, which lists, with multiplicity, the (scattered) subwords of a given string.
    %     $$\subwords(aba)=a\, b\, ba\, a\, aa\, ab\, aba$$

    % Let us give an "MSO set interpretation" $\Phi$ defining this function.
    % Let $F=\set{0,1,2}$.
    % \begin{itemize}
    %     \item $\phi_\pos(X)=\exists x X(x)=2 \wedge \forall y (X(y)=2)\rightarrow y=x$.
        
    %     \item 
    %     $\begin{array}{rl}
    %         \phi_<(XY)= \ \  \exists x X(x)=0\wedge Y(x)\neq 0\wedge \forall y<x X(x)=0 \leftrightarrow Y(x)=0\\
    %          \vee\  \forall x X(x)=0\leftrightarrow Y(x)=0\wedge \exists x<y\, X(x)=2\wedge Y(y)=2
    %     \end{array}$
        
    %     \item $\phi_\sigma(X)=\exists x X(x)=2\wedge \sigma(x)$, for $\sigma\in \Sigma$.

    % \end{itemize}

    % Let us briefly describe how this realizes the $\subwords$ function.
    % Given a $\Sigma$-string $w$, the positions of $\Phi(w)$ are intepreted by $F$-variables where exactly one position has label $2$.
    % The non-zero labels coincide with the subword being considered and the $2$ labelled position corresponds with the position within this subword.
    % The order between subwords is given by the lexicographic order between their support.
    % Let us given below this order for a string of length $3$:

    % $$002<020<021<012<200<201<102<210<120<211<121<112$$

    % $$\sm{002}{aba}<\sm{020}{aba}<\sm{021}{aba}<\sm{012}{aba}<\sm{200}{aba}<\sm{201}{aba}<\sm{102}{aba}<\sm{210}{aba}<\sm{120}{aba}<\sm{211}{aba}<\sm{121}{aba}<\sm{112}{aba}$$
\end{example}

\begin{example}[Distribute]\label{ex:distribute}
    Consider the function from $\{a,b,c,d,e,f,\sharp\}^*$ to itself which distributes $\cdot$ over $\sharp$, for instance $ab \sharp cd$ is mapped to $ac \sharp ad \sharp bc \sharp bd$.
    This function is expregular, as witnessed by the following MSO set interpretation using four colours:

    \mypic{3}

    % A product of sums $(a_1+b_1)(a_2+b_2)$ can be distributed naturally as a sum of products: $(a_1a_2+a_1b_2+b_1a_2+b_1b_2)$.
    % To capture this idea we define the function $\distr:\Sigma^*\rightarrow \Sigma^*$, with $\Sigma=\set{a,b,\sharp}$.
    %     $$\distr(ab\sharp bba\sharp a)=aba\,  aba\, aaa\, bba\, bba\, baa$$

    % Let us give informally a description of the MSO interpretation for this function.
    % As in the previous example, the set of positions of $\Phi(w)$ is given by functions $\pos(w) \rightarrow \set{0,1,2}$. with a single $2$ value.
    % The function should have exactly one non-zero value between each $\sharp$, and all the sharps should have value $0$.
    % The order is defined just as in the $\subwords$ example.
\end{example}

\begin{example}
    Consider the function from $\{a,b\}^*$ to $\{0,1,\sharp\}^*$ which, on input $w$, counts in binary with $|w|$ bits, with the bits whose positions are labelled by $a$ in $w$ stronger than those labelled by $b$.
    For instance, $aba$ is mapped to $000 \sharp 010 \sharp 001 \sharp 011 \sharp 100 \sharp 110 \sharp 101 \sharp 111$.
    This function is easily seen to be expregular.
\end{example}

\begin{example}
    Consider an expregular function $f:(A \times B)^* \to \Gamma^*$, where $B$ is totally ordered.
    Define the function $g : A \times \Gamma^*$ as follows: on input $w_A \in A^*$, list all $f(w_A \otimes w_B)$, where $w_B$ ranges over $B$-strings with the same size as $w_A$ in lexicographic order\footnote{One could also replace the lexicographic order with any MSO definable order over $B$-strings.}, and $\otimes : A^* \times B^* \to (A\times B)^*$ denotes the interleaving operation.
    Then $g$ is expregular.
\end{example}

\subsection{Yield-Hennie machines}\label{sec:def-yield-hennie}
A "yield-Hennie machine" is a one-tape Turing machine endowed with a {branching} mechanism to spawn new processes that are executed in parallel.
It is restricted by a {bounded visit} property meaning that within each such process, each cell of the input tape can only be visited a bounded (\ie~independent of the input) number of times.
Each process eventually outputs a letter, and the overall output of the computation is obtained by concatenating these letters; this corresponds to performing the yield of the tree of processes.
\smallskip

\AP
Let $\triangles ,\notin \Sigma$ be two distinct symbols that we call ""end markers"".
A ""$\Sigma$-string with end markers"" is a "$(\Sigma\cup\{\triangles ,\})$-string" of the form $\triangles w$, where $w$ is a "$\Sigma$-string". 
A ""pointed string"" is a string together with a highlighted position $i$ represented by an underline, \eg $\underline \triangleright ab \triangleleft$ or $ a\underline a b a$.

\subparagraph*{Yield-Hennie machines.}\AP
A string-to-string ""yield-Hennie machine"" $\Hh$ from $\Sigma^*$ to $\Gamma^*$ is a branching $1$-tape bounded visit Turing machine.
It consists of:
    \begin{itemize}
        \itemAP a finite set of ""internal states@@hennie"" $\intro*\stateshennie$ with an ""initial state@@hennie"" $q_0 \in Q$;
        \itemAP a finite ""tape alphabet@@hennie"" $\intro*\tapehennie$ which contains $\Sigma \cup \{\triangleright, \triangleleft\}$; and
        \itemAP a ""transition function@@hennie"" $\intro*\transhennie: \stateshennie \times \tapehennie \to (\Gamma \cup \stateshennie \times \tapehennie \times \{\la,\stay,\ra\})^*$,
    \end{itemize}
   such that $\transhennie(\_,\triangleright) \in (\Gamma \cup \stateshennie \times \{\triangleright\} \times \{\stay,\ra\})^*$ and $\transhennie(\_,\triangleleft) \in (\Gamma \cup \stateshennie \times \{\triangleleft\} \times \{\la,\stay\})^*$.

\AP To every input string $w \in \Sigma^*$, we associate a labelled ordered tree called the ""run@@hennie"" over $w$, which is defined top-down as follows:
    \begin{itemize}
        \item the root of the tree is labelled by $(q_0,\underline \triangleright w \triangleleft)$;
        \item for all nodes  labelled $(q, u)$ where $q$ is a "state@@hennie" and $u$ a "pointed $\tapehennie$-string" whose "pointed" letter is $\theta$ and such that $\transhennie(q,\theta) = \delta_1 \dots \delta_k$, the node has exactly $k$ children, and for all~$i\in\{1,\dots,k\}$, if $\delta_i=\gamma$, the $i$th child is labelled~$\gamma$, and if $\delta_i=(q'_i,\theta'_i, \dir_i) \in \stateshennie \times \tapehennie \times \{\la, \stay, \ra\}$, then the $i$'th child is labelled $(q'_i, u'_i)$ where $u'_i$ is the obtained from $u$ by replacing the "pointed" letter by $\theta'_i$ and moving the point according to $\dir_i$.
%        \item nodes labelled by $\Gamma$ are leaves.
    \end{itemize}
    Note that in the "run@@hennie" over $w$, all "pointed" strings have the same size as $\triangles w$, and therefore we may talk of the "position" of the mark in a uniform manner.
 \AP   We assume the ""bounded visit property@@hennie"": there exists some integer $k$ such that for every input string $w$, on every branch of the "run@@hennie" over $w$ and for every position $i$, there are at most $k$ labels where the pointer is at position $i$.
    This implies that the depth of the "run@@hennie" over $w$ is bounded by $k(|w|+2)$ hence the "run@@hennie" has $2^{O(|w|)}$ nodes.
    \AP The semantics of $\Hh$ is the function $\tau_\Hh: \Sigma^* \to \Gamma^*$ which assigns to $w$ the ""yield"" of its "run@@hennie", \ie the string obtained by concatenating all labels of leaves in $\Gamma$ along a traversal of the leaves in the natural order.

Here are some examples.

\begin{example}[Reverse prefixes]
    Let us describe a "yield-Hennie machine" realising the function $\rprefix$ from \Cref{ex:reverse-prefixes}.
    Here is the "run@@hennie" on input $aabb$ of output $aaabaa$:
    \mypic{5}
    There are two states: nodes in the first row are in state $q_0$ and the other nodes are in state~$q_1$.
    The tape alphabet is $\tapehennie=\Sigma \cup \{\triangles,\}$.
    The transitions are easily inferred from the picture.
\end{example}

\begin{example}[List of subwords]
    Here is a "run@@hennie" illustrating a "yield-Hennie machine" realising the function from \Cref{ex:list-of-subwords}; the 8 leftmost nodes have state $q_0$ and the rest have state $q_1$:
    \mypic{6}
    Here, the tape alphabet is $\{a,b,c\} \times \{\text{black},\text{orange},\text{blue}\}$ (with $\Sigma$ and $\Sigma \times \{\text{black}\}$ identified).
    % To realize this function a  yield-Hennie machine can behave in the following way:
    % at every position it spawns two threads, the first labels the current position by a $0$ and the second by a $1$. The two threads move to the right and repeat, spawning two threads each \etc.

    % When a thread reaches the end of the string, it rewinds to the begining and copies the positions labelled by $1$.
\end{example}

\begin{example}[Distribute]
    The function from Example~\ref{ex:distribute} can be computed by a "yield-Hennie machine" similar to the one above: first we read the word from left to write while spawning new threads which colour the node adequately, then each thread prints its output.
    Here is the beginning of a "run@@hennie"; we omit the information relative to the states:
    \mypic{7}
\end{example}

\subsection{Ariadne transducers}\label{sec:def-ariadne}
An "Ariadne transducer" is a sequential model of computation which extends virtually all pebble transducer variants that have been introduced, such as invisible pebbles, marbles, nested marbles \etc.
A configuration of an "Ariadne transducer" is a "stack" which basically records the whole history of a bounded-visit computation. This configuration can be updated in two ways: by pushing, \ie extending the computation one more step, or by popping, \ie going back in the computation --- together with an inflationary update of the top of the "stack" to avoid cyclic behaviour. Crucially an "Ariadne transducer" has the power to access not just the top of its stack but the whole substack of the history of the current input position.

\subparagraph*{Ariadne transducers.}

\AP An ""Ariadne transducer"" $\intro*\aria$ from $\Sigma^*$ to $\Gamma^*$ consists of:
\begin{itemize}
  \itemAP a partially ordered set of ""states@@ariadne"" $\intro*\statesariadne$, with an ""initial state@@ariadne"" $\init\in \statesariadne$;
  \itemAP a ""local stack bound"" $k\in \N$ --- we write $\statesariadne^{\leqslant k} = \{u \in \statesariadne^{*} \mid |u| \leqslant k\}$;
  \itemAP a partial ""transition function@@ariadne"" of the following type: 
  	\[
		\intro*\transariadne\colon \quad (\Sigma \cup \{\triangleright, \triangleleft\}){\times} \statesariadne^{\leqslant k}\quad  \rightharpoonup\quad  \Gamma^* {\times} (\{\intro*\push_\ra,\reintro*\push_\stay,\reintro*\push_\la\}{\times} \statesariadne \cup \set{\intro*\pop}{\times}\update(\statesariadne)),
	\]
  where $\intro*\update(\statesariadne)$ is the set of partial functions $f:\statesariadne\rightharpoonup \statesariadne$ which are ""inflationary"" (\ie $f(q)>q$ if it is defined).
\end{itemize}

\AP Consider an input $\Sigma$-string $w$.
A ""stack"" of $\aria$ over $w$ is a sequence $s\in (\statesariadne{\times} \set{0,\,\ldots,\, |w|+1})^*$.
\AP 
Given a non-empty "stack" $s$, we define the ""top"" and the ""untop"" of $s$ with $\intro*\ttop(s)=(p,i)$ such that $s=\intro*\untop(s) \cdot (p,i)$.
We say that $p$ is the ""top state@@ariadne"" of $s$ and $i$ is its ""top position@@ariadne"".
\AP Given a position $i\in\set{0,\ldots,|w|{+}1}$, we denote by $\intro*\locview s i\in \statesariadne^*$ the ""local view"" of $s$ at $i$, which is the sequence of top states of prefixes of $s$ with "top position@@ariadne" $i$ (in increasing prefix order).

\AP 
The ""initial stack"" over $w$ is $(\init,0)$.
Let $s$ be a "stack" over $w$ with "top position@@ariadne" $i$, let $\sigma$ be the $i$-th letter of $\triangles w$ and let $\locview s i=p_1\ldots p_\ell$ be the corresponding local view. Suppose that $(u,\action)=\transariadne(\sigma,p_1\ldots p_\ell)$ is defined, which implies $\ell\leqslant k$.
We define the ""successor stack"" $s'$ of $s$ over $w$ as follows:
\begin{itemize}
    \item if $\action=(\push_\ra,q)$ and $i+1\leq |w|+1$, then $s'=s\cdot(q,i+1)$;
    \item if $\action=(\push_\stay,q)$ then $s'=s\cdot(q,i)$;
    \item if $\action=(\push_\la,q)$ and $i-1\geq 0$, then $s'=s\cdot(q,i-1)$;
    \item if $\action=(\pop,f)$ and $\untop(s) = r\cdot(p,j) \neq \varepsilon$, then $s'=r\cdot(f(p),j)$;
    \item otherwise, $s'$ is undefined.
\end{itemize}
Whenever $s'$ is defined, the word $u$ is called the ""production"" of the "stack@@ariadne" $s$ over $w$.

\AP 
The ""run@@ariadne"" of $\aria$ over $w$ is the sequence of "stacks" $s_1,\dots,s_{n}$ such that $s_1$ is the "initial stack",  $s_{m+1}$ is the "successor stack" of $s_m$ for all $m<n$, and $s_n$ has no "successor@@stack".
"Stacks@@ariadne" are preordered lexicographically (by using the partial order over $\statesariadne$ and ignoring the positions).
Note that the transition steps imply that $s'>s$ whenever the successor stack $s'$ of $s$ is defined, and hence a "run@@ariadne" cannot contain the same "stack@@ariadne" twice. 
Note furthermore that "runs@@ariadne" are of size at most $|\statesariadne|^{|w|k}$ which is exponential in $|w|$.
\AP The ""production@@ariadne"" of $\aria$ over $w$ is the concatenation $u_1\cdots u_n$ of "productions@@ariadne" of $s_1,\ldots, s_n$.
The ""semantics@@ariadne"" of $\aria$ is the function 
$\tau_\aria:\Sigma^*\rightarrow \Gamma^*$ which maps a word to the "production@@ariadne" of $\aria$ over it.

Here are some examples.

\begin{example}[List of subwords]
    Here is a "run@@ariadne" of an "Ariadne transducer" with three "states@@ariadne" $0<1<2$, on input $ab$:
    \mypicsmaller{9}
    The transducer works as follows: first we have an initialisation phase (orange) where the first row is filled with $0$'s by using $\push_\ra$.
    Then we alternate between
    \begin{itemize}
        \item in blue: filling the second row with $0$'s by using $\push_\la$;
        \item in green: removing the second row using $\pop$ and producing an output if two $1$'s align;
        \item in black: incrementing the first row (seen as a binary number), by replacing $1$'s with $2$'s using $\pop$, until a zero is reached, replacing that $0$ with a $1$, and filling the rest with $0$'s.
    \end{itemize}
\end{example}

\begin{example}[Evaluating an MSO formula]
    \label{ex:succinct}
    Consider an MSO sentence $\phi$ over $\Sigma$-strings.
    There is an Ariadne transducer $\aria$ from $\Sigma^*$ to $\{1\}^*$ with $O(|\phi|)$ states which, on input $w$, outputs $1$ if $w \models \phi$ and outputs the empty string otherwise.
    The machine $\aria$ proceeds as follows: for every subformula of $\phi$ of the form $\exists X \psi$, we will use a row of the input, labelled $0$ or $1$, which corresponds to a possible subset $X$.
    Then by successive incrementations of the row (as in the previous example), we may test every possible subset.
    Existential quantification over first-order variables is dealt with similarly, and atomic formulas are easily handled.
    In particular, this implies that Ariadne transducers (viewed as acceptors) are \tower more succinct than finite state automata.
    This example is due to~\cite{Lex}, where a nested marble transducer is given for the same function.
\end{example}

%% file: equivalence-of-models.tex
% !TEX root =  main-set-interpretations.tex

\subparagraph*{From MSO set interpretations to yield-Hennie machines.}

This translation is our main technical contribution which is the object of Section~\ref{sec:msosi-to-hennie}; we only give a rough overview.

Consider an "MSO set interpretation" for which we aim to construct a "yield-Hennie machine", and fix an input word.
A node in the run of the machine is responsible for generating an interval of the output, therefore nodes naturally correspond to intervals $I$ of configurations.

The key challenge is to understand how such an interval $I$ interacts with a cut of the input positions into two parts, which we will call a split $s$.
We define a combinatorial notion of \emph{"simplicity"} of an interval $I$ at a split $s$, which captures how many times the machine should traverse the cut in order to generate the configurations from $I$, and show that the simplicities are bounded (Lemma~\ref{lem:all-intervals-are-d-simple}), which relates to the bounded visit property of the machine.

From there, one needs to understand how this notion evolves for an interval $I$ when shifting from a cut to the next one (\eg one position to the right).
It turns out that this can be translated into a decomposition of $I$ into a bounded number of children intervals (Lemma~\ref{lem:children-funnels}) so that the simplicity decreases when a branch visits the same cut twice.

Although these ideas give strong foundations for obtaining a Hennie machine, there is one last major difficulty: we need an encoding of the interval $I$ as a colouring of the input, so that when moving from $I$ to its children, the encoding only varies at the edge of the cut.
This is solved using definable interval bases (Section~\ref{sec:interval-bases}), thanks to which we represent intervals by guessing one of their configurations plus a bounded amount of information, together with the notion of funnels (Section~\ref{sec:funnels}), which provide the wanted encoding.
A more detailed overview is given in Section~\ref{sec:overview}.

\subparagraph*{From yield-Hennie machines to Ariadne transducers.}
This is the most straightforward translation of the article, detailed in Section~\ref{sec:hennie-to-ariadne}.
Intuitively, the "Ariadne transducer" visits the run of the "yield-Hennie machine" along the tree order and produces $\gamma \in \Gamma$ when a $\gamma$-leaf is reached, where going down and up in the tree respectively correspond to pushing and poping.

\subparagraph*{From Ariadne transducers to MSO set interpretations.}
Consider an "Ariadne transducer" $\aria$.
"Stacks" naturally correspond to colourings of the input string by a finite set.
However, it is a non-trivial task to construct a formula deciding which (encodings of) "stacks" are produced by the run.
This amounts to proving that accessible "stacks" form a regular language.

The fact that "Ariadne automata" recognise only regular languages is shown in two steps: first we show, using~\cite{DTP26} that another model of automaton, which we call \emph{"alternating Hennie automata"} recognise regular languages.
Second we give a translation from "Ariadne automata" to "alternating Hennie automata"; roughly speaking, alternation gives us a handle to model "stack" operations via the following two-player interaction.
When reaching stack $s$:
\begin{itemize}
    \item Exists proposes a sequence of states $q_1 < \dots < q_n$ which correspond to those appended to $s$ throughout the run (the sequence is necessarily increasing since these are updated by "inflationary" $\pop$ operations); and
    \item Forall chooses an index $i$ and challenges the fact that from stack $s q_i$, the run will next update $q_i$ into $q_{i+1}$. 
\end{itemize}
Once we have established that the language of reachable configurations of an Ariadne automaton is regular, it is not difficult to translate $\aria$ to a set interpretation: $\phi_\pos$ is the formula defining the reachable configurations, $\phi_<$ is the lexicographic order, and $\phi_\gamma$ is obtained easily from $\phi_\pos$.

% Before explaining how one translates an Ariadne automaton into an alternating Hennie automaton, let us describe how this automaton model operates.

% String-to-tree Hennie machines are a restriction of tree-to-tree Hennie machines introduced in \cite{TITO}, in which the authors prove that these functions are regularity reflecting.
% An alternating Hennie automaton is formally a one-tape alternating Turing machine, with a bounded-visit restriction.
% The run of an alternating Hennie automaton, similar to runs of yield-Hennie machines, is the tree of forked processes where each node is labelled by a boolean combination of the processes and leaves are labelled by $\set{\texttt{true}, \texttt{false}}$.
% The run is considered accepting if the run evaluates to $\texttt{true}$ as a boolean formula.
% Thus an alternating Hennie automaton can be seen as the inverse image of $T_{\texttt{true}}$ by a string-to-tree Hennie machine, where $T_{\texttt{true}}$ is the language of boolean formulas that evaluate to $\texttt{true}$ --- which is a regular tree language.

%% file: discussion.tex
% !TEX root =  main-set-interpretations.tex

\tito{c'est là qu'on met safe vs unsafe~? citer~\cite{SafeHORS} dans ce cas}
\tito{linear logic, linear HORS, linear higher-order transducers}
\pierre{j'ai laissé les commentaires de Tito si tu as qch à dire là dessus Thomas, sinon on peut enlever}

"Expregular functions" form, we argue, the natural and robust notion of ``exponential finite-state computation'' for string-to-string functions, as they admit several logical as well as machine-based characterisations and all the desirable closure properties, including being regularity reflecting.
Moreover, as explained in~\cite[Proposition III.7 (full version)]{Lex}, they generalise the notion of automatic $\omega$-words: an automatic $\omega$-word can be viewed as the output of an MSO set interpretation over the structure $(\N,<)$, which can in turn be represented by a string-to-string "MSO set interpretation" over a unary input alphabet.

%We believe that our main technical proof can be refined when applied to MSO interpretations. In this case the obtained "yield-Hennie machine" seems to produce a run with limited branching (more precisely bounded Strahler number, say $k$). This limited branching allows us to translate the Hennie machine to a $k$-pebble transducer. This approach may thus provide an alternative proof of the result of \cite{msoInterpretations}.

"Polyregular functions" enjoy many characterisations \cite{PolyregSurvey} and one may ask which ones can be extended to "expregular functions".
An appealing characterisation of "polyregular functions" is as the composition closure of regular\footnote{Polyregular function can actually be decomposed into even simpler functions; see \cite[Theorem 3.2]{PolyregSurvey}.} functions and the $\texttt{square}$ function (with $\texttt{square}: abc\mapsto \underline a bc\, a \underline bc\, a b\underline c$). Whether "expregular functions" can be captured by a simple family of functions in such a way would prove very useful: decomposition results indeed often come in useful as they restring one's attention to a finite set of simple cases.

Natural extensions of this work are to consider richer structures such as $\omega$-words, trees and $\omega$-trees. We believe that most of the machinery developed in this article should apply to "MSO set interpretations" from trees to strings. For instance the key notions of "splits" and "funnels" naturally extend to this setting. Additionally the automata models can be extended to tree-walking variants (with regular look-around for tree-to-string yield-Hennie machines).
In contrast, we believe that a generalisation to the tree-to-tree context will likely require new insights.

An interesting line of work concerns questions regarding the growth of the different models: % given a class of functions $\Cc$ with growth $G$ and a function $f$ from a more general class $\Dd$ such that $f$ happens to have growth $G$, 
%given classes $\Cc \subseteq \Dd$ of functions, is it the case that functions from $\Dd$ with small growth belong to $\Cc$?
In~\cite[Theorem~1.5]{StructPoly}, Gallot, Lhote and Nguy\~{\^{e}}n established that "expregular functions" with polynomial growth are "polyregular".
One can also ask whether compositions of "expregular functions" with exponential growth are expregular, or more generally whether higher-order transductions with exponential growth are expregular (this is already open for higher-order transducers of polynomial growth and polyregular functions).
Other results of this nature are known for some models of tree-to-tree transducers \eg\cite[Theorem 6.4]{StructPoly}; we refer to the discussions therein for more details.
%In the same vein, one could explore the relationship between simplicity of intervals (see Section~\ref{sec:overview}) and growth.

New questions also arise about succinct representation of regular languages.
"Alternating Hennie automata" and "Ariadne automata" are \tower more succinct than finite-state automata since an MSO formula (\eg in prenex form) can easily be translated into both automata models\footnote{Both automata models can be translated to one another with elementary complexity.} (see Example~\ref{ex:succinct}).
Surprisingly, it is not clear whether there is an elementary transformation from "alternating Hennie automata" to "MSO" (the result of~\cite{DTP26} gives us a construction in \tower).
Furthermore, the construction from MSO to "alternating Hennie automata" provides a machine with bounded alternation between universal and existential transitions (mimicking the quantifier alternation of the formula), whereas these machines can have a number of alternations which is linear in the length of the input. This argument seems to indicate that these automata models are actually more succinct than MSO.

%% file: msosi-to-hennie.tex
% !TEX root =  main-set-interpretations.tex

This section proves the following theorem which is our main technical contribution.

\begin{theorem}\label{thm:msosi-to-hennie}
    For every "MSO set interpretation", there is effectively a "yield-Hennie machine" realising the same function.
\end{theorem}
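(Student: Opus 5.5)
The plan is to build the machine so that every node of its run is responsible for an interval $I$ of configurations, that is, an interval of the output string, and spawns children responsible for a bounded partition of $I$ into consecutive subintervals. A leaf corresponds to a single configuration and prints the letter given by $\phi_\gamma$. Throughout, the machine moves along the input and maintains on the tape a colouring encoding the current interval. Because the tape is rewritten as the head moves, this encoding must be local: when the head crosses a position, only the colour at that position may change. Regular look-around, namely MSO types of the prefix and suffix, can be maintained on the tape by standard Hennie techniques (Shelah/Ehrenfeucht composition). So I may assume the machine always knows the MSO types of the relevant coloured prefix and suffix at the head.

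The first step is combinatorial and concerns how an interval $I$ interacts with a split $s$, that is, a cut between positions $i$ and $i+1$. Write configurations as pairs $(u,v)$ of a left part and a right part. By composition, the type of $v$ determines how $u$ compares with other configurations. I would define the \emph{simplicity} of $I$ at $s$ as the number of alternations in which $I$, ordered by $\phi_<$, switches between blocks of configurations that share the same left part or the same right-type. The goal is a uniform bound $d$, independent of $w$, $I$ and $s$; the bound should come from a Ramsey/factorisation argument on the MSO-definable order $\phi_<$, combined with the fact that $\phi_<$ has bounded quantifier rank. The machine's visits to the cut $s$ then correspond to these blocks. The second step is a children lemma. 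When the head shifts from $s$ to the next split $s'$, I want to decompose $I$ into a bounded number of subintervals (funnels) such that the simplicity strictly decreases along any branch that returns to the same split. This decrease gives the bounded-visit property, with $k$ depending on $d$ and $|F|$.

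The third step, which I expect to be the main obstacle, is the encoding of intervals. An arbitrary interval cannot be described by a colouring: it is a pair of configurations, and the endpoints of children differ from those of the parent at arbitrary positions, not just at the head. I would first try to use \emph{definable interval bases}, meaning an MSO-definable family of intervals such that every interval is a bounded union of basis intervals. Each basis interval would be described by one guessed representative configuration together with a bounded amount of type information. Refining this, a \emph{funnel} would be an interval described relative to the current split by a colouring that is fixed on already-committed positions and only type-constrained elsewhere. Passing to children would then commit one more position, which is a local change. The delicate part is to check that the children lemma produces funnels in exactly this form.

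With these ingredients in place, the construction is assembled as follows. The root encodes the whole output. At each node the machine reads the local information, computes by finite lookup from types the bounded list of children funnels in $\phi_<$-order, and spawns them in that order with the updated local colour and move direction. Correctness of the yield follows because the children partition the interval in order. Bounded visit follows from the decrease in simplicity, and effectiveness follows since all tables are computed from the MSO formulas.
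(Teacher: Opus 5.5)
Your architecture (simplicity, children with decreasing simplicity, a definable basis, funnels, and a type-maintaining Hennie simulation) is the right one. However, the two steps that carry the proof are either mis-specified or missing.

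\textbf{Simplicity.} It cannot be a flat count of alternations along $\phi_<$. In the reverse-prefix example, take a cut after position $m$. For every $x>m$, the configurations $(x,y)$ with $y>m$ share their left part, and those with $y\le m$ share their right part. Consecutive such groups cannot be merged, since one has a red position on the right and the other does not. So the count is about $2(n-m)$, which is unbounded. What is bounded is a nesting depth: $I$ is $d$-simple at $s$ if there is a witness $A_0$ such that every subinterval of $I$ containing no configuration equal to $A_0$ on $s$ is $(d-1)$-simple at the opposite split $\bar s$. The number of occurrences of the witness class itself is unbounded. A uniform bound on this depth needs a specific argument, not a generic appeal to Ramsey. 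The paper uses:
\begin{itemize}
\item long ``bad chains'' of nested intervals, each avoiding both the $s$-class and the $\bar s$-class of the previous chosen configuration;
\item Ramsey on the types of pairs on both sides of the cut;
\item a check of the eight lexicographic patterns of the $3\times 3$ grid of mixed configurations.
\end{itemize}

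\textbf{Encoding.} This is the heart of the proof, and your version does not work as described. Sets of configurations ``fixed on committed positions and type-constrained elsewhere'' are in general not intervals. In the same example, the configurations red at position $1$ are scattered through the whole output. Moreover, a Hennie branch revisits a position up to $k$ times, so no position is ever committed once and for all. Describing each basis interval by its own representative does not help either, since representatives of successive intervals differ at arbitrary positions.

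The missing idea is as follows. Let a funnel be the entire root-to-node path $I_0,s_0,\dots,I_n,s_n$ of basis intervals, with output $\bigcap_k I_k$. Write on the tape only the splits and the child indices $j_k$, as a labelled tiling whose height per position is bounded by the simplicity decrease. Recover the output in MSO by guessing a \emph{single} configuration $C$ in the intersection, together with a second tiling of bounded indices $p_k$ such that $I_k$ is the $p_k$-th basis interval at $s_k$ containing $C$. Then check each step against the definable children formulas.

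This requires two properties of the basis. It must be thin: each configuration lies in boundedly many basis intervals at each split. It must also cover every $d$-simple interval by boundedly many $d$-simple basis intervals. Constructing such a definable basis is a substantial lemma that you only name; the paper does it via maximal well-formed intervals, the ``nearby'' equivalence, and long classes covered by their extremal members.

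Finally, children defined this way overlap rather than partition $I$. You therefore also need to print only at the least closed funnel (in address order) for each configuration. You also need to order children by their lower endpoints so that the yield respects $\phi_<$.
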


\subsection{Detailed proof overview}\label{sec:overview}

Consider an "MSO set interpretation" $\Phi$ and fix an input word $w$.

If we follow a branch of the "run@@Hennie" of a "Hennie machine", we see that to each node one can associate several pieces of informations: a position~$i$ in the input, a direction (whether the last move was left or right in the input), and an interval of the output string~$I$: the factor of the output string that is produced by the yield of the subtree rooted at that node. One can also associate a nesting depth~$d$ which is the maximal number of times over all branches from the node that the position will be traversed.
The key idea in the translation from "MSO set interpretations" to "yield-Hennie machines" is to try to derive (an approximation of) this information directly from the "MSO set interpretation".
This leads to the notion of "simplicity".

\subparagraph*{Simplicity.}
\AP 
For~$i$ a position of the input, let $\rig i$ be the set of "positions" $\leqslant i$, and $\lef i$ the set of positions $\geqslant i$.
We call sets of the form $\rig i$ or $\lef i$ "splits"\footnote{Technically, one requires that the complement of a split be a split, and hence that  the full set of positions is not allowed to be a split. Splits are redefined more formally in Section~\ref{sec:preliminaries-for-the-proof}}.
Given a split $s$, we let $\bar s$ denote its complement, which is also a split.
Recall that positions in the output string correspond to configurations (\ie $F$-colourings $A$ of $w$ such that $w \models \phi_\pos(A)$), ordered by $\phi_<$.
Consider an interval $I$ of configurations.

\AP We define what it is for $I$ to be \reintro*"$d$-simple at $s$" (for $d\geqslant 0$) recursively.
The interval $I$ is "$0$-simple at $s$" if all the "configurations@@msosi" $A\in I$ are equal when restricted to~$s$.
For the general case, an interval $I$ is "$d$-simple at $s$" if there is a set $P \subseteq I$
such that (1) all the configurations in $P$ are equal when restricted to~$s$, and (2) all sub-intervals of $I$ that do not intersect $P$ are "$(d-1)$-simple at $\bar s$".
\AP Call ""simplicity of $I$ at~$s$"" the least~$d$ such that $I$ is "$d$-simple at~$s$".

\begin{example}
	Let us consider the "MSO set interpretation" of \Cref{ex:reverse-prefixes}, applied to an input string of length~$7$. Let us fix the "split"~$s=\lef 4$. 
	Consider the following picture (we omit the input and output letters for simplicity):

	\mypic{4}

	The interval $I_1$ is "$0$-simple at $s$" since all its "configurations@@msosi" are equal when restricted to~$s$, and similarly for $I_2,I_3$ and $I_4$.
	The interval $J$ is "$1$-simple at $\bar s$", by taking $K$ to be the highlighted set of positions, since configurations in $K$ are all equal over $\bar s$, and intervals in between are $0$-simple at $s$.
	Finally, we see that the full set of output positions is "$2$-simple at~$s$" and therefore also "$3$-simple at $\overline s$".

	For this simple example, it is easy to extrapolate this reasoning and establish that for every input string $w$, every interval is $3$-simple at every split. 
\end{example}
A first crucial argument in the proof is that the above phenomenon is always true: for all string-to-string "MSO set interpretation", there exists $d_0$ such that for all input strings, all "splits" $s$, and all interval~$I$ of "configurations@@msosi", $I$ is "$d$-simple at $s$" (\Cref{lem:all-intervals-are-d-simple}).
This proof relies on the facts that the "MSO set interpretation" defines a total order (and not only successor), and that when comparing two configurations it is sufficient to consider boundedly many "MSO-types" over $s$ and $\overline s$ ("compositionality").

\subparagraph*{Structure of the run.}
Consider an interval of positions of the output string $I$ together with a "split" $s$, which corresponds to a potential node of the run of the "yield-Hennie machine" that we are trying to construct.
Now that the notion of "simplicity" is established, we should define the children of the node corresponding to $I$ and $s$.
More precisely, we want to decompose $I$ into a union of a bounded number of children intervals $I_1,\dots,I_\ell$, so that each interval $I_j$ comes together with a split $s_j$ which is obtained from $s$ by moving the cut-position either to the left of to the right.

The crucial feature of this decomposition should be that the bounded-visit property is satisfied, \ie each position is seen a bounded number of times on each branch.
In terms of simplicity, this naturally translates to the condition that if $I^0 \to I^1 \to \dots \to I^n$ defines a sequence of intervals such that for some splits $s^0,\dots,s^n$, it holds for every $k<n$ that $(I^{k+1},s^{k+1})$ is a child of $(I^k,s^k)$, and moreover $s^0$ and $s^n$ have the same cut-position, then the simplicity of $I^{n}$ at $s^n$ should be strictly lower than the simplicity of $I^0$ at $s^0$.

Defining such a decomposition requires an understanding of how simplicities evolve when moving between two successive splits (\eg from $\rig i$ to $\rig{i+1}$, or from $\rig i$ to $\lef i$).
The definition of "children@@funnel" corresponds to Lemma~\ref{lem:children-funnels}.
%However, for reasons that we will now explain, these are not just relatively to a pair $(I,s)$ composed of an interval and a split, but the object which we decompose (and which will eventually correspond to a node in the run of the yield-Hennie machine) is a bit more intricate, and we call it a funnel.

\subparagraph*{Difficulty: locally updatable encoding.} In a "yield-Hennie machine", every node contains a pointed string which corresponds to a colouring $C$ of the input together with a position $i$.
Following the above approach, the obvious choice is for $i$ to encode the current split $s$ (together with a bit of information corresponding to its direction that can be added to $C$), and for $C$ to encode the interval $I=[A,B]$, by writing the two configurations $A$ and $B$.
Unfortunately there is a major issue with this approach: when moving from a node to one of its "children@@funnel", the "yield-Hennie machine" can update the current colouring $C$ only at the pointed position~$i$.
However if $C$ is written as a pair of configurations $A,B$, there is no guarantee that its children can be obtained by such local updates of the colourings defining the configurations.
Overcoming this issue requires introducing some additional infrastructure that we describe now.

\subparagraph*{Definable bases and funnels.}
Defining an encoding of $(I,s)$ as a colouring of the input string in a way that can be updated locally when moving from $(I,s)$ to its "children@@funnel" is achieved by combining two ideas.
\begin{itemize}
    \item \emph{Definable "basis of intervals@basis":} we construct families of intervals $B_s$, one for each split $s$, with the properties that
    \begin{enumerate}[(a)]
        \item every interval which is "$d$-simple at $s$" can be covered by a bounded union of "$d$-simple at $s$" intervals from $B_s$;
        \item\label{item:thin} every configuration belongs to a bounded number of intervals from $B_s$; and
        \item given (the natural encodings of) $I$ and $s$, membership of $I$ in $B_s$ is MSO definable.
    \end{enumerate}
    This is the object of Section~\ref{sec:interval-bases}.
    From now on, we fix such a definable "basis" $(B_s)_{s \text{ split}}$.
    \item \emph{"Funnels":} we represent a node corresponding to interval $I$ and split $s$ by a sequence $I_0,s_0,\dots,I_n,s_n$ of intervals and splits, which we call a "funnel", such that\footnote{We also have a condition on the simplicities which will guarantee the bounded visit property, but we omit it in this overview.}
    \begin{itemize}
        \item $I_0$ is the interval of all positions and $s_0=\rig 0$;
        \item the $I_k$'s are basis intervals \ie $I_k \in B_{s_k}$;
        \item the splits $s_0,s_1,\dots,s_n$ are consecutive;
        \item $s_n$ is $s$ and $I$ is the intersection of the $I_k$'s;
        \item for every $k$, $(I_{k+1},s_{k+1})$ is a child of $(\bigcap_{m \leq k} I_k,s_k)$ as defined above.s
    \end{itemize}
    Intuitively, instead of storing $(I,s)$, we remember the full path from the root node to $(I,s)$, except that it is stored using basis intervals, and $I$ is obtained as an intersection of these intervals.
    This is detailed in Section~\ref{sec:funnels}.
\end{itemize}
Of course it is not at all clear how this circumvents the problem highlighted above; for this we should explain how "funnels" are encoded as colourings of the input string.

\subparagraph*{Encoding of funnels.}
To encode a "funnel" $F=I_0,s_0,\dots,I_n,s_n$ we proceed as follows (details in Section~\ref{sec:encoding-funnels}).
Let $j_0,\dots,j_{n-1}$ be such that for every $k$, it holds that $F_{\leq k+1}=I_0,s_0,\dots,I_{k+1},s_{k+1}$ is the $j_{k}$-th "child@@funnel" of $F_{\leq k}$.
Since the splits are consecutive and satisfy the bounded visit property, the sequence $s_0,\dots,s_n$ can be encoded as a colouring of the input using an adequate notion of "tilings" (see Example~\ref{ex:tiling}).
To encode the "funnel" $F$, we integrate the labels $j_0,\dots,j_{n-1}$ (these are bounded integers) in the above tiling.
Note that this encoding is indeed locally updatable: the encoding of the $j$-th child of $F$ is obtained from the encoding of $F$ by adding one step (labelled by $j$) at the current position.

The crucial idea (stated as Lemma~\ref{lem:definability-funnels}) is that given an encoding $T$ of a "funnel" as above, there exist a formula that recovers the output $I=\bigcap_k I_k$ of the "funnel" as follows.
First we guess a configuration $A$ in $I$, and guess a "labelled tiling" (corresponding to the same sequence of splits) with labels $p_0,\dots,p_n$, such that for every $k$, $I_k$ is the $p_k$-th interval from $B_{s_k}$ containing $A$ (here we rely on property~\ref{item:thin} of definable bases).
Then we verify local consistency of these guesses, and recover $I$ as the intersection of the $I_k$'s.

\subparagraph*{Defining the yield-Hennie machine.}
We are now ready for the final construction.
Here the important insight is that in the previous items, the proofs are phrased in terms of definability.
In particular, given an encoding of a "funnel", the encoding of the next "funnel" is definable.
Therefore to define the transitions from a given node (\ie, a "funnel"), it suffices to have access to the "MSO-type" (defined just below) of sufficiently large "quantifier rank" of the current labelling of the tape (which encodes the "funnel").
In a nutshell, this is achieved\footnote{This trick is stolen from~\cite[Lemma~4]{EngelfrietHoogeboom}.} by labelling the tape with one half of the type, and storing the other half in the current state (where the directions of the halves depend on the direction of the current split).
More details in Section~\ref{sec:funnels-to-hennie}.

\subsection{Preliminaries for the proof}\label{sec:preliminaries-for-the-proof}

\subparagraph*{Preliminaries on MSO.}
As explained in Section~\ref{sec:three-formalisms}, we use a slightly modified version of "MSO" which is more convenient for our needs, where for every nonempty finite set $F$, we have a notion of "$F$-variables" $\Xf X F$, which are instantiated as maps from the universe to $F$.
For example, usual set variables correspond to $2$-variables, where $2$ is the two-element set.
We call these ""monadic variables"". \AP
Therefore the syntax of MSO ""formulas"" over the above signature is given by the following grammar:
\[
    \exists x \phi \quad | \quad \existsf F \Xf X F \phi \quad | \quad \phi \vee \phi \quad | \quad \neg \phi \quad | \quad \Xf X F (x) = f \quad | \quad x < y \quad | \quad \sigma(x),
\]
where $x$ denotes a first order variable, $F$ ranges over nonempty finite sets, $\Xf X F$ denotes an "$F$-variable", $f$ ranges over $F$, and $\sigma$ ranges over $\Sigma$.
We always use lower case letters (\eg $x,y,\dots$) to indicate first order variables, and upper case letters with finite sets as exponents (\eg $\Xf X F, \Xf Y G,\dots$) to indicate second order variables.
We use the standard semantics for MSO, see for instance~\cite[Chapter~3]{Thomas97}.
The next easy lemma states that our version of MSO is syntactically equivalent to the usual one.

\begin{lemma}
    For every "formula" $\phi$, there exists a "formula" $\phi'$ which uses only first order variables and $2$-variables, such that $\phi$ and $\phi'$ are equivalent.
\end{lemma}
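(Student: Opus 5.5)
We obtain $\phi'$ by structural induction on $\phi$, replacing each "$F$-variable" with a fixed-length tuple of $2$-variables and encoding the elements of $F$ in binary. For every nonempty finite set $F$ we fix once and for all an injection $\mathrm{enc}_F\colon F\to\{0,1\}^{m_F}$, where $m_F=\max(1,\lceil\log_2|F|\rceil)$. To each "$F$-variable" $\Xf X F$ we associate fresh, pairwise distinct $2$-variables $X_1,\dots,X_{m_F}$ (set variables, with the convention that $X_j(x)=1$ means $x\in X_j$). A map $A\colon U\to F$ then corresponds to the tuple of sets $A_j=\{u\mid \mathrm{enc}_F(A(u))_j=1\}$. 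Conversely, a tuple $(A_1,\dots,A_{m_F})$ comes from such a map exactly when, at every element $u$, the bit vector $(A_1(u),\dots,A_{m_F}(u))$ lies in the image of $\mathrm{enc}_F$.

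The translation $\phi\mapsto\phi'$ goes as follows.
\begin{itemize}
\item Atoms $x<y$ and $\sigma(x)$ are left unchanged.
\item The atom $\Xf X F(x)=f$ becomes $\bigwedge_{j} \pm X_j(x)$, where the sign of the $j$-th conjunct is given by the $j$-th bit of $\mathrm{enc}_F(f)$.
\item Boolean connectives and first-order quantifiers are translated homomorphically.
\item $\existsf F \Xf X F\psi$ becomes $\exists X_1\cdots\exists X_{m_F}\,\bigl(\mathrm{Valid}_F(X_1,\dots,X_{m_F})\wedge\psi'\bigr)$. Here $\mathrm{Valid}_F$ is the formula $\forall x\,\bigvee_{f\in F}\bigwedge_j \pm X_j(x)$, stating that every element's bit vector codes some $f$.
\end{itemize}

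Next we prove by induction on $\phi$ the following invariant. For every structure (string) $w$, every first-order valuation, and every valuation $\nu$ of the monadic variables, $w,\nu\models\phi$ iff $w,\hat\nu\models\phi'$. Here $\hat\nu$ maps each $X_j$ to the $j$-th bit-slice of $\nu(\Xf X F)$.

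The atomic case holds because $\mathrm{enc}_F$ is injective, so the conjunction of signed bits holds exactly when the coded value is $f$. The connective and first-order cases are immediate.

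The only case that needs care is second-order quantification. Slicing is a bijection between maps $U\to F$ and tuples of sets satisfying $\mathrm{Valid}_F$, so existential quantification over $\Xf X F$ corresponds exactly to existential quantification over valid tuples. The guard $\mathrm{Valid}_F$ is essential here: without it, when $|F|$ is not a power of $2$, the quantifier could range over invalid codes. Negation is handled correctly because the guard sits inside each existential rather than being imposed globally.

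For free variables of $\phi$, the equivalence holds in the sense just described: $\phi$ evaluated at $\nu$ agrees with $\phi'$ evaluated at the encoded valuation $\hat\nu$. This is the sense in which the lemma is meant. No step is a real obstacle. The main points to get right are the validity guard under quantifiers and the choice of fresh, distinct names so that bit-slices of different variables do not clash.
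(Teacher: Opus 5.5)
Your proof is correct and takes essentially the same route as the paper: replace each $F$-variable by about $\lceil \log |F| \rceil$ set variables using a binary coding of the elements of $F$, and translate $\Xf X F(x)=f$ into a conjunction of bit tests. Your version is more careful than the paper's one-line sketch in two places, and both fill in details correctly: you add the guard $\mathrm{Valid}_F$ under second-order quantifiers, which the paper omits but which is needed when $|F|$ is not a power of two, and you use an arbitrary injection into $\max(1,\lceil\log_2|F|\rceil)$ bits, whereas the paper's ``binary expansion of $j\in\{1,\dots,|F|\}$'' does not quite fit in $\lceil\log|F|\rceil$ bits.
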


\begin{proof}
    For each occurrence of a monadic variable $\Xf X F$ in $\phi$, we encode it in $\phi'$ with a tuple of $2$-variables $(\Xf {X_1} 2,\dots,\Xf {X_n} 2)$ of size $n=\lceil \log |F| \rceil$, by picking an enumeration $f_1,\dots,f_{|F|}$ of $F$ and encoding $\Xf F X(x) = f_j$ as
    \[
        \bigwedge_{i=1}^n \Xf X 2(x)=b_i,
    \]
    where $b_1\dots b_n$ is the binary expansion of $j \in \{1,\dots,|F|\}$.
\end{proof}

\AP The ""quantifier rank"" $\intro*\qr(\phi)$ of a "formula" $\phi$ is defined inductively by setting:
\begin{itemize}
    \item $\qr(\Xf X F (x) \in F)=\qr(x < y)= \qr(a(x)) = 0 $;
    \item $\qr(\neg \phi) = \qr(\phi)$;
    \item $\qr(\phi_1 \vee \phi_2) = \max(\qr(\phi_1), \qr(\phi_2))$;
    \item $\qr(\exists x \phi) = \qr(\phi) +1$; and
    \item $\qr(\existsf F \Xf X F \phi) = \qr(\phi) + |F|$.
\end{itemize}
In particular, a "formula" with "quantifier rank" $q$ uses only "$F$-variables" with $|F| \leq q$.

\AP Consider a tuple of variables $\tup X=(x_1,\dots,x_n,\Xf {X_1} {F_1},\dots, \Xf{X_m}{F_m})$ and an integer $q$.
Given a string $w$, we say that a tuple $\tup A=(a_1,\dots,a_n,\Xf {A_1} {F_1},\dots, \Xf{A_m}{F_m})$ matches $\tup X$ if the $a_i$'s are "positions" in $w$ and the $A_i$'s are maps from "positions" to $F_i$.
The $\tup X$-""type"" of "quantifier rank" $q$ of a tuple $\tup A$ of variables matching $\tup X$ in a string $w$ is the collection of MSO$[<,\Sigma]$ "formulas" $\phi(\tup X)$ of "quantifier rank" $\leq q$ such that $w \models \phi(\tup A)$.

\AP For brevity, since $\tup X$ can be uniquely recovered from $\tup A$, we just call it the "$q$-type" of $\tup A$ in $w$, and sometimes write it $\intro*\typ q {\tup A} w$.
The "$q$-type" of $w$ is defined to be the "$q$-type" of $\tup A$ for $\tup A$ the empty tuple (hence, it is a collection of sentences, \ie "formulas" with no free variables).

It is easy to prove by induction on $q$ that for all $\tup X$ and for fixed alphabet $\Sigma$, up to syntactic equivalence, there are finitely many "formulas" with free variables in $\tup X$~\cite{Libkin2004}, and therefore, there are finitely many $\tup X$-"types" of "quantifier rank" $q$.
The following lemma states a fundamental property of MSO.

\begin{lemma}[""Compositionality"" of MSO on strings]
    For every $q$ and every strings $u,v \in \Sigma^*$, the $q$-"type" of their concatenation depends only on the $q$-"types" of $u$ and of $v$.
\end{lemma}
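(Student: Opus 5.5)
The plan is the classical Ehrenfeucht--Fraïssé argument, carried out by induction on $q$ and adapted to the $F$-variables used here. Concretely, I would prove a slightly stronger statement, since the induction needs free variables. Fix tuples of variables $\tup X$ and $\tup Y$ (first-order and monadic), a tuple $\tup A$ over $u$ and a tuple $\tup B$ over $v$. Form $\tup C$ over $uv$ by taking the union of the first-order assignments and, for monadic variables shared by both sides, gluing the two colourings together. The claim is that $\typ q {\tup C} {uv}$ depends only on $\typ q {\tup A} u$ and $\typ q {\tup B} v$. The lemma is then the case of empty tuples.

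For the induction I would use the game characterisation. Two tuples have the same $q$-type if and only if Duplicator wins the $q$-round MSO Ehrenfeucht--Fraïssé game on them. The rank convention here has two consequences for the game:
\begin{itemize}
\item a first-order move costs one round;
\item a move choosing an $F$-colouring costs $|F|$ rounds, and only sets with $|F|\le$ the remaining budget may be played.
\end{itemize}
I would first check that this equivalence still holds in the modified setting. The standard proof goes through verbatim once we use the fact, stated just above, that there are only finitely many types of each rank. That finiteness lets us write a single formula of rank $q-1$ (respectively $q-|F|$) describing a type.

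The composition step is then a product strategy. Suppose $\typ q {\tup A} u=\typ q {\tup A'} {u'}$ and $\typ q {\tup B} v=\typ q {\tup B'} {v'}$. Duplicator plays the two games in parallel:
\begin{itemize}
\item A point move in $uv$ falls either in $u$ or in $v$; Duplicator answers in the corresponding component.
\item An $F$-colouring move in $uv$ splits into an $F$-colouring of $u$ and one of $v$; Duplicator answers each in its own game and glues the two answers.
\end{itemize}
Each such move costs exactly the same number of rounds in each component game as in the main game, so the budgets match.

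At the end, atomic formulas must agree. Formulas $\Xf X F(x)=f$ and $\sigma(x)$ are local to one component. For $x<y$, either both points lie in the same component, which is handled there, or they lie in different components, and then the order is determined by which component each point belongs to. That information is preserved by the strategy.

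Alternatively, and more syntactically, the same induction can be carried out directly on formulas. Then $\exists^{(F)}\Xf X F\,\phi$ holds in $uv$ if and only if there exist types $\tau$ of rank $q-|F|$ over $u$ and $\tau'$ over $v$ that are realisable by extensions and that compose into a type containing $\phi$. The main obstacle, such as it is, is purely bookkeeping: making the rank accounting for $F$-variables consistent so that the induction hypothesis applies at rank $q-|F|$. I expect no real difficulty, since the paper cites this as standard (cf.~\cite{Thomas97,Libkin2004}).
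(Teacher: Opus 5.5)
Your argument is correct. The paper gives no proof of this lemma; it invokes it as the standard compositionality property of MSO, citing textbook sources. Your Ehrenfeucht--Fra\"{\i}ss\'e product argument, with colouring moves costing $|F|$ rounds to match the paper's rank convention, is the classical justification.

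One imprecision: the budgets do not match exactly, since a point move uses a round in only one component game. What you actually maintain is that each component game's remaining budget is at least the main game's. That is enough, because Duplicator winning with a larger budget implies she wins with a smaller one.
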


\subparagraph*{Operating on strings with markers.}
When manipulating a "string with markers" $\triangles{w}$, we number the "positions" from $0$ to $|w|+1$, so that "positions" in $w$ correspond to $1,\dots,|w|$.\AP
We say that an "MSO set interpretation" $\Psi^\borders$ over $F$ ""operates on strings with markers@@msosi"" if its input alphabet is of the form $\Sigma \cup \{\triangleright,\triangleleft\}$ and there are $f_\triangleright,f_\triangleleft \in F$ such that its universe "formula" $\phi_\pos^\borders$ satisfies for every $w \in \Sigma^*$, every map $\Xf A F$ from "positions" of $w$ to $F$ and "position"~$i$ that
\begin{align*}
    \triangles{w} \models \phi_\pos^\borders(\Xf A F) \quad \implies \qquad &\Xf A F(i)=f_{\triangleright} \text{ if and only if $i=0$,}\\
    \text{ and } &\Xf A F (i)=f_{\triangleleft}\text{ if and only if $i=|w|+1$.}
\end{align*}
For technical reasons, it will be more convenient to work with such interpretations.
The next straightforward lemma states that we may assume this without loss of generality.

\begin{lemma}
    For every "set interpretation" $\Psi$ from $\Sigma$ to $\Gamma$, there exists a "set interpretation" $\Psi^\borders$ from $\Sigma \cup \{\triangleright,\triangleleft\}$ to $\Gamma$ which "operates on strings with markers@@msosi" and such that for every input string $w \in \Sigma^*$, $\Psi(w)=\Psi^\borders(\triangles{w})$.
\end{lemma}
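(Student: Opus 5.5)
The construction is direct: we extend the colour set by the marker colours and relativise every formula of $\Psi$ to the interior positions.

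Let $\Psi$ be given over a finite set $F$ by formulas $\phi_\pos$, $\phi_<$ and $(\phi_\gamma)_{\gamma\in\Gamma}$. Set $F' = F \uplus \{f_\triangleright, f_\triangleleft\}$ with two fresh colours. For every MSO$[<,\Sigma]$ formula $\psi$ whose free monadic variables are $F$-variables, define a formula $\psi^\borders$ over $\Sigma\cup\{\triangleright,\triangleleft\}$ whose free monadic variables are the corresponding $F'$-variables. It is built by induction on $\psi$, relativising all quantifiers to interior positions, that is, positions $x$ with $\neg\triangleright(x)\wedge\neg\triangleleft(x)$. Concretely:
\begin{itemize}
\item $\exists x\,\psi$ becomes $\exists x\,(\neg\triangleright(x)\wedge\neg\triangleleft(x)\wedge\psi^\borders)$;
\item a quantification $\existsf G \Xf Y G\,\psi$ inside a formula stays a quantification over $G$-variables, whose values at the two marker positions are simply never inspected, because every first-order variable now ranges over interior positions only;
\item atomic formulas $x<y$, $\sigma(x)$ and $\Xf X F(x)=f$ are kept unchanged, with the free $F$-variable replaced by the $F'$-variable of the same name.
\end{itemize}
A routine induction on $\psi$ then gives the following. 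For every $w\in\Sigma^*$ and every colouring $\Xf A {F'}$ of the positions of $\triangles w$ whose restriction $A$ to $\{1,\dots,|w|\}$ takes values in $F$,
\[ \triangles w \models \psi^\borders(\Xf A {F'}) \iff w\models \psi(A). \]
Here positions $1,\dots,|w|$ of $\triangles w$ are identified with the positions of $w$.

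We then define $\Psi^\borders$ over $F'$. The universe formula is
\[\phi^\borders_\pos(\Xf X {F'}) \;=\; \chi(\Xf X {F'}) \wedge \phi_\pos^\borders(\Xf X {F'}),\]
where $\chi$ states that $\Xf X {F'}(x)=f_\triangleright$ iff $\triangleright(x)$, that $\Xf X {F'}(x)=f_\triangleleft$ iff $\triangleleft(x)$, and that all other positions receive colours in $F$. The order and letter formulas are $\phi^\borders_<=\phi_<^\borders$ and $\phi^\borders_\gamma=\phi_\gamma^\borders$.

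By construction, $\Psi^\borders$ operates on strings with markers with the colours $f_\triangleright, f_\triangleleft$. On input $\triangles w$, the map $A\mapsto$ ``$A$ extended by $f_\triangleright$ at position $0$ and $f_\triangleleft$ at position $|w|+1$'' is a bijection between colourings of $w$ by $F$ and colourings of $\triangles w$ by $F'$ satisfying $\chi$. By the relativisation property, this bijection maps configurations of $\Psi(w)$ exactly onto configurations of $\Psi^\borders(\triangles w)$. It also preserves the order and the letter predicates. Hence $\Psi^\borders(\triangles w)\cong\Psi(w)$ as $[<,\Gamma]$-structures, so both define the same $\Gamma$-string.

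The only step needing any care is the relativisation induction, specifically the second-order case. An inner $G$-variable ranges over all colourings of $\triangles w$, so it has $|G|^2$ times more extensions than its counterpart over $w$. Since its values at the two marker positions can never be queried, the truth value is unaffected, and the inductive statement should be phrased for all tuples of free variables with first-order ones interior. I expect nothing else to be an obstacle.
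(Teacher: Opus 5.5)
Your proof is correct and follows the paper's approach: relativise first-order quantifiers to the interior positions, leave monadic quantifiers untouched because marker values are never inspected, and pin the marker colours in the universe formula. The one difference works in your favour. By taking $f_\triangleright,f_\triangleleft$ fresh and forcing interior positions into $F$, you get the \emph{only if} half of the definition of operating on strings with markers, which the paper does not enforce at interior positions because it picks $f_\triangleright,f_\triangleleft$ arbitrarily inside $F$; and phrasing the induction over all extensions at the markers is exactly what the second-order step needs.
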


\begin{proof}
    We first establish the following claim, that describes an easy syntactic transformations from "formulas" about $w$ to "formulas" about $\triangles{w}$, which ignore the first and last "positions".
    
    \begin{claim}
        Let $\tup X=(x_1,\dots,x_n,\Xf {X_1} {F_1},\dots,\Xf {X_m} {F_m})$ be a tuple of variables, and for each $j \in \{1,\dots,m\}$, fix two elements $f_\triangleright^j,f_\triangleleft^j \in F_j$.
        For every MSO "formula" $\phi(\tup X)$ over $\Sigma$-strings, there exists a MSO "formula" $\phi^\borders(\tup X)$ over $(\Sigma \cup \{\triangles,\})$-strings such that for every $w \in \Sigma^*$, and every tuple $\tup A$ matching $\tup X$, we have
        \[
            {w} \models \phi(\tup A) \quad \iff \quad \triangles{w} \models \phi^\borders(\tup A^\borders),
        \]
        where $\tup A^\borders=(a_1^\borders,\dots,a_n^\borders,\Xf {{A_1^\borders}} {F_1}, \dots, \Xf {{A_m^\borders}} {F_m})$ is obtained from $\tup A=(a_1,\dots,a_n,\Xf {A_1} {F_1}, \dots, \Xf {A_m} {F_m})$ by setting $a_j^\borders=a_j$ (as an element of $\{0,\dots,|w|+1\}$) and setting $\Xf {{A_j^\borders}} {F_j}$ to coincide with $\Xf {A_j} {F_j}$ over $\{1,\dots,|w|\}$, and assign $f_\triangleright^j$ and $f_\triangleleft^j$ respectively to $0$ and $|w|+1$.
    \end{claim}

    \begin{claimproof}
        The proof is by induction on the syntactic tree of $\phi$.
        If $\phi$ is an atomic "formula", \ie it is of the form $\Xf X F(x)=f$, $x<y$ or $\sigma(x)$, then we let $\phi^\borders=\phi$.
        Otherwise, we proceed as follows in each of the cases.
        \begin{itemize}
            \item If $\phi=\neg \psi$. Then we let $\phi^\borders=\neg \psi^\borders$.
            \item If $\phi= \psi_1 \vee \psi_2$. Then we let $\phi^\borders = \psi_1^\borders \vee \psi_2^\borders$.
            \item If $\phi= \Xf \exists F \Xf X F \psi$. Then we let $\phi^\borders = \Xf \exists F \Xf X F \psi^\borders$.
            \item If $\phi=\exists x \psi$. Then we let\footnote{The first "position" $0$, the last "position" $|w|+1$, and the boolean connective $\wedge$ are easily defined using standard constructions.} $\phi^\borders = (x>0) \wedge (x<|w|+1) \wedge \psi^\borders(x)$. \claimqedhere
        \end{itemize}
    \end{claimproof}

    To construct the wanted "MSO set interpretation" $\Psi^\borders$, we proceed as follows.
    For $x \in \{<\} \cup \Gamma$, we let $\phi^\borders_x$ be obtained from $\phi_x$ by applying the claim.
    Finally, we define the universe "formula" by picking $f_\triangleright$ and $f_\triangleleft$ arbitrarily in $F$ and setting
    \[ 
       \psi^\borders_\pos(\Xf X F) = \psi'_\pos(\Xf X F) \wedge \Xf X F(0)=f_\triangleright \wedge \Xf X F(|w|+1) =  f_\triangleleft,
    \]
    where $\psi'_\pos$ is obtained from $\psi_\pos$ by applying the claim.
\end{proof}

Therefore, we assume from now on that we are given an "MSO set interpretation" from $\Sigma \cup \{\triangleright,\triangleleft\}$ to $\gamma$ that "operates on strings with markers@@msosi", and we refer to the "formulas" defining it as $\phi_\pos$, $\phi_<$ and $\phi_\gamma$ for $\gamma \in \Gamma$.

\subparagraph*{Terminology.}
Let $q$ denote the "quantifier rank" of the "MSO set interpretation"; this is defined to be the maximal "quantifier rank" of the "formulas" defining it.
All "formulas" are MSO$[<,\Sigma]$ "formulas", so we just call these "formulas".
Throughout the proof, when we say that something is ""bounded"", we mean \emph{bounded as a function of $q$ and $|F|$}.
Fix a $\Sigma$-"string with markers" $\triangles{w} \in \Sigma^*$; its "positions" are denoted $\{0,\dots,|w|+1\}$.
\AP A \reintro*"configuration@@msosi" $\Xf A F$ is a map from "positions" of the input string to $F$ such that $\triangles{w} \models \phi_\pos(\Xf A F)$.
For readability, we just denote them $A$ instead of $\Xf A F$.
Since the "MSO set interpretation" "operates on strings with markers@@msosi", there are two elements $f_\triangleright,f_\triangleleft$ such that every "configuration@@msosi" $A$ satisfies $A(0)=f_\triangleright$ and $A(1)=f_\triangleleft$.
Configurations correspond to positions in the output string, and these are ordered by $\phi_<$, which we just write as $A < A'$.
Throughout the proof, given two "configurations@@msosi" $A,B$, we let $AB$ denote the map from "positions" to $F^2$ obtained by assigning $i$ to $A(i)B(i)$.

\subparagraph*{Splits.}
\AP A ""split"" is a nonempty convex set of "positions", such that its complement is also nonempty and convex.
Stated differently, it is a set of the form $\{0,\dots,i\}$ (for $i<|w|+1$) or $\{i,\dots,|w|+1\}$ (for $i>0$).
We denote $\rig i=\{0,\dots,i\}$ and $\lef i=\{i,\dots,|w|+1\}$, so that every "split" is given by its "position" $i$ and its direction $\in \{\leftarrow,\rightarrow\}$. 
Therefore all such $\rig i$ and $\lef i$ denote valid "splits" except for $\lef 0$ and $\rig{i+1}$.

\AP
For every "split" $s$, its complement, written $\bar s$, is also a "split" which we call the ""opposite"" "split"; the "opposite" of $\rig i$ is $\lef{i+1}$ and the "opposite" of $\lef i$ is $\rig{i-1}$.
Every "split" $s$, except for $\lef 1$ and $\rig{|w|}$, also has a ""successor"", which is the "split" with the same direction $\dir \in \{\leftarrow,\rightarrow\}$ as $s$, and "position" $i+1$ if $\dir=\ \rightarrow$ and $i-1$ if $\dir=\ \leftarrow$, where $i$ is the "position" of $s$.
Stated differently, the "successor" of $s$ is the minimal "split" strictly containing $s$.

\AP
Given a "split" $s$, we let $(\triangles{w})_s$ denote the restriction of $\triangles{w}$ to "positions" in the "split".
We use the same notations for "configurations@@msosi", \eg $A_{s}$ denotes the restrition of $A$ to $(\triangles{w})_s$.
We say that two "configurations@@msosi" $A$ and $A'$ are $s$-""equivalent"" if $A_s=(A')_s$.

It is convenient to encode "splits" as pairs of first order variables as follows: the "split" $\lef i$ is encoded by the pair $(i,0)$, and the "split" $\rig i$ is encoded as the pair $(i,|w|+1)$.
From now on, we thus allow ourselves to manipulate "splits" as variables in mso "formulas", by identifying a "split" $s$ with its encoding as a pair of "positions" when necessary.

\subsection{Upper bound on the simplicity}

\subparagraph*{Simplicity.}
\AP Given two "configurations@@msosi" $A \leq B$, the ""interval"" between $A$ and $B$, denoted $[ A, B]$, is the set of "configurations@@msosi" $C$ such that $A \leq  C \leq  B$.
Note that "intervals" are assumed to be nonempty.
We naturally encode an "interval" $I=[A,B]$ by the pair of "configurations@@msosi" $A,B$, therefore for convenience when defining "formulas" we will often use $I$ as a shorthand for $AB$, \eg we will allow ourselves to write $\phi(I)$ for $\phi(AB)$.

\begin{definition}\AP
    Given a "split" $s$ and an integer $d$, we say that an "interval" $I$ is $d$-""simple"" at $s$ if there is a "configuration@@msosi" $A_0$ such that for every "interval" $J \subseteq I$, either
    \begin{itemize}
        \item there is a "configuration@@msosi" $A \in J$ which is $s$-"equivalent" to $A_0$; or
        \item $d>0$ and $J$ is $(d-1)$-"simple" at $\bar s$.
    \end{itemize}
    By convention, we also define singleton "intervals" to be $(-1)$-"simple" at every "split".
\end{definition}

We say that a "configuration@@msosi" $A_0$ is a ""witness"" for the $d$-"simplicity" of $I$ if the above holds for $A_0$.
Note that if $d<d'$ and $I$ is $d$-"simple" at $s$, then it also is $d'$-"simple" at $s$.
Likewise, if $I \supseteq I'$ and $I$ is $d$-"simple" at $s$, then so is $I'$.
Note also that every "interval" is $0$-"simple" at $\rig 0$ and at $\lef{|w|+1}$ (this is because every pair of "configuration@@msosi" are $s$-"equivalent" when $s=\rig 0$ or $\lef{|w|+1}$, since all "configurations@@msosi" $A$ satisfy $A(0)=f_\triangleright$ and $A(|w|+1)=f_\triangleleft$).

By an easy induction, for every $d$ there is a "formula" $\phi_{\text{d-simple}}$ such that for every "interval" $I$ and "split" $s$, $\triangles{w} \models \phi_{\text{d-simple}}(I s)$ if and only if $I$ is $d$-"simple" at $s$.
Likewise, there are "formulas" defining whether a given "configuration@@msosi" is a "witness" for the $d$-"simplicity" of a given "interval".

We now establish that the simplicities are bounded.

\begin{lemma}\label{lem:all-intervals-are-d-simple}
    There exists $d_0$ such that for every "split" $s$, all "intervals" are $d_0$-"simple" at $s$.
\end{lemma}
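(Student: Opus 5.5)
The plan is to argue by contradiction with a Ramsey-type argument on a grid of recombined configurations. The only facts about $\Phi$ we use are "compositionality" and the fact that $\phi_<$ defines a total order.

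\emph{Step 1 (recombination).} For a split $s$ and configurations $A,B$, write $A_sB_{\bar s}$ for the colouring equal to $A$ on $s$ and to $B$ on $\bar s$. By "compositionality", whether $A_sB_{\bar s}$ satisfies $\phi_\pos$ depends only on the "$q$-types" of $A_s$ in $(\triangles w)_s$ and of $B_{\bar s}$ in $(\triangles w)_{\bar s}$. Likewise, for four colourings, whether $A_sB_{\bar s} < C_sD_{\bar s}$ depends only on the $q$-type of $A_sC_s$ over $s$ and of $B_{\bar s}D_{\bar s}$ over $\bar s$. The number $N$ of such types is "bounded".

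\emph{Step 2 (extracting a long alternating chain).} Show by induction on $d$ that if an interval $I$ is not $d$-"simple" at $s$, then it contains a chain $A^1<A^2<\dots<A^{m}$ with $m$ growing unboundedly in $d$. In this chain the restrictions to $s$ are pairwise distinct, and moreover so are the restrictions to $\bar s$. The mechanism is as follows. Non-simplicity at $s$ means that for every candidate witness $A_0$, some subinterval $J\subseteq I$ avoids the $s$-class of $A_0$ and is not $(d-1)$-simple at $\bar s$. Applying this repeatedly with witnesses taken from previously produced subintervals, and then invoking the induction hypothesis inside $J$ at $\bar s$, yields chains whose $s$-parts and $\bar s$-parts keep changing. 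This is the step I expect to be the main obstacle. The quantifier structure of the definition, "exists $A_0$, for all $J$", must be unwound carefully so that the configurations produced at level $d-1$ (controlled at $\bar s$) still differ at $s$, and so that the length indeed grows. My first attempt would be to prove the contrapositive directly. Take the minimal, respectively maximal, elements of each $s$-class present in $I$. Show that if there are few "alternations" of classes on both sides, one can choose $A_0$ in the class spanning the largest sub-block, so that the complementary intervals are controlled inductively.

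\emph{Step 3 (Ramsey and the grid).} Colour each pair $i<j$ of the chain by the pair of types $\bigl(\typ q {A^i_sA^j_s}{(\triangles w)_s},\ \typ q {A^i_{\bar s}A^j_{\bar s}}{(\triangles w)_{\bar s}}\bigr)$, together with the types of $A^i_s$ and $A^i_{\bar s}$. By Ramsey's theorem, for $m$ large enough as a function of $N$ we obtain a homogeneous subsequence of length $3$ or $4$. By Step 1, all recombinations $C_{ij}=A^i_sA^j_{\bar s}$ are then configurations, since the diagonal ones are. Moreover, the comparison between $C_{ij}$ and $C_{kl}$ depends only on the relative orders of $(i,k)$ and of $(j,l)$.

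\emph{Step 4 (contradiction).} A total order on a $3\times3$ grid that is invariant in this sense must be lexicographic, with one coordinate dominating, possibly with either coordinate reversed. Otherwise transitivity or antisymmetry fails, which we check by a finite case analysis on the eight possible sign patterns. Suppose the $s$-coordinate dominates. Then all configurations sharing $A^i_s$ lie in a convex block containing $A^i$ and $C_{ij}$. Hence the chain cannot alternate between the $s$-classes as Step 2 demands, since $A^j$ with $j$ between would have to leave the block. This contradicts the choice of the chain. The case where the $\bar s$-coordinate dominates is symmetric, exchanging $s$ and $\bar s$. Taking $d_0$ to be the threshold from Steps 2 and 3 for Ramsey with $N$ colours concludes the proof, uniformly in $w$ and $s$.
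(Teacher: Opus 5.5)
\textbf{There is a genuine gap, in Step 2.} Your Steps 1 and 3, and the eight-order classification in Step 4, coincide with the paper's argument. The problem is that the invariant Step 2 extracts is too weak to yield any contradiction, so the gap is more than a matter of unwinding quantifiers.

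\emph{Why the Step 2 invariant fails.} Increasing chains with pairwise distinct restrictions to $s$ and to $\bar s$ can have unbounded length in set interpretations that satisfy the lemma. For example, take all $\{0,1\}$-colourings ordered lexicographically, put $s$ in the middle of the input, and let the $t$-th configuration carry the binary expansion of $t$ on both sides. Running Steps 3 and 4 on this chain gives a consistent lexicographic grid: the $s$-part dominates and both coordinates increase. Step 4 then relies on an alternation between $s$-classes that Step 2 never provides, and which is incompatible with pairwise distinct $s$-parts. Moreover, the block of grid points sharing $A^i_s$ is convex only among the nine grid points, not among all configurations. So nothing is contradicted.

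\emph{The missing idea: avoidance tied to convexity of intervals.} The paper uses bad chains: nested intervals $I_0\supseteq I_1\supseteq\cdots\supseteq I_n$ and configurations $A_j\in I_j$ such that no configuration of $I_{j+1}$ is $s$-equivalent or $\bar s$-equivalent to $A_j$. Apply Ramsey to obtain three chain elements $A_1,A_2,A_3$, taken in chain order. In each of the eight lexicographic orders, some recombination $C_{1k}$ or $C_{k1}$ lies between $A_2$ and $A_3$; it shares its $s$-part or its $\bar s$-part with $A_1$. Both $A_2$ and $A_3$ lie in the interval following $A_1$ in the chain, so by convexity this recombination does too, which is a contradiction.

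\emph{Why both classes must be avoided.} Which side is shared with $A_1$ depends on which coordinate dominates. Your alternating descent (not $d$-simple at $s$ gives a subinterval not $(d-1)$-simple at $\bar s$) only avoids the $s$-class at even steps and the $\bar s$-class at odd steps. That handles only the orders in which the matching coordinate dominates.

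\emph{The fix.} Descend two levels at once with the same witness. If $I$ is not $2d$-simple at $s$ and $A_0\in I$, then some $J\subseteq I$ avoids both classes of $A_0$ and is not $(2d-2)$-simple at $s$. Otherwise $A_0$ witnesses that every subinterval avoiding its $s$-class is $(2d-1)$-simple at $\bar s$, and hence $I$ is $2d$-simple at $s$. Iterating at the fixed split $s$ gives bad chains of length $d$, and Ramsey bounds their length.
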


\begin{proof}
    Fix a "split" $s$. \AP
    A ""bad chain"" of length $n$ is a sequence $I_0,  A_0, I_1, \dots, I_{n-1},  A_{n-1}, I_n$ such that the $I_j$'s are "intervals" and the $A_j$'s are "configurations@@msosi", so that
    \begin{itemize}
        \item $I_0 \supseteq \dots \supseteq I_n$;
        \item for every $j$, $A_j \in I_{j}$; and
        \item for every $j<n$ all "configurations@@msosi" in $I_{j+1}$ are neither $s$-"equivalent" nor $\bar s$-"equivalent" to $A_j$.
    \end{itemize}

    We prove the lemma by establishing the two following claims.

    \begin{claim}\label{claim:not-simple-implies-long-bad-chain}
        For every "interval" $I$ which is not $2d$-"simple" at $s$, there exists a "bad chain" of length $d$ that starts with $I$.
    \end{claim}

    \begin{claim}\label{claim:bad-chains-are-short}
        The length of "bad chains" is bounded.
    \end{claim}

    \begin{claimproof}[Proof of Claim~\ref{claim:not-simple-implies-long-bad-chain}]
        We prove the result by induction on $d$; for $d=0$, the sequence restricted to $I$ gives a "bad chain" of length $0$.
        Let $d>0$ and assume the result known for smaller values.
        Let $I$ be an "interval" which is not $2d$-"simple" at $s$ and let $A_0 \in I$.

        Towards a contradiction, assume that for every "interval" $J \subseteq I$, either
        \begin{enumerate}[(a)]
            \item\label{item:simple} $J$ is $(2d-2)$-"simple" at $s$;
            \item\label{item:occcurrence_of_s} there is $A \in J$ which is $s$-"equivalent" to $A_0$; or
            \item\label{item:occurrence_of_sbar} there is $A \in J$ which is $\bar s$-"equivalent" to $A_0$.
        \end{enumerate}

        Consider an "interval" $J$ which does not satisfy item~(\ref{item:occcurrence_of_s}).
        Then every "interval" $J' \subset J$ either satisfies item~(\ref{item:simple}) or item~(\ref{item:occurrence_of_sbar}), which means that $J$ is $(2d-1)$-"simple" at $\bar s$.
        Therefore $J$ is $2d$-"simple" at $s$, a contradiction.

        Hence there is an "interval" $J \subseteq I$ which is not $(2d-2)$-"simple" at $s$, and is such that every $A \in J$ is neither $s$-"equivalent" nor $\bar s$-"equivalent" to $A_0$.
        The induction hypothesis gives a "bad chain" of length $d-1$ starting from $J$, and therefore extending this chain on the left with $I, A_0$ gives a "bad chain" of length $d$.
    \end{claimproof}

    \begin{claimproof}[Proof of Claim~\ref{claim:bad-chains-are-short}]
        Consider a "bad chain".
        Assuming it is long enough, by Ramsey's theorem, it contains three "configurations@@msosi" $A_1,  A_2,  A_3$ in this order (in the "bad chain"), such that both $\typ q {(A_j  A_{j'})_s} {{\triangles{w}}_s}$ and $\typ q {(A_j  A_{j'})_{\bar s}} {{\triangles{w}}_{\bar s}}$ are independent on the choice of $j<j'$ for $j,j' \in \{1,2,3\}$.
        In particular, by "compositionality" this implies that if for every $j,j'$ (not necessarily ordered as $j<j'$) we define $A_{j,j'}$ by $(A_{j,j'})_s=(A_j)_s$ and $(A_{j,j'})_{\bar s}=(A_{j'})_{\bar s}$,
        \begin{itemize}
            \item for every $j,j'$ it holds that $A_{j,j'}$ is a "configuration@@msosi";
            \item for every $j<j'$, every $k<k'$ and every $\ell,\ell'$, it holds that
            \[
                 A_{\ell,j} < A_{\ell,k} \iff  A_{\ell',j'} < A_{\ell',k'} \quad \text{ and } \quad  A_{j,\ell} < A_{k,\ell} \iff  A_{j',\ell'} < A_{k',\ell'}; \text{ and}
            \]
            \item for every $j<j'$ and every $k<k'$, it holds that
            \[
                 A_{j,k} < A_{k,j} \quad \iff \quad  A_{j',k'} < A_{k',j'}.
            \]
        \end{itemize}
        For example, if we have $A_{1,1}< A_{1,2}$ as well as $A_{1,1}< A_{2,1}$ and $A_{2,1}< A_{1,2}$, then the 3x3-grid is visited in the following lexicographic order:
        \begin{center}
            \begin{tikzpicture}[scale=1.2]
            \node (c11) at (0,0) {$A_{1,1}$}; \node (c21) at (1,0) {$A_{2,1}$}; \node (c31) at (2,0) {$A_{3,1}$};
            \node (c12) at (0,1) {$A_{1,2}$}; \node (c22) at (1,1) {$A_{2,2}$}; \node (c32) at (2,1) {$A_{3,2}$};
            \node (c13) at (0,2) {$A_{1,3}$};	\node (c23) at (1,2) {$A_{2,3}$}; \node (c33) at (2,2) {$A_{3,3}$};
            \draw[->,dotted,very thick] (c11) -- (c21) -- (c31) -- (c12) -- (c22) -- (c32) -- (c13) -- (c23) -- (c33);
            \end{tikzpicture}
        \end{center}

        According to the possible orderings of $A_{1,1},  A_{1,2}$ and $A_{2,1}$, eight different orderings of the 3x3-grid may occur, which correspond to the different possible lexicographic orderings.
        In the picture below, where the "interval" between $A_2$ and $A_3$ is marked in pink and $A_{j,k}$ is marked by a circle, we argue that in each of these situations there exist $i,j$ such that $A_{j,k}$ is between $A_{2}$ and $A_{3}$ (\ie $A_2 < A_{j,k} < A_3$ or $A_3 < A_{j,k} < A_2$), and either $j=1$ or $k=1$.
        \begin{center}
            \begin{tikzpicture}[scale=.6]
            \draw[pink,very thick] (0,2) circle (.2);
            \draw[pink,ultra thick] (1,1) -- (2,1) -- (0,2) -- (2,2);
            \draw[->,dotted,thick] (0,0) -- (2,0) -- (0,1) -- (2,1) -- (0,2) -- (2,2);
            \end{tikzpicture}\quad
            \begin{tikzpicture}[scale=.6]
            \draw[pink,very thick] (2,0) circle (.2);
            \draw[pink,ultra thick] (1,1) -- (1,2) -- (2,0) -- (2,2);
            \draw[->,dotted,thick] (0,0) -- (0,2) -- (1,0) -- (1,2) -- (2,0) -- (2,2);
            \end{tikzpicture}\quad
            \begin{tikzpicture}[scale=.6]
            \draw[pink,very thick] (0,1) circle (.2);
            \draw[pink,ultra thick] (1,1) -- (0,1) -- (2,2);
            \draw[->,dotted,thick] (2,0) -- (0,0) -- (2,1) -- (0,1) -- (2,2) -- (0,2);
            \end{tikzpicture}\quad
            \begin{tikzpicture}[scale=.6]
            \draw[pink,very thick] (1,0) circle (.2);
            \draw[pink,ultra thick] (2,2) -- (1,0) -- (1,1); 
            \draw[->,dotted,thick] (2,0) -- (2,2) -- (1,0) -- (1,2) -- (0,0) -- (0,2);
            \end{tikzpicture}\quad
            \begin{tikzpicture}[scale=.6]
            \draw[pink,very thick] (0,1) circle (.2);
            \draw[pink,ultra thick] (2,2) -- (0,1) -- (1,1);
            \draw[->,dotted,thick] (0,2) -- (2,2) -- (0,1) -- (2,1) -- (0,0) -- (2,0);
            \end{tikzpicture}\quad
            \begin{tikzpicture}[scale=.6]
            \draw[pink,very thick] (1,0) circle (.2);
            \draw[pink,ultra thick] (1,1) -- (1,0) -- (2,2);
            \draw[->,dotted,thick] (0,2) -- (0,0) -- (1,2) -- (1,0) -- (2,2) -- (2,0);
            \end{tikzpicture}\quad
            \begin{tikzpicture}[scale=.6]
            \draw[pink,very thick] (0,2) circle (.2);
            \draw[pink,ultra thick] (2,2) -- (0,2) -- (2,1) -- (1,1);
            \draw[->,dotted,thick] (2,2) -- (0,2) -- (2,1) -- (0,1) -- (2,0) -- (0,0);
            \end{tikzpicture}\quad
            \begin{tikzpicture}[scale=.6]
            \draw[pink,very thick] (2,0) circle (.2);
            \draw[pink,ultra thick] (2,2) -- (2,0) -- (1,2) -- (1,1);
            \draw[->,dotted,thick] (2,2) -- (2,0) -- (1,2) -- (1,0) -- (0,2) -- (0,0);
            \end{tikzpicture}\quad
        \end{center}
        Now since $A_2$ and $A_3$ belong to the "interval" $I$ just after $A_1$ in the "bad chain", so does $A_{j,k}$.
        However, since $j=1$ or $k=1$, $A_{j,k}$ is either $s$-"equivalent" or $\bar s$-"equivalent" to $A_1$, which contradicts the definition of "bad chains".
    \end{claimproof}
    The lemma immediately follows from Claims~\ref{claim:not-simple-implies-long-bad-chain} and~\ref{claim:bad-chains-are-short}.
\end{proof}

\AP 
Recall that the \reintro*"simplicity" of an "interval" $I$ at a "split" $s$ is the minimal $d$ such that $I$ is $d$-"simple" at $s$.
From now on, we let $d_0$ be the bound obtained from the above lemma, so that all "simplicities" belong to $\{-1,0,1,\dots,d_0\}$.

\subsection{Existence of a definable basis}\label{sec:interval-bases}

We now move on to definable interval bases, which provide a tool to decompose an interval in a definable fashion, while preserving its simplicity.

\AP A ""basis"" $B$ is a collection of sets $B_s$ of "intervals", one for each "split" $s$, with the following properties:
\begin{itemize}
    \item for every "split" $s$, every "interval" which is $d$-"simple" at $s$ is contained in a "bounded" union of "intervals" from $B_s$ that are $d$-"simple" at $s$;
    \item for every "split" $s$, every "configuration@@msosi" belongs to a "bounded" number of "intervals" from $B_s$.
\end{itemize}

\AP A "basis" $B$ is ""definable@@basis"" if there is a "formulas" $\phi_\mem$ such that for all "intervals" $I$ and all "splits" $s$, it holds that $\triangles{w} \models \phi_\mem(Is)$ if and only if $I$ belongs to $B_{s}$.

\begin{lemma}\label{lem:existence-of-a-basis}
    There exists a "definable@@basis" "basis".
\end{lemma}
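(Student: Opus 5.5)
We fix a split $s$ and, for each $d\in\{-1,0,\dots,d_0\}$, build a family $B_s^d$ of $d$-simple intervals that covers every $d$-simple interval with boundedly many members and is thin. Then we set $B_s=\bigcup_d B_s^d$. Since $d_0$ is bounded by Lemma~\ref{lem:all-intervals-are-d-simple}, boundedness and definability are preserved by this finite union. For $d=-1$ we take all singleton intervals, which is trivially definable and thin.

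For $d\geqslant 0$, the plan is to use \emph{maximal} $d$-simple intervals. Call a $d$-simple interval $I$ at $s$ maximal if no strictly larger interval is $d$-simple at $s$. This is MSO-definable from $\phi_{\text{d-simple}}$ by quantifying over pairs of configurations. Every $d$-simple interval is contained in a maximal one, since configurations are finite. So the covering property holds with a single interval, and $d$-simplicity is preserved because the maximal interval is itself $d$-simple.

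The delicate point is thinness: a configuration might lie in unboundedly many maximal $d$-simple intervals, which may overlap heavily. I would first try the following argument. Suppose $C$ lies in maximal intervals $I_1,\dots,I_m$. Order them by left endpoint; since none contains another, their right endpoints are ordered the same way. Now take $I_1$ and $I_m$, both containing $C$. Their union $I_1\cup I_m$ is an interval containing all the $I_j$. If it were $d$-simple, maximality would force $m=1$. So the goal is a bounded gluing lemma: if $J_1,J_2$ are $d$-simple at $s$ and overlap, then $J_1\cup J_2$ is $d'$-simple for some $d'$ slightly larger.

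This alone does not bound $m$, so I would refine the choice of intervals. Instead of maximal intervals, use a canonical greedy cover. For each configuration $C$, let $R_d(C)$ be the largest $B$ such that $[C,B]$ is $d$-simple at $s$, and symmetrically $L_d(C)$ the smallest $A$ such that $[A,C]$ is $d$-simple. Both are definable. Let $B_s^d$ consist of the intervals $[L_d(C),C]$ and $[C,R_d(C)]$ over all $C$, further restricted to a definable sparse set of "anchors" $C$, namely those $C$ such that $C=R_d(L_d(C))$ or similar fixpoint conditions. This restriction is what enforces thinness.

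Covering then works as follows. Any $d$-simple $[A,B]$ is contained in $[A,R_d(A)]$. To replace $A$ by an anchor, observe that $[L_d(A'),A']\cup[A',R_d(A')]$ covers a neighbourhood of $A'$, so two or three basis intervals suffice. For thinness, argue that the anchors form a chain $C_0<C_1<\dots$ with $C_{i+1}=R_d(C_i)$ (a greedy partition of the output into $d$-simple blocks). A configuration then lies in at most two consecutive blocks of the form $[C_i,C_{i+1}]$. These blocks are not directly definable, since reachability along $R_d$ is a transitive closure over configurations. Definability would instead come from showing that anchors can be characterised intrinsically. One candidate characterisation: $C$ is an anchor iff $[C',C]$ fails to be $d$-simple for the appropriate predecessor, expressed via witnesses and the equivalence classes $A_s$.

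\textbf{Main obstacle.} The hardest step is exactly making this sparse choice MSO-definable. In the second-order world, "iterate $R_d$ from the minimum" is not expressible directly. I expect to need the witness structure of simplicity. All configurations agreeing with a witness $A_0$ on $s$ form one $s$-class, and a $d$-simple interval is organised around at most one such class, with $(d-1)$-simple gaps at $\bar s$. So the idea is to index basis intervals by a pair (a configuration $C$, a bounded amount of type data about $C_s$ and $C_{\bar s}$), and use compositionality on the two sides of the split so that only boundedly many canonical intervals pass through any configuration. This mirrors the Ramsey/type argument of Claim~\ref{claim:bad-chains-are-short}. I would try to prove thinness by contradiction: from many distinct basis intervals through $C$, extract via Ramsey three with equal $q$-types on $s$ and $\bar s$, then swap halves to contradict canonicity.
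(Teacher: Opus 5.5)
There is a genuine gap. You never exhibit a family that is at the same time thin and MSO-definable, and that is the whole content of the lemma.

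\textbf{Each of your candidates fails one requirement.} Maximal $d$-simple intervals are definable and cover, but you give no argument for thinness. Gluing overlapping intervals into $d'$-simple ones does not help, since basis members must keep simplicity $d$.

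Your \emph{anchor} restriction does not sparsify anything. If $C=R_d(L_d(C))$, then $[L_d(C),C]$ can be extended neither to the left (by minimality of $L_d(C)$) nor to the right (by definition of $R_d$). Conversely, every right endpoint of a maximal $d$-simple interval satisfies the condition. So your restricted family contains all maximal $d$-simple intervals and is no thinner than your first attempt.

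The greedy chain $C_{i+1}=R_d(C_i)$ would indeed give covering by two blocks with congestion two. However, it is an unbounded iteration over configurations: expressing it requires quantifying over sets of colourings, which MSO cannot do. The fixpoint condition does not characterise this chain. Your final paragraph (types, Ramsey, swapping halves) names neither the canonical intervals nor the property that is to be contradicted, so the crux is left open.

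\textbf{The missing ingredients, as in the paper.}

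First, restrict to \emph{well-formed} intervals $[A,A']$, where $A'$ is $s$-equivalent to the witness $A$. Every $d$-simple interval is covered by such an interval together with a $(d-1)$-simple prefix and suffix at $\bar s$. Two distinct maximally well-formed intervals are strictly ordered at both endpoints.

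Second, the key step is a compositionality argument. Suppose maximally well-formed $I_0<I_1<I_2$ share a configuration, and $(A_0A_1)_s$ and $(A_0A_2)_s$ have the same $(q+9)$-type. Then $[A_1,A_2]$ is $(d-1)$-simple at $\bar s$. The proof shows that $A_2\otimes_s A_1$ lies between two consecutive witness-class occurrences of $I_0$ around $A_1$, and that $[A_2\otimes_s A_1,A_2\otimes_s B_1]=I_2$.

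Third, quotient maximally well-formed intervals by the transitive closure $\sim$ of nearness. A pigeonhole argument on pairs of types bounds the number of $\sim$-classes through any configuration. In a long class, any two members are linked by a chain of at most $2p$ nearby steps, and the class is covered by its first and last members.

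This bounded chain length is exactly what replaces your unbounded greedy iteration. It makes $\sim$ restricted to long classes definable by guessing boundedly many configurations. The basis then consists of the first and last interval of each long class. Short maximally well-formed intervals are covered directly by boundedly many $(d-1)$-simple intervals.
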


\begin{proof}
    We first focus on constructing a "basis" $B$, then prove that it is "definable@@basis".
    We construct for each "split" $s$ and for each $d \in \{-1,0,1,\dots,d_0\}$, a set of "intervals" $B_s^d$ such that for every "split" $s$ and integer $d \leq d_0$:
    \begin{itemize}
        \item all "intervals" in $B_s^d$ are $d$-"simple" at $s$;
        \item every "interval" which is $d$-"simple" at $s$ is covered by a "bounded" number of "intervals" from $B_s^d$ together with a "bounded" number of $(d-1)$-"simple" "intervals" at $\bar s$;
        \item every "configuration@@msosi" belongs to a "bounded" number of "intervals" from $B_s^d$. 
    \end{itemize}
    
    Taking $B_s$ to be the union of the $B_s^d$'s provides the wanted "basis" $B$.
    For $d=-1$ we take $B_s^d$ to be comprised of all singletons.

    Fix a "split" $s$ and an integer $d \in \{0,1,\dots, d_0\}$.
    From now on and until the end of the proof, $d$-"simple" always means $d$-"simple" at $s$, and $(d-1)$-"simple" always means at $\bar s$, so we no longer mention $s$ and $\bar s$.
    Since the input string $\triangles{w}$ is fixed, we say that two "configurations@@msosi" $A,A'$ have the same $q$-"type" over a "split" $t \in \{s,\bar s\}$ if $A_{t}$ and $A'_t$ have the same $q$-"type" in $(\triangles{w})_t$.

    \AP
    We say that two "configurations@@msosi" are ""nearby"" if the "interval" between them is $(d-1)$-"simple".
    More generally, given an "interval" $I$, define its ""length"", to be the size of the smallest collection of $(d-1)$-"simple" "intervals" that cover $I$.
    
    \AP
    We say that an "interval" $[A,A']$ is ""well-formed"" if $A$ and $A'$ are $s$-"equivalent", and the "interval" is $d$-"simple" with "witness" $A$.
    The $q$-"type" of a "well-formed" "interval" $[A,A']$ over $s$ is the $q$-"type" of $A$ over $s$ (it is the same as the $q$-"type" of $A'$ over $s$ since $A$ and $A'$ are $s$-"equivalent").
    Note that by definition, every $d$-"simple" "interval" is covered by a $(d-1)$-"simple" prefix, a "well-formed" "interval", and a $(d-1)$-"simple" suffix. \AP
    We say that an "interval" is ""maximally"" "well-formed" if it is "well-formed" and not strictly contained in a "well-formed" "interval".
    
    In this proof we will mostly manipulate "maximally" "well-formed" "intervals".
    Given two distinct such "intervals" $I=[A,B]$ and $I'=[A',B']$, we have either $[A<A'$ and $B<B']$ or $[A'<A$ and $B'<B]$ by maximality.
    In the first case, we write $I<I'$ and otherwise $I'<I$.
    
    We say that two "maximally" "well-formed" "intervals" are "nearby" if their lower bounds are "nearby" and their upper bounds are "nearby".
    We say that they are related, denoted $I \sim I'$, if there is a sequence $I=I_0,I_1\dots,I_n=I'$ of "maximally" "well-formed" "intervals" such that for every $j$, $I_j$ and $I_{j+1}$ are "nearby".
    This defines an equivalence relation over "maximally" "well-formed" "intervals".
    We say that a $\sim$-class is ""long"" if it contains a "maximally" "well-formed" "interval" of "length" $> 4p+1$, where
    \[
        p = \text{number of $\Xf X {F^2}$-"types" of "quantifier rank" $q+9$.}
    \]
    We will prove the following statement about the structure of "long" classes.

    \begin{claim}\label{claim:long-sim-classes}
        Let $I_0=[A_0,B_0]<\cdots<I_n=[A_n,B_n]$ be a "long" $\sim$-class.
        Then $[A_0,A_n]$ and $[B_0,B_n]$ have "length" at most $2p$, which implies that the class is covered by $I_0$ and $I_n$.
    \end{claim}

    For the sake of better factorisation, we defer the proof for now.
    The above claim leads to the following definition of $B_s^d$: for every "long" $\sim$-class, we take the first and the last "interval" of the class.
    To prove that $B$ defines a "basis", we should prove the two following claims.

    \begin{claim}\label{claim:low-congestion}
        Every "configuration@@msosi" is contained in a "bounded" number of "intervals" from $B_s^d$.
    \end{claim}

    \begin{claim}\label{claim:covering}
        Every $d$-"simple" "interval" is covered by a "bounded" number of "intervals" from $B_s^d$ as well as a "bounded" number of $(d-1)$-"simple" "intervals".
    \end{claim}

    \begin{claimproof}[Proof of Claim~\ref{claim:covering} assuming Claim~\ref{claim:long-sim-classes}]
        Let $I$ be a $d$-"simple" "interval".
        Recall that $I$ is covered by two $(d-1)$-"simple" "intervals" and a "well-formed" one.
        Let $I'$ be a "maximally" "well-formed" "interval" containing that "well-formed" "interval".
        It suffices to cover $I'$.
        There are two cases.
        \begin{itemize}
            \item If $I'$ has "length" $\leq 4p+1$.
            Then $I'$ is covered by $\leq 4p+1$ "intervals" which are $(d-1)$-"simple", which concludes.
            \item If $I'$ has "length" $> 4p+1$.
            Then the $\sim$-class of $I'$ is "long", so it is covered by two "intervals" from $B_d$ thanks to Claim~\ref{claim:long-sim-classes}.\claimqedhere
        \end{itemize}
    \end{claimproof}

    To prove Claim~\ref{claim:long-sim-classes} as well as Claim~\ref{claim:low-congestion}, we will use the following technical statement.

    \begin{claim}\label{claim:technical}
        Let $I_0=[A_0,B_0] < I_1=[A_1,B_1] < I_2=[A_2,B_2]$ be "maximally" "well-formed" "intervals" with a common "configuration@@msosi", such that
        \[
            \typ {q+9} {(A_0A_1)_s} {{(\triangles{w})}_s} = \typ {q+9} {(A_0A_2)_s} {{(\triangles{w})}_s}.
        \]
        Then $A_1$ and $A_2$ are "nearby".
    \end{claim}

    \begin{claimproof}
        Given two "configurations@@msosi" $A,B$, we let $A \prods B$ denote the map from "positions" to $F$ which coincides with $A$ on $s$ and with $B$ on $\bar s$. \AP
        We say that $A$ and $B$ are ""compatible"" if $A \prods B$ is a "configuration@@msosi", and recall that by "compositionality", compatibility of $A$ and $B$ depends only on the $q$-"types" of $A_s$ in ${(\triangles{w})}_s$ and of $B_{\bar s}$ in ${(\triangles{w})}_{\bar s}$.

        Observe that there is a "formula" $\phimwf(\Xf X F \Xf {Y} F \Xf {Y'} F)$ such that for three "configurations@@msosi" $A,B,B'$,
        \[
            \triangles{w} \models \phimwf(ABB') \quad \iff \quad \begin{array}{l} \text{$A$ is "compatible" with both $B$ and $B'$, and} \\ \text{$[A \prods B,A \prods B']$ is "maximally" "well-formed".} \end{array}
        \]
        Note that the satisfaction of the above "formula" depends only on $A_s, B_{\bar s}$ and $B'_{\bar s}$, and that this "formula" can be taken with quantifier-rank less than $q+9$.

        Therefore by "compositionality", for every $A,A'$ with the same $(q+9)$-"type" over $s$, it holds that for every "configurations@@msosi" $B,B'$,
        \[
            \triangles{w} \models \phimwf(ABB')  \quad \iff \quad \triangles{w} \models \phimwf(ABB')\tag*{(1)}\label{eq:1}.
        \]

        Note that the assumption on the "types" implies that $A_1$ and $A_2$ have the same $(q+9)$-"type" over $s$, and since $(A_1)_s=(B_1)_s$ the same is true for $B_1$ and $A_2$.
        By applying~\ref{eq:1} to $AA'BB'=B_1 A_2 A_1 B_1$, we get that since $I_1$ is "maximally" "well-formed" and $B_1$ is compatible with both $A_1$ and $B_1$, then so is $A_2$, and moreover it holds that $I'=[A_2 \prods A_1, A_2 \prods B_1]$ is maximally well-formed. % which we write $I'=[A',B']$.
        Let $C_0$ denote the greatest "configuration@@msosi" in $I_0$ which is $s$-"equivalent" to $A_0$ and $<A_1$, and likewise let $D_0$ be the least "configuration@@msosi" in $I_0$ which is $s$-"equivalent" to $A_0$ and $>A_1$.
        Then $C_0$ and $D_0$ are successive $s$-"equivalent" "configurations@@msosi" which are $s$-"equivalent" to $A_0$, so they are "nearby" (\ie $[C_0,D_0]$ is $(d-1)$-"simple") by well-formedness of~$I_0$.
        %\nat{je suis d'accord mais c'est dense comme paragraphe! --> Pierre: je suis d'accord... tu peux rajouter des explications si tu veux}

        Let $A'=A_2 \prods A_1$.
        Thanks to compositionality, note that given a "configuration@@msosi" $X$ which is $s$-"equivalent" to $A_0$ and a "configuration@@msosi" $Y$, whether or not $X < Y$ depends only on the $q$-"types" of $X Y$ over $\bar s$ and of $A_0 Y$ over $s$.
        Moreover, these "types" are the same when $Y$ is $A_1$ or $A'$: the types of $XA_1$ and $XA'$ over $\bar s$ are the same since $A_1$ and $A'$ are $\bar s$-"equivalent", and the types of $A_0 A_1$ and $A_0 A'$ over $s$ are the same by the assumption of the lemma (because $A'$ is $s$-equivalent to $A_2$).
        Therefore, by instantiating $X$ to $C_0$ or $D_0$, we get that $A'$ is between $C_0$ and $D_0$, just like $A$.
       % \nat{je suis pas convaincu, je pense qu'il manque un argument --> Pierre : J'ai ajouté des détails, je pense que c'est clair maintenant}
        In particular, the "interval" formed by $A_1$ and $A'$ is contained in $[C_0,D_0]$, so $A_1$ and $A'$ are "nearby".
        
        We now prove that in fact $I'=I_2$, therefore $A'=A_2$ which proves the claim.
        Since $I'$ and $I_2$ are both "well-formed", and since $A'$ and $A_2$ are $s$-"equivalent", it suffices to prove that $I'$ and $I_2$ intersect.
        But since the lower bound $A'$ of $I'$ satisfies $A'> C_0 \geq A_0$ and both $I'$ and $I_0$ are "maximally" "well-formed", we have $B'>B_0$.
        Therefore $B_0$ intersects both $I'$ (because $A'< D_0 \leq B_0<B'$) and $I_2$ (by the assumption that $I_0$ and $I_2$ have a "configuration@@msosi" in common).
    \end{claimproof}

    We are now ready to establish Claims~\ref{claim:long-sim-classes} and \ref{claim:low-congestion}.

    \begin{claimproof}[Proof of Claim~\ref{claim:low-congestion}]
        Let $A$ be a "configuration@@msosi".
        Recall that $B_s^d$ contains at most two "intervals" from each $\sim$-class (exactly two for "long" classes, and none for short ones).
        Therefore it suffices to see that $A$ intersects a "bounded" number of $\sim$-classes.
        Let $I_0=[A_0,B_0]<\dots<I_{n+1}=[A_{n+1},B_{n+1}]$ be "maximally" "well-formed" "intervals" from pairwise different $\sim$-classes, all containing $A$, and such that $n>p^2$.
        Then there are two "intervals" $I_{j_1}<I_{j_2}$ such that
        \begin{align*}
            \typ {q+9} {(A_0A_{j_1})_s} {{\triangles{w}}_s} = \typ {q+9} {(A_0A_{j_2})_s} {{\triangles{w}}_s} \text{ and } \\ \typ {q+9} {(B_{n+1}B_{j_1})_s} {{\triangles{w}}_s} = \typ {q+9} {(B_{n+1}B_{j_2})_s} {{\triangles{w}}_s}.
        \end{align*}
        By applying Claim~\ref{claim:technical} we get that $A_{j_1}$ and $A_{j_2}$ are "nearby", and by applying its symmetrical version, we get that $B_{j_1}$ and $B_{j_2}$ are "nearby".
        Hence $I_{j_1}$ and $I_{j_2}$ are "nearby", so $I_{j_1} \sim I_{j_2}$, a contradiction.
    \end{claimproof}

    \begin{claimproof}[Proof of Claim~\ref{claim:long-sim-classes}]
        We will prove the following statement.

        \begin{subclaim*}
            Let $I_0=[A_0,B_0] < \cdots < I_k=[A_k,B_k]$ be "maximally" "well-formed" "intervals" such that the $I_{j}$ and $I_{j+1}$ are "nearby" for every $j$, and $I_0$ has "length" $> 2p+1$.
            Then $[A_0,A_k]$ has "length" $\leq 2p$.
        \end{subclaim*}

        First we explain why the subclaim implies Claim~\ref{claim:long-sim-classes}.
        Let $I_0=[A_0,B_0]<\cdots<I_n=[A_n,B_n]$ be a "long" $\sim$-class, and let $\midd \in \{0,\dots,n\}$ be such that $I_\midd$ has "length" $> 4p+1$.
        We first apply the subclaim for $I_\midd < \cdots < I_n$, which tells us that $[A_\midd,A_n]$ has "length" $\leq 2p$ and therefore $I_n$ has "length" $> 2p+1$.
        Then we apply the symmetrical version of the subclaim to $I_n > \cdots > I_0$, which tells us that $[B_0,B_n]$ has "length" $\leq 2p$.
        In particular $[B_0,B_\midd]$ has "length" $\leq 2p$ so $I_0$ has "length" $>2p+1$.
        Finally, we apply the subclaim to $I_0 < \cdots < I_n$, which tells us that $[A_0,A_n]$ has "length" $\leq 2p$, which concludes.
        Hence there remains to prove the subclaim.

        \vskip1em
        \item[\hskip\labelsep
        \color{lipicsGray}\sffamily
    Proof of the subclaim.]
     
    For $j \in \{1,\dots,k\}$, we say that $\typ {q+9} {(A_0 A_j)_s} {{\triangles{w}}_s}$ is the colour of $I_j$; we know that there are $p$ possible colours.
    Observe that if $I_j$ and $I_{j'}$ have the same colour and are such that the lengths of $[A_0,A_j]$ and $[A_0,A_{j'}]$ are $\leq 2p+1$, then they have a "configuration@@msosi" in common with $I_0$ (because $I_0$ has "length" $>2p+1$) and thus $A_j$ and $A_{j'}$ are "nearby" (thanks to Claim~\ref{claim:technical}).

    We prove the result by induction on $k$.
    If $k\leq 2p$ then the result clearly holds: since $A_j$s and $A_{j+1}$ are "nearby", the "length" of $[A_j,A_{j+1}]$ is $1$.
    Assume that $k>2p$ and that the result holds for smaller values of $k$.
    Then there are two odd indices $j<j'$ such that $I_j$ and $I_{j'}$ have the same colour and $j' \leq 2p+1$.
    Since $j$ and $j'$ are $\leq 2p+1$ it follows that the lengths of $[A_0,A_j]$ and $[A_0,A_{j'}]$ are $\leq 2p+1$, therefore by the above observation, $A_j$ and $A_{j'}$ are "nearby".
    Hence the sequence 
    \[
        I_0<I_1<\cdots<I_{j-1}<I_j<I_{j'}<I_{j'+1}<\cdots I_k
    \]
    satisfies the assumption of the lemma, and is strictly smaller, since $j$ and $j'$ are odd.
    Therefore $[A_0,A_k]$ has "length" $\leq 2p$, which concludes.
    \end{claimproof}

    Combining the above claims, we know that $B$ is a "basis".
    We now focus on its definability.
    This part presents no particular difficulty; we keep it at a relatively informal level for readability.
    We prove that for every "split", membership is definable for "intervals" which are $d$-"simple" at that "split".
    This implies the result since being $d$-"simple" at a given "split" is definable, and since all "intervals" are $d_0$-"simple".

    For $d=-1$, recall that $B_s^d$ is the set of all singletons, and $(-1)$-"simple" "intervals" are the singletons, so the three items above hold easily.
    Let $d \geq 0$.
    To define membership of an "interval" $I=[A,B]$ in $B_s^d$, we should determine if
    \begin{itemize}
        \item $I$ is "maximally" "well-formed";
        \item $I$ belongs to a "long" $\sim$-class;
        \item $I$ is either the first or the last "interval" of its $\sim$-class.
    \end{itemize}
    We have already established that being "maximally" "well-formed" is definable.
    Thanks to Claim~\ref{claim:long-sim-classes}, we know that if $I,I'$ belong to the same "long" $\sim$-class, then there are "maximally" "well-formed" "intervals" $I=I_0<I_1<\cdots<I_n=I'$ such that for every $j$, $I_{j+1}$ and $I_j$ are "nearby", and $n \leq 2p$.
    Moreover, having "length" $>4p+1$ is definable.
    Therefore, we may guess "intervals" $I_1,\dots,I_n$ as above, and verify that indeed $I_{j+1}$ and $I_j$ are "nearby" for every $j$, and that $I_n$ has "length" $>4p+2$; existence of such "intervals" is equivalent to $I$ belonging to a "long" $\sim$-class, and this is definable since $p$ is "bounded".
    For the last item, we use the same technique as above to define being in the same $\sim$-class (assuming it is "long"), and easily check that $I$ is the first or the last.
\end{proof}

From now on, we consider the definable "basis" $B=(B_s)_s$ constructed in this section, except that we add the "interval" comprised of all "configurations@@msosi" to $B_{\rig 0}$.
This is clearly still a definable "basis".

\begin{lemma}\label{lem:definability-basis-formulas}
    There are "bounded" integers $\ell_\cont$ and $\ell_\cov$ and "formulas" $\phi_\cont^1,\dots,\phi_\cont^{\ell_\cont}$ as well as $\phi_\cov^1,\dots,\phi_\cov^{\ell_\cov}$ such that:
    \begin{itemize}
        \item\emph{sections:} for all "configurations@@msosi" $A$, all "splits" $s$ and all $j \in \{1,\dots,\ell_\cont\}$, $\triangles{w} \models \phi^j_\cont(AI_js)$ holds for a single "interval" $I_j$, and that "interval" belongs to $B_{s}$ and contains $A$, so that every "interval" in $B_{s}$ containing $A$ is of the form $I_j$ for some $j$; and
        \item\emph{covering:} for all "intervals" $I$, all "splits" $s$ and all $j \in \{1,\dots,\ell_\cov\}$, $\triangles{w} \models \phi^j_\cov(II_js)$ holds for a single "interval" $I_j$ such that
        \begin{itemize}
            \item for every $j$, $I_j$ belongs to $B_{s}$;
            \item for every $j$, $I_j$ intersects $I$;
            \item for every $j$, the "simplicity" of $I_j$ at $s$ is smaller or equal to the one of $I$; and
            \item $\bigcup_j I_j$ contains~$I$.
        \end{itemize}
    \end{itemize}
\end{lemma}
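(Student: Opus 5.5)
The idea is to turn the two "bounded" properties of the "definable@@basis" "basis" from Lemma~\ref{lem:existence-of-a-basis} into finitely many MSO-definable Skolem-like selectors. The main tool is the existence of an MSO-definable total order on tuples of monadic variables. For a fixed string, the lexicographic order on colourings $\Xf X {F^2}$ (comparing at the first position where they differ, using a fixed order on $F^2$) is definable by a "formula" of "bounded" "quantifier rank". This gives a definable linear order $\prec$ on encodings $AB$ of "intervals", and hence on the (finitely many) "intervals" of $B_s$ that satisfy some definable condition.

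\emph{Sections.} Let $\ell_\cont$ be the bound on the number of "intervals" of $B_s$ containing a given "configuration@@msosi" (second property of a "basis"). I will define $\phi^j_\cont(AIs)$ to say: $\phi_\mem(Is)$ holds, $A \in I$ (definable via $\phi_<$ applied to the endpoints), and there are exactly $j-1$ "intervals" $I'$ with $\phi_\mem(I's)$, $A\in I'$ and $I' \prec I$. If there are fewer than $j$ such "intervals", the statement still demands a unique witness. So I pad: when fewer than $j$ exist, $\phi^j_\cont$ selects the $\prec$-largest one, which exists because $B_s$ always contains some "interval" containing $A$. This holds since the singleton $\{A\}$ is in $B_s^{-1}$. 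Counting up to a "bounded" number is expressible by guessing boundedly many "interval" variables, so the quantifier rank stays "bounded". Every "interval" of $B_s$ containing $A$ then occurs as some $I_j$.

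\emph{Covering.} Given $I$ and $s$, let $d$ be the "simplicity" of $I$ at $s$; it is definable by the disjunction over $d\le d_0$ of $\phi_{\text{d-simple}}$ (Lemma~\ref{lem:all-intervals-are-d-simple}). Consider the set $S(I,s)$ of "intervals" $J\in B_s$ that intersect $I$ and are $d$-"simple" at $s$. This set may be unboundedly large, so I cannot simply enumerate it. Instead, by the first "basis" property there is a cover of $I$ by a "bounded" number $N$ of members of $S(I,s)$. I will select a canonical such cover definably. One option is the $\prec$-least $N$-tuple $(J_1,\dots,J_N)$ of elements of $S(I,s)$ whose union contains $I$, ordered lexicographically on tuples. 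The statement ``the union contains $I$'' is the MSO condition $\forall C\,(C\in I \to \bigvee_j C\in J_j)$. Then $\phi^j_\cov(IJs)$ states that $J$ is the $j$-th component of this least tuple, with $\ell_\cov=N$. Repetitions are harmless, so uniqueness holds, and each $J_j$ satisfies the three required properties by construction.

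The main obstacle is making sure all auxiliary notions involved are MSO-definable with a "bounded" rank: the lexicographic order on colourings, "interval" membership, $\phi_\mem$, simplicity, and ``least tuple''. Existence of minima is automatic because, for a fixed input, the sets in question are finite and nonempty. I also need to check that the two "bounded" constants of the "basis" do not depend on $w$; this is exactly what Lemma~\ref{lem:existence-of-a-basis} provides.
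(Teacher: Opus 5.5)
Your proposal is correct and follows essentially the same route as the paper. For the sections you take the $j$-th basis interval containing $A$ in lexicographic order, padding with the last one. For the covering you take the $j$-th component of the lexicographically least bounded tuple of basis intervals, $d$-simple at $s$, that cover $I$.

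You are slightly more careful than the paper in two places. You note explicitly that the singletons in $B_s^{-1}$ guarantee a nonempty section. You also build the ``intersects $I$'' and ``belongs to $B_s$'' conditions into the selected tuple, whereas the paper leaves both implicit.
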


\begin{proof}
    We first explain how to define the section "formulas", and then the covering ones.
    
    \vskip1em
    \noindent
    \textbf{Sections.}
    Let $\ell_\cont$ be a "bounded" upper bound to the size of the sections.
    Then for every $j\in \{1,\dots,\ell_\cont\}$ we let $\phi_\cont^j(AI_js)$ be the "formula" expressing that either
    \begin{itemize}
        \item there are $>j$ different "intervals" in $B_s$ which contain $A$, and $I_j$ is the $j$-th one lexicographically; or
        \item there are $\leq j$ different "intervals" in $B_s$ which contain $A$, and $I_j$ is the last one lexicographically.
    \end{itemize}
    This is definable because membership in $B_s$ is definable (as well as containing a "configuration@@msosi").

    \vskip1em
    \noindent
    \textbf{Covering.}
    Let $\ell_\cov$ be a "bounded" integer such that for every $d$ and for every "split" $s$, every "interval" which is $d$-"simple" at $s$ is covered by a union of $\ell_\cov$ "intervals" from $B_s$.
    Then we define $\phi_\cov^j(I I_j s)$ by expressing that for $d$ being the "simplicity" of $I$ at $s$, $I_j$ is the $j$-th "interval" in the lexicographically smallest $\ell_\cov$-tuple of "intervals" which are $d$-"simple" at $s$ and cover $I$.
\end{proof}

\subsection{Funnels and their children}\label{sec:funnels}

With a definable basis in hands, we are now ready to introduce funnels, which are meant to encode nodes of the sought yield-Hennie machine, by using (intersections of) intervals from the basis and storing all the information from the root.
Recall that all "splits" have successors except $\lef 1$ and $\rig {|w|}$.

\begin{definition} \AP
    A ""funnel"" is a sequence $I_0,s_0,I_1,s_1,\dots,I_n,s_n$ of "intervals" and "splits", such that
    \begin{enumerate}[1.]
        \item $I_0$ is the "interval" of all "configurations@@msosi";
        \item $s_0$ is $\rig 0$;
        \item\label{item:output-nonempty} $\bigcap_{k} I_k$ is nonempty;
        \item\label{item:interval-from-basis} for every $k$ it holds that $I_k \in B_{s_k}$;
        \item\label{item:directions-of-successors} for every $k$ it holds that $s_{k+1}$ is either $s_k$, the "opposite" of $s_k$ or the "successor" of $s_k$;% (assuming it is defined);
        \item\label{item:simplicity-decreases} for every $k<k'$ such that $s_{k'}$ coincides with $s_{k}$ or its "opposite", the "simplicity" of $I_{k'}$ at $s_{k'}$ is strictly less than the "simplicity" of $I_k$ at $s_k$.
    \end{enumerate}
\end{definition}

%Note that thanks to item~\ref{item:simplicity-decreases}, for every "funnel" $I_0,s_0,\dots,I_n,s_n$ and every "position" $i \in \{0,\dots,|w|+1\}$, there are at most $2(d_0+2)$ indices $k$ such that $s_k \in \{\lef i, \rig i\}$.
The ""output"" $I$ of a "funnel" $I_0,s_0,\dots,I_n,s_n$ is the intersection of the $I_k$'s, which is assumed to be nonempty by item~\ref{item:output-nonempty}.
We say that a "funnel" is ""closed"" if $I$ is a singleton, and that it is ""open"" otherwise.
The next statement provides a notion of children funnels, which will be the foundation for the tree structure of the "yield-Hennie machine".

\begin{lemma}[Definable "children@@funnel"]\label{lem:children-funnels}
    There is a "bounded" integer $\ell$ and for every $j \in \{1,\dots,\ell\}$ "formulas" $\phi_\childint^j,\phi_\childsplsame^j,\phi_\childsplopp^j, \phi_\childsplsucc^j$ such that for every "open" "funnel" $I_0,s_0,\dots,I_n,s_n$ with "output" $I$, the following hold:
    \begin{itemize}
        \item for every $j$, there is a unique "interval" $I^j$ such that $\triangles{w} \models \phi_\childint^j(I I_j s)$;
        \item for every $j$, $\triangles{w}$ models exactly one of the $\phi_{\childsplx}^j(Is)$ for $x \in \{\same,\opp,\suc\}$ and we let $s^j$ denote either the same, the "opposite", or the "successor" of $s_n$, according to $x$;
        \item for every $j$, $I_0,s_0,\dots,I_n,s_n,I^j,s^j$ defines a "funnel";
        \item $I$ is contained in $\bigcup_j I^j$; and
        \item if $I_j=[A_j,B_j]$ and $I_{j'}=[A_{j'},B_{j'}]$ are so that $A_j < A_{j'}$ then $j<j'$.
    \end{itemize}
\end{lemma}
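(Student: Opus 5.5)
The plan is to build the children in two layers. First, decompose the output $I$ at the current split $s=s_n$ using a simplicity witness. Then cover each piece by basis intervals, using the covering formulas of Lemma~\ref{lem:definability-basis-formulas} at the appropriate split. Definability, boundedness and the ordering requirement will all come from choosing everything lexicographically among boundedly many definable candidates.

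\textbf{Step 1: decompose $I$.} Let $d$ be the simplicity of $I$ at $s$; it is at most the simplicity of $I_n$ at $s_n$ because $I\subseteq I_n$. Pick the lexicographically least witness $A_0$, which is definable. Let $[A,A']$ be the hull of the configurations of $I$ that are $s$-equivalent to $A_0$. As in the proof of Lemma~\ref{lem:existence-of-a-basis}, $I$ splits into three parts: a $(d-1)$-simple prefix (at $\bar s$), the well-formed middle $[A,A']$, and a $(d-1)$-simple suffix (at $\bar s$). All three parts are definable from $(I,s)$.
\begin{itemize}
\item The prefix and the suffix become children with split $\bar s$ (case $\opp$). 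Each is covered by at most $\ell_\cov$ intervals of $B_{\bar s}$, given by $\phi^j_\cov$, whose simplicity at $\bar s$ is $\le d-1<d$. Intersecting them with $I$ keeps the output nonempty, since each one meets the piece it covers.
\item For the middle part I would additionally split along consecutive $s$-equivalent occurrences of $A_0$. The gaps between consecutive occurrences are again $(d-1)$-simple at $\bar s$, but there may be unboundedly many of them. To stay bounded, I treat the whole middle as a single child with the successor split $s'$ (case $\suc$), covered by $\phi^j_\cov$ at $s'$. The reason this is the right move: all configurations in the middle that agree with $A_0$ on $s$ are identical on $s$, so the machine has no further information to exploit at $s$ and must advance the cut.
\item The case $\same$ is reserved for re-basing: replacing $I$ by basis intervals at $s$ itself when $d$ is strictly smaller than the simplicity of $I_n$. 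This is used to record a drop in simplicity before moving on.
\end{itemize}

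\textbf{Step 2: funnel axioms and ordering.} For each candidate child, items 1--5 of the funnel definition are immediate:
\begin{itemize}
\item item~\ref{item:interval-from-basis} holds by the covering formulas;
\item item~\ref{item:output-nonempty} holds because each chosen basis interval intersects the piece of $I$ it covers;
\item item~\ref{item:directions-of-successors} holds by the choice of $s^j$.
\end{itemize}
The covering condition $I\subseteq\bigcup_j I^j$ holds by construction. To get the ordering condition, I sort all candidate children by their left endpoint, which is definable, and index them accordingly. I pad to a fixed bounded $\ell$ by repetition; this is harmless since repeated children can be deduplicated.

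\textbf{Main obstacle: item~\ref{item:simplicity-decreases} for successor children.} Opposite and same children compare against $s_n$ with strictly smaller simplicity, but that is not enough on its own. A successor child at position $i\pm1$ must beat \emph{every} earlier index $k$ whose split lies at that position, and refining $s$ to $s'$ can increase simplicity. My first attempt would be an invariant carried by the funnel: for each earlier index $k$ whose split $s_k$ sits at the same position as the new split, the output $I$ is $(e_k-1)$-simple there, where $e_k$ is the simplicity recorded at $k$. The justification would be that the funnel has left that position through an opposite step, which only happens on a $(d-1)$-simple gap.

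If this invariant cannot be established directly, the fallback is to make the children formulas test item~\ref{item:simplicity-decreases} explicitly. This test is definable from the funnel encoding of Section~\ref{sec:encoding-funnels}, or from the bounded per-position history. When the test fails, I would first use a $\same$-step to shrink $I$ into basis intervals of lower simplicity before moving. The Ramsey/compositionality argument of Lemma~\ref{lem:all-intervals-are-d-simple} suggests that this always succeeds, but I expect this step to be the genuinely delicate part of the proof.
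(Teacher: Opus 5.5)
There is a genuine gap, exactly where you place the main obstacle: item~\ref{item:simplicity-decreases} is never established for the $\suc$ children, and the Step~1 decomposition you commit to cannot deliver it. (Your treatment of the prefix and suffix as $\opp$ children, and the final sorting, are fine.)

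\textbf{Why Step~1 fails.} Sending the whole well-formed middle $[A,A']$ to the successor split gives no control over its simplicity at that split. For instance, if $d=0$ the prefix and suffix are empty, and your only child is a basis cover of $I$ itself at the successor. Its simplicity there need not be strictly below the value recorded at the last visit of that cut, so the bounded-visit property is lost. Your justification, that the middle carries no further information at $s$, is also wrong: the middle contains every gap that is not $s$-equivalent to $A_0$.

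\textbf{Why the fallback fails.} The formulas $\phi^j_\childint$ and $\phi^j_\childsplx$ only see $(I,s_n)$, so they cannot test a history-dependent condition. Moreover, the funnel encoding of Section~\ref{sec:encoding-funnels} is built on this lemma, so invoking it here would be circular. A $\same$ step must itself strictly decrease the simplicity at $s_n$, so it can be used only boundedly often, and it does nothing to lower simplicity at the next cut. Finally, your invariant cannot be expected to hold for $I$ itself, and its justification points to the wrong split. The funnel can only have left the successor's cut by a $\suc$ step from $s'$, the opposite of the successor of $s_n$, not by an $\opp$ step.

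\textbf{The missing idea.} Let the comparison of $d_n$ and $\bar d_n$ (the simplicities of $I$ at $s_n$ and $\bar s_n$) decide the move, and decompose $I$ at two splits simultaneously.
\begin{itemize}
\item If $\bar d_n<d_n$, take the single child $(I,\bar s_n)$.
\item If $\bar d_n=d_n$, decompose with respect to $s_n$ and $\bar s_n$; this gives $\same$ and $\opp$ children of simplicity $<d_n$.
\item Only if $\bar d_n>d_n$, which forces the successor to exist, decompose with respect to $s_n$ and $s'$.
\end{itemize}

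The decomposition uses the least witnesses $A$, $A'$ at the two splits and the alternating chain $A_1<A'_1<A_2<A'_2<\cdots$. Here $A_j$ is the first configuration after $A'_{j-1}$ that is $s_n$-equivalent to $A$, and $A'_j$ is the first after $A_j$ that is $s'$-equivalent to $A'$. The chain has bounded length: apply pigeonhole to the $q$-types of $A_jA'_j$ over $s'$ and to the colours at the (at most one) position between the two splits, then use compositionality. Otherwise one would get $A_{j'}<A'_j$ for some $j<j'$, a contradiction.

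Each gap of the chain avoids one of the two classes:
\begin{itemize}
\item if it avoids the class of $A$, it is $(d_n-1)$-simple at $\bar s_n$;
\item if it avoids the class of $A'$, it is $(d'-1)$-simple at $\bar s'$, the successor, where $d'$ is the simplicity of $I$ at $s'$.
\end{itemize}

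This resolves both of your difficulties. First, you cut only at alternations, which are boundedly many, rather than at every occurrence of the class of $A_0$. Second, item~\ref{item:simplicity-decreases} for the $\suc$ children follows: if that cut was visited, its last split is $s'$, and $I\subseteq I_k$ gives that $d'$ is at most the simplicity of $I_k$ at $s'$. This bound is read off from $I$ alone, which is why formulas in $(I,s_n)$ suffice. Only afterwards are the pieces covered by basis intervals via $\phi^j_\cov$ and sorted, as you describe.
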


We say that the "funnels" $I_0,s_0,\dots,I_n,s_n,I^j,s^j$ defined by the above lemma are the ""children@@funnel"" of $I_0,s_0,\dots,I_n,s_n$.

\begin{proof}
    We start by showing the following claim, which will be the foundation for defining the "children@@funnel" (below, ``inter'' stands for ``intermediate'').
    
    \begin{claim}\label{claim:splitting-intervals}
        There is a "bounded" integer $\ell$ such that for every $p \in \{1,2\}$ and every $d,d' \in \{0,\dots,d_0\}$, there exist, for every $j \in \{1,\dots,\ell\}$, "formulas" $\phi_{\dirsucc(p,d,d')}^j$ and $\phi_{\dirsuccsplprime(p,d,d')}$ such that for every "position" $i \leq |w|-p+1$, if we let $s=\rig i$ and $s'=\lef{i+p}$, for every "interval" $I$ which is $d$-"simple" at $s$ and $d'$-"simple" at $s'$, the following holds:
        \begin{itemize}
            \item for every $j$, there is a unique "interval" $I^j$ such that $\triangles{w} \models \phi_{\dirsucc(p,d,d')}^j(I I^j i)$;
            \item for every $j$, if $\triangles{w} \models \phi_{\dirsuccsplprime(p,d,d')}^j(I i)$ then $I^j$ is $(d'-1)$-"simple" at $\bar s'$, otherwise $I^j$ is $(d-1)$-"simple" at $\bar s$; and
            \item $I = \bigcup_j I^j$.
        \end{itemize}
    \end{claim}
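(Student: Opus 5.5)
The plan is to decompose $I$ using the two simplicity witnesses simultaneously. Let $s=\rig i$ and $s'=\lef{i+p}$; then $s$ and $s'$ are disjoint, and $\bar s' = \rig{i+p-1} \supseteq s$. Since $I$ is $d$-"simple" at $s$, we fix a "witness" $A_0$, chosen definably, say the lexicographically least one. Every subinterval of $I$ containing no configuration $s$-"equivalent" to $A_0$ is then $(d-1)$-"simple" at $\bar s$. In the same way, $d'$-simplicity at $s'$ gives a definable witness $A_0'$.

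The configurations of $I$ that are $s$-"equivalent" to $A_0$ split $I$ into maximal gaps, each $(d-1)$-"simple" at $\bar s$. There may be unboundedly many such gaps, so they cannot serve directly as children. The key step is to group consecutive gaps into a bounded number of blocks that are $(d'-1)$-"simple" at $\bar s'$.

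Consider the set $P$ of configurations of $I$ that are $s$-"equivalent" to $A_0$, and the set $P'$ of those that are $s'$-"equivalent" to $A_0'$. A configuration in $P\cap P'$ is determined outside the middle window $\{i+1,\dots,i+p-1\}$, which has at most one position. Hence $P\cap P'$ has at most $|F|$ elements.

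We cut $I$ at these boundedly many points. On each resulting piece $J$, no configuration lies in $P\cap P'$. Any subinterval of $J$ that avoids $P$ is $(d-1)$-"simple" at $\bar s$, and any subinterval that avoids $P'$ is $(d'-1)$-"simple" at $\bar s'$.

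The remaining task is to cover such a piece $J$ by boundedly many intervals, each of which avoids $P$ or avoids $P'$. I expect this to be the main obstacle, since $P$ and $P'$ could a priori alternate unboundedly inside $J$. I would try to rule this out with a Ramsey and "compositionality" argument in the style of Claim~\ref{claim:bad-chains-are-short}. Suppose $X_1<Y_1<X_2<Y_2<\cdots$ in $J$ with $X_k\in P$ and $Y_k\in P'$, alternating many times. Take homogeneous types of pairs over $s$, over $s'$ and over the middle window. Then mix configurations, taking $s$ from $X$, $\bar s$ from $Y$, and the middle window agreeing with one of them. This produces a configuration in $P\cap P'$ lying between two elements of the chain, contradicting the fact that $J$ avoids $P\cap P'$.

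If that contradiction works, the alternation is bounded, by $m$ say. Then $J$ splits into at most $m$ maximal runs, each avoiding $P$ or avoiding $P'$. Adding the singletons from $P\cap P'$ as $(-1)$-"simple" intervals, we get a cover by at most $\ell=(|F|+1)(m+1)$ intervals that partition $I$, so $I=\bigcup_j I^j$ holds.

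For definability, the witnesses are chosen lexicographically least. The cut points and run boundaries are the $j$-th elements in order of a definable set, namely the configurations where membership in $P$ or $P'$ switches. Each $I^j$ is thus defined by an MSO formula with $i$ as a parameter. When there are fewer than $j$ pieces, we repeat the last one, so that uniqueness holds.

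Finally, $\phi_{\dirsuccsplprime(p,d,d')}^j$ holds exactly when the $j$-th piece avoids $P'$; these pieces are $(d'-1)$-"simple" at $\bar s'$. All other pieces avoid $P$, or are singletons, and so are $(d-1)$-"simple" at $\bar s$.
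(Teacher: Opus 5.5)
\textbf{Gap: the alternation bound.} Your overall plan has the right shape, and it is what the paper does with its greedy chain $A_1<A'_1<A_2<\cdots$: use both witnesses, bound how often configurations $s$-equivalent to $A_0$ and configurations $s'$-equivalent to $A'_0$ can alternate inside $I$, then take maximal runs. But the step you flag as the crux, bounding the alternation, is left conditional, and the mechanism you sketch for it does not work.

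Take $p=1$, so $\bar s=s'$. Every $X\in P$ has $X_s=(A_0)_s$ and every $Y\in P'$ has $Y_{\bar s}=(A'_0)_{\bar s}$. So the mixed colouring with $s$-part from an $X$ and $\bar s$-part from a $Y$ is always the same fixed colouring $(A_0)_s\otimes(A'_0)_{\bar s}$, whatever homogeneity you impose on the chain; for $p=2$ there are at most $|F|$ candidates. Compositionality only says that whether it satisfies $\phi_\pos$ depends on the $s$-type of $A_0$ and the $\bar s$-type of $A'_0$. In Claim~\ref{claim:bad-chains-are-short} the mixed configurations come from a single homogeneous sequence and therefore share their $s$- and $\bar s$-types. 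Here the $X$'s and $Y$'s need not share them, so nothing makes this product the type of a configuration. If it is not a configuration, then $P\cap P'=\emptyset$, your cut is vacuous, and no contradiction can come from it, yet you still have to bound the alternation. Even when it is a configuration, you give no reason why it would lie between two chain elements.

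\textbf{Missing idea: use $\phi_<$ on a crossing pair.} The contradiction should come from $\phi_<$ applied to a crossing pair, not from producing a new configuration. Take $X_1<Y_1<\cdots<X_m<Y_m$ in $I$ with $X_k\in P$ and $Y_k\in P'$. Pigeonhole on the $q$-type of $(X_kY_k)_{s'}$, together with the values of $X_kY_k$ on the middle window, gives $j<j'$ with the same data. Since all $X$'s coincide on $s$ and all $Y$'s coincide on $s'$, the pair $X_{j'}Y_j$ agrees with $X_jY_j$ on $s$ and on the middle window. On $s'$ it agrees with $X_{j'}Y_{j'}$, whose $q$-type there equals that of $X_jY_j$. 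By compositionality $X_{j'}Y_j$ and $X_jY_j$ have the same $q$-type, so $\phi_<$ gives $X_{j'}<Y_j$, contradicting $Y_j<X_{j+1}\le X_{j'}$.

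This bounds $m$ by the number of $F^2$-types of rank $q$ times $|F|^{2(p-1)}$. It needs no assumption on $P\cap P'$, so your preliminary cut is superfluous. This is exactly the paper's subclaim, and after it your covering and definability steps go through as written.
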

    
    \begin{claimproof}
        Let $A$ and $A'$ be the least "configurations@@msosi" respectively witnessing that $I$ is $d$-"simple" at $s$ and $d'$-"simple" at $s'$; note that these are definable.
        Let $A_1 < A'_1 < A_2 < A'_2 < \cdots < A_n < A'_n$ be the chain of "configurations@@msosi" constructed by induction as follows:
        \begin{itemize}
            \item $A_1$ is the first "configuration@@msosi" which is $s$-"equivalent" to $A$;
            \item for every $j$, $A'_j$ is the first "configuration@@msosi" $> A_j$ which is $s'$-"equivalent" to $A'$; and
            \item for every $j>1$, $A_j$ is the first "configuration@@msosi" $> A'_{j-1}$ which is $s$-"equivalent" to $A$.
        \end{itemize}
    
        Note that the $(A_j)$'s are all $s$-"equivalent" and the $A'_{j}$'s are $s'$-"equivalent".
        This implies the following result.
    
        \begin{subclaim*}
            It holds that $n$ is "bounded".
        \end{subclaim*}

        \vskip1em
        \item[\hskip\labelsep
        \color{lipicsGray}\sffamily
    Proof of the subclaim.]
    
            Assume that $n$ is strictly greater than 
            \begin{center}
                (number of $\Xf X {F^2}$-"types" of "quantifier rank"~$q$) $\times \ |F|^{2(p-1)}$. 
            \end{center}
            Let $\midd$ be the empty set if $p=1$ and the singleton position $\{i+1\}$ if $p=2$, so that we have $s \ \cup \ \midd \ \cup \ s' = \{0,1,\dots,|w|+1\}$, the set of all positions.
            By the pigeonhole principle, there exists two indices $j<j'$ such that
            \begin{align*}
                \typ q {(A_j  A'_j)_{s'}} {{\triangles{w}}_{s'}} = \typ q {(A_{j'}  A'_{j'})_{s'}} {{\triangles{w}}_{s'}}\quad \text{ and } \quad (A_j A'_j)_\midd = (A_{j'} A'_{j'})_\midd.
            \end{align*}
            Since $A'_j$ and $A'_{j'}$ are $s'$-"equivalent", this implies
            \begin{align*}%\tag{1}\label{eq:1}
                \typ q {(A_j  A'_j)_{s'}} {{\triangles{w}}_{s'}} = \typ q {(A_{j'}  A'_{j})_{s'}} {{\triangles{w}}_{s'}},
            \end{align*}
            and since moreover, $A_j$ and $A_{j'}$ are $s$-"equivalent", we also have
            \begin{align*}%\tag{2}\label{eq:2}
                \typ q {(A_j A'_j)_{s}} {{\triangles{w}}_s}= \typ q {(A_{j'} A'_{j})_{s}} {{\triangles{w}}_s}.
            \end{align*}
            Finally, since $(A_j A'_j)_\midd = (A_{j'} A'_{j'})_\midd$ we have
            \begin{align*}%\tag{3}\label{eq:3}
                (A_j A'_j)_\midd = (A_{j'} A'_j)_\midd.
            \end{align*}
            Thanks to the "compositionality" lemma applied to the alphabet $\Sigma \times F^2$, the three above equations give %equations~\eqref{eq:1},~\eqref{eq:2} and~\eqref{eq:3} give
            \[
                \typ q {A_j A'_j} {w} = \typ q {A_{j'} A'_j} w.
            \]
            Therefore, since $\phi_< \in \typ q {A_j  A'_j} w$, we have $\phi_< \in \typ q {A_{j'}  A'_j} w$ \ie $A_{j'} < A'_j$, a contradiction.
            Hence the subclaim is proved.
            \vskip1em
    
        Note that since $n$ is "bounded", the $A_j$'s and $A'_j$'s are definable.
        Let $I^1,I'^1,\dots,I'^n,I^{n+1} \subseteq I$ be the maximal "intervals" $\subseteq I$ such that 
        \[
            I^1 < A_1 < I'^1 < A'_1 < I^2 < \cdots < I^n < A_n < I'^{n} < A'_n < I^{n+1},
        \]
        where $I<A$ means that all "configurations@@msosi" in $I$ are $<A$, and likewise for $A<I$.
        (Clearly the $I^j$'s and $I'^j$'s are definable.)
        Then by construction of the sequence $A_1, A'_1,\dots$, it holds that for every $j$, the "configurations@@msosi" in $I^{j}$ are not $s$-"equivalent" to $A$.
        Assuming $d>0$, since $A$ "witnesses" that $I$ is $d$-"simple" at $s$, this implies that for all $j$ it holds that $I^j$ is $(d-1)$-"simple" at $\bar s$.
        If $d=0$, which means that all "configurations@@msosi" in $I$ are $s$-"equivalent" to $A$, by construction of the sequence the $I^j$'s are empty and hence can be removed.
        An analogous proof holds for the $I'^j$'s, which concludes the proof of the claim.
    \end{claimproof}

    Let $I_0,s_0,\dots,I_n,s_n$ be an "open" "funnel" with "output" $I=\bigcap_k I_k$.
    To define its "children@@funnel", we will proceed in two steps.
    First, we will apply Claim~\ref{claim:splitting-intervals}, either to $s_n$ and its "opposite" (\ie with $p=1$) or to $s_n$ and the "opposite" of its "successor" (\ie $p=2$; in this case, we will need to verify that the "successor" of $s_n$ is indeed defined), depending on the "simplicities" of $I$ at $s_n$ and its "opposite".
    This will give us (definable) "intervals" $I^1,\dots,I^\ell$ which cover $I$, and we will prove that their "simplicities" are low enough so that item~\ref{item:simplicity-decreases} from the definition of "funnels" is satisfied.
    The remaining issue is that the $I^j$'s are not in the "basis", so the second step will be simply to decompose each $I^j$ into a bounded number of "basis" "intervals" (for the corresponding "split").
    We will refer to the "intervals" $I^1,\dots,I^\ell$ and the associated "splits" (see below) obtained in the first step as the intermediate "children@@funnel"; formally, these do not define "funnels" since they are not in the "basis".

    \vskip1em
    \noindent
    \textbf{First step: intermediate "children@@funnel".}
    We will need the following easy technical statement.

    \begin{claim}\label{claim:technical-opposite}
        Let $I_0,s_0,I_1,s_1,\dots,I_n,s_n$ be a "funnel", and let $k$ be such that $s_k$ is the "successor" of $s_n$.
        There exists $k' > k$ such that $s_{k'}$ is the "opposite" of $s_k$.
    \end{claim}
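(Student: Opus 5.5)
The plan is to reformulate splits as pairs (cut, direction) and then do a discrete intermediate-value argument along the sequence $s_k,\dots,s_n$.

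\textbf{Reformulation.} Identify each "split" with a pair $(c,\dir)$, where $c$ is the \emph{cut}, i.e.\ the gap between two consecutive "positions", and $\dir\in\{\leftarrow,\rightarrow\}$. Concretely, $\rig i$ is the cut between $i$ and $i+1$ with direction $\rightarrow$, and $\lef{i+1}$ is the same cut with direction $\leftarrow$. In this language:
\begin{itemize}
  \item the "opposite" of a "split" keeps the cut and flips the direction;
  \item the "successor" keeps the direction and moves the cut by one gap in that direction (right for $\rightarrow$, left for $\leftarrow$).
\end{itemize}
By item~\ref{item:directions-of-successors} of the definition of "funnels", along the sequence $s_0,\dots,s_n$ the cut changes by at most one gap per step. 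Moreover, a change of cut happens only via a "successor" move, and it goes in the direction of the "split" being left.

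\textbf{Main step.} By symmetry, assume $s_n=\rig i$, so $s_k=\rig{i+1}$, which has cut $c+1$ (writing $c$ for the cut of $s_n$) and direction $\rightarrow$. Since $s_k\neq s_n$, we have $k\neq n$; I take the statement to mean $k<n$, the situation in which it is used. Let $m\in\{k+1,\dots,n\}$ be the first index after $k$ whose cut is $\leq c$; it exists because $s_n$ has cut $c$. Because cuts move by at most one per step, $s_{m-1}$ has cut exactly $c+1$ and $s_m$ has cut $c$. Hence $s_m$ is the "successor" of $s_{m-1}$, and this move decreased the cut, so $s_{m-1}$ has direction $\leftarrow$. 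The "split" with cut $c+1$ and direction $\leftarrow$ is $\lef{i+2}$, which is the "opposite" of $s_k=\rig{i+1}$. Finally $m-1\geq k$, and $m-1\neq k$ because $s_k$ has direction $\rightarrow$. So $k'=m-1>k$ works. The case $s_n=\lef i$ is the mirror image.

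\textbf{Main obstacle.} There is no real difficulty. The only care needed is the bookkeeping of the encoding: checking that "opposite" and "successor" act on (cut, direction) as described, including the index shift between $\rig i$ and $\lef{i+1}$. Notably, the simplicity condition (item~\ref{item:simplicity-decreases}) is not needed; the claim is purely about how consecutive "splits" can move.
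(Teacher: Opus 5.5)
Your proof is correct and follows the paper's route. The paper says only that the claim follows immediately from item~\ref{item:directions-of-successors} of the definition of funnels, which requires consecutive splits to be equal, opposite, or successors. Your cut-and-direction intermediate-value argument spells out exactly that remark, and you are right that the simplicity condition is not needed. Also, $k<n$ needs no extra assumption: $k\in\{0,\dots,n\}$ and $s_k\neq s_n$.
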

        
    \begin{claimproof}
        This follows immediately from item~\ref{item:directions-of-successors} from the definition of "funnels".
    \end{claimproof}

    Let $d_n$ and $\bar d_n$ denote the "simplicities" of $I$ at $s_n$ and at $\bar s_n$ (which are definable).
    We distinguish three cases (where the case "split" is definable).
    In every case, we give a definition of the intermediate "children@@funnel" $I^j,s^j$, argue that there are "formulas" defining them (as in the statement of the lemma), and argue that all items in the definition of a "funnel" hold for $I_0,s_0,\dots,I_n,s_n,I^j,s^j$, except for item~\ref{item:interval-from-basis} about the "basis", which is guaranteed in the second step.
    
    The only difficulty will be to prove item~\ref{item:simplicity-decreases}.
    For this item, since we add an "interval" $I^j$ and a "split" $s^j$ to an existing "funnel", it suffices to prove that the "simplicity" of $I^j$ at $s^j$ is less than the "simplicity" of $I_k$ at $s_k$, where $k$ is maximal such that $s_k$ is either $s^j$ or its "opposite" (if there is no such $k$, then there is nothing to prove, so we assume that $k$ is well defined).
    \begin{itemize}
        \item If $\bar d_n < d_n$.
        Then we set $I^j=I$ and $s^j=\bar s_n$ for all $j$, which are obviously definable.
        Then $k=n$ and the "simplicity" of $I^j$ at $s^j$ is $\bar d_n$, which is strictly less than the "simplicity" $d_n$ of $I$ at $s_n$, and therefore also strictly less than the "simplicity" of $I_n$ at $s_n$ because $I \subseteq I_n$.
        \item If $\bar d_n = d_n$.
        Note that since the "funnel" is "open", $d_n \geq 0$.
        In this case we apply Claim~\ref{claim:splitting-intervals} to $s_n$ and $\bar s_n$ (\ie with $p=1$), which gives "intervals" $I^1,\dots,I^\ell$ and "splits" $s^1,\dots,s^\ell \in \{s_n, \bar s_n\}$, definable from $I$, such that $I = \bigcup_j I^j$ and $I^j$ is $(d_n-1)$-"simple" at $s^j$, i.e~the "simplicity" of $I^j$ at $s^j$ is $<d_n$.
        We again have $k=n$, and the "simplicity" of $I^j$ at $s^j$ is strictly less than the "simplicity" $d_n$ of $I$ at $s_n$, so it is strictly less than the "simplicity" of $I_n$ at $s_n$.
        \item If $\bar d_n > d_n$.
        In particular, $\bar d_n>0$ hence $\bar s_n \notin {\rig 0,\lef{|w|+1}}$ and so $s_n \notin {\lef 1,\rig{|w|}}$.
        Therefore $s_n$ has a "successor", and we let $s'$ denote the "opposite" of the "successor" of $s_n$, so that $s$ and $s'$, in some order, are of the form $\rig i$ and $\lef{i+2}$.
        We apply Claim~\ref{claim:splitting-intervals} to $s_n$ and $s'$, with $p=2$.
        Let $d'$ denote the "simplicity" of $I$ at $s'$.
        Then the claim gives "intervals" $I^1,\dots,I^\ell$ and "splits" $s^1,\dots,s^\ell \in \{\bar s_n,\bar s'\}$ which are definable from $I$, such that $I = \bigcup_j I^j$, and $I^j$ is $(d_n-1)$-"simple" at $s^j$ if $s^j= \bar s_n$, and it is $(d'-1)$-"simple" at $s^j$ if $s^j= \bar s'$.
        We now distinguish two cases.
        \begin{itemize}
            \item If $s^j=\bar s_n$.
            Then $k=n$, and the result holds has previously because $I^j$ is $(d_n-1)$-"simple".
            \item Otherwise, $s^j=\bar s'$ (which is the "successor" of $s_n$).
            In this case, by Claim~\ref{claim:technical-opposite} applied to the "funnel" $I_0,s_0,\dots,I_n,s_n$, we know that $s_k=s'$.
            Since $I \subseteq I_k$, the "simplicity" $d'$ of $I$ at $s'$ is smaller or equal to the one of $I_k$.
            We conclude since moreover $I^j$ is $(d'-1)$-"simple" at $s^j$.
        \end{itemize}
    \end{itemize}

    \vskip1em
    \noindent
    \textbf{Second step: decomposition in the basis.}
    We have proved in the first step that there are definable "intervals" and "splits" $I^j,s^j$ satisfying all the wanted properties, except that $I^j$ may not belong to $B_{s_j}$.
%    (Note that we also have $I_j \subseteq I$ since $I=\bigcup_j I^j$.)
    To overcome this, it suffices to use $\phi_\cov^1,\dots,\phi_\cov^{\ell_\cov}$ from Lemma~\ref{lem:definability-basis-formulas}, to define, for each $j$, "intervals" $I^{j,m} \in B_{s_j}$ for $m \in \{1,\dots,\ell_\cov\}$ so that $I^j \subseteq \bigcup_m I^{j,m}$, the $I^{j,m}$'s intersect $I^j$ (and therefore also $I$ since $I_j \subseteq I$) and the "simplicities" of $I^{j,m}$ at $s^j$ are smaller than the one of $I^j$.
    Hence for every $j,m$, it holds that $I_0,s_0,\dots,I_n,s_n,I^{j,m},s^{j,m}$ defines a "funnel", where $s^{j,m}=s^j$.
    Since there is a "bounded" number of $(j,m)$'s, it is easy to guarantee the last item in the lemma by reordering the "children@@funnel" accordingly (and deleting those that are not maximal).
    This concludes the proof of the lemma.
\end{proof}

We say that a "funnel" $F$ is ""reachable"" if there is a sequence $F_0,F_1,\dots,F_n=F$, where $F_0=I_0,\rig 0$ is the unique "funnel" of "length" $1$, which we call the root "funnel", and for each $k$, $F_{k+1}$ is one of the $\ell$ "children@@funnel" of~$F_k$.
\subsection{Encoding reachable funnels}\label{sec:encoding-funnels}

We should now explain how reachable "funnels" are encoded as colourings of the input word, in a way that can be updated locally (we refer to the overview (Section~\ref{sec:overview}) for an explanation why this is necessary).

\subparagraph*{High level description.}
We want to encode a "reachable" "funnel" $I_0,s_0,\dots,I_n,s_n$ by using a single monadic variable.
This is a non-trivial task, because such a "funnel" contains too many "intervals" $I_0,I_1,\dots$ to be written explicitly using "configurations@@msosi" as $[A_0,B_0],[A_1,B_1]$, and so on.
However, note that thanks to item~\ref{item:simplicity-decreases}, the sequence of "splits" $s_0,s_1,\dots$ can be encoded in a monadic variable.

The idea will be to encode a "reachable" "funnel" $I_0,s_0,I_1,s_1,\dots$ as a sequence $s_0,j_0,s_1,j_1,\dots$ where each $j_k$ belongs to $\{1,\dots,\ell\}$ (where $\ell$ is from Lemma~\ref{lem:children-funnels}) and for every $k$ it holds that $I_0,s_0,\dots,I_{k+1},s_{k+1}$ is the $j_k$-th "child@@funnel" of $I_0,s_0,\dots,I_{k},s_k$.
\AP This is easily encoded as a monadic variable using ""labelled"" "tilings", defined below.
To recover the wanted "intervals" from the encoding, we proceed as follows: guess a "configuration@@msosi" $A$ from the (unknown) "output" $I$ of the "funnel", and guess, for each $k$ (using a $\ell_\cont$-"labelled" "tiling"), an index $p_k \in \{1,\dots,\ell_\cont\}$ such that $I_k$ is the $p_k$-th sectioning "interval" for $A$ and $s_k$ (\ie the only "interval" satisfying $\phi_\cont^{p_k}(A I_k s_k)$), then using the "formulas" from Lemma~\ref{lem:children-funnels}, verify the consistency of these guesses for every $k$ (by quantifying universally over "positions").

\subparagraph*{Tilings.} \AP
A ""step@@tiling"" is an element of $\{\same,\opp,\suc\}$, where $\opp$ stands for ``"opposite"'' and $\suc$ for ``successor''.
A ""tile"" is a word $\tile = \step_1 \step_2\dots $ of "steps@@tiling" of "length" $\leq 2(d_0+2)$, such that $\step_1 \neq \same$.
To disambiguate, the "positions" $1,\dots,n$ of a "tile" are called its levels, hence $\step_h$ is the step at level $h$, or the $h$-th step.
Since their "length" (\ie the number of levels) is "bounded", there are finitely many tiles, and we let $\Lambda$ denote the finite set of tiles.

\AP
A ""block"" in a "tile" is a maximal factor from $(\opp + \suc) \same^*$; note that by our assumption that the first step is not $\same$, every "tile" can be uniquely decomposed into blocks.
Blocks starting with $\opp$ are called $\opp$-blocks and blocks starting with $\suc$ are called $\suc$-blocks. \AP
Each "block" has an ""entry direction"" and an ""exit direction"", defined by induction as follows:
\begin{itemize}
    \item the "entry direction" of the first "block" is $\rightarrow$; \AP
    \item the "entry direction" of a "block" which is not the first is the opposite of the "exit direction" of the previous "block";
    \item the "exit direction" of a $\suc$-"block" matches its "entry direction", and the "exit direction" of a $\opp$-"block" is the "opposite" of its "entry direction".
\end{itemize}
We also say that a "block" is a ""left exit"" if its "exit direction" is $\leftarrow$, and similarly for ""left entries"", ""right exits"" and ""right entries"".
Finally, we say that a "tile" is a left "tile" if its last "block" is a "left exit", and that it is a right "tile" otherwise.
%Two tilings, given in some order, are "compatible", if the number of blocks with "exit direction" $\rightarrow$ in the first "tile" matches the number of blocks with "entry direction" $\rightarrow$ in the second "tile", and the number of blocks with "exit direction" $\leftarrow$ in the second "tile" matches the number of blocks with "entry direction" $\leftarrow$ in the first "tile".
\AP
A ""tiling"" (of the fixed input string $\triangles w$) is a map $T$ from "positions" $i \in \{0,1,\dots,|w|,|w+1|\}$ of $\triangles{w}$ to the finite set $\Lambda$ of tiles.
We now define a notion of validity of "tiling", which is meant to capture whether a "tiling" induces a well-behaved sequence of "splits", as in the following example.

\begin{example}\label{ex:tiling}
    An example of a "tiling" (on the left) and the associated sequence of "splits" (on the right):
\[
    \begin{array}{c|c|c|c||c|c}
       T(0)     & T(1)    & T(2)   & T(3) & k & \text{split } s_k\\\hline
       \suc     &         &        &      & 0 & \rig 0\\
                & \opp    &        &      & 1 & \lef 1\\
                & \same   &        &      & 2 & \lef 1\\
                & \same   &        &      & 3 & \lef 1\\
        \opp    &         &        &      & 4 & \rig 0\\
                & \suc    &        &      & 5 & \rig 1\\
                &         & \suc   &      & 6 & \rig 2\\
                &         & \same  &      & 7 & \rig 2\\
                &         &        & \opp & 8 & \lef 3\\
                &         & \suc   &      & 9 & \lef 2\\
    \end{array}
\]
\end{example}
\AP
Consider a "tiling" $T$ with the property that the leftmost "tile" $T(0)$ has no "left exit" and the rightmost "tile" $T(i+1)$ has no "right exit".
Consider the directed graph $G_T$ whose vertices are the set of all blocks from tiles in $T$ and edges are defined as follows.
Let $B$ be a "block" of the $i$-th "tile" $T(i)$ and assume that $B$ is a "right exit" (in particular, $i\leq w$), and let $m$ be such that $B$ is the $m$-th "right exit" in $T(i)$.
If $T(i+1)$ has $\geq m$ "left entries", then $B$ has a single outgoing edge to the $m$-th "left entry" of $T(i+1)$.
Otherwise, $B$ has no outgoing edge.
\AP
We say that $T$ is ""valid"" if the leftmost "tile" has a "left entry" and no "left exit", the rightmost "tile" has no "right exit", and moreover, $G_T$ defines a directed path.
Note that for a "valid" "tiling", the source (\ie the unique vertex with no incoming edge) is the first "block" of the leftmost "tile", since this "block" cannot have any incoming edge.
A "valid" "tiling" has a unique "block" with no outgoing edge, which is called its ""live"" "block". \AP
We refer to the "position" $i$ of the "live" "block" as the "live" "position" of the "tiling".
Note also that in a "valid" "tiling", we have a total order on the blocks, which is obtained by visiting $G_T$ from its source to its "live" "block".
\AP
A ""location"" in a "tiling" $T$ is given by a pair $(i,h)$ where $i$ is a "position" and $h$ is a level in $T(i)$.
We get a total order on the "locations" by setting $(i,h)<(i',h')$ if the two "locations" are from different blocks $B<B'$, or the two "locations" are from the same "block" (which implies $i=i'$) and $h<h'$.
The following lemma states that being "valid", as well as this total order on "locations", can be defined using "formulas".

\begin{lemma}\label{lem:definability-before}
    There are "formulas" $\phi_\valid$ and $\phi_\before^{h,h'}$, for $h,h' \in \{1,\dots,2(d_0+2)\}$ such that for every "tiling" $T$, it holds that
    \[
        \triangles w \models \phi_\valid(T) \quad \iff \quad \text{ T is valid}
    \]
    and assuming $T$ is "valid", for every pair of "locations" $(i,h),(i',h')$, it holds that
    \[
        \triangles{w} \models \phi_\before^{h,h'}(Tii') \quad \iff \quad (i,h)<(i',h') \text{ in } T.
    \]
\end{lemma}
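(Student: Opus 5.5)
The approach is to treat the tiling $T$ as a string over the finite alphabet $\Lambda$, aligned with $\triangles w$, and to notice that everything in the statement depends only on this $\Lambda$-string. Whether a finite $\Lambda$-string encodes a valid tiling, and how its blocks and locations are ordered, can be decided by a one-pass finite automaton, or equivalently by MSO on strings. The letters of $\triangles w$ are ignored. I would define $\phi_\valid$ and $\phi_\before^{h,h'}$ directly in MSO, using $\Lambda$-variables in the paper's extended syntax, so that $T$ is simply an $\Xf X \Lambda$-variable.

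\textbf{Validity.} The edges of $G_T$ are local: a right exit of $T(i)$ is linked to a left entry of $T(i+1)$, and symmetrically a left exit of $T(i+1)$ is linked to a right entry of $T(i)$ (this symmetric case follows the paper's "opposite" convention). Which edges exist is determined by the pair $T(i)T(i+1)$ and by the indices $m$. The conditions on the boundary tiles are first-order over $T(0)$ and $T(|w|+1)$.

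The condition that requires work is that $G_T$ is a single directed path. Each vertex has out-degree at most one, and I would check in-degree at most one locally. Given this, $G_T$ is a path exactly when it is connected and acyclic, which is equivalent to saying that every block is reachable from the source block (first block of $T(0)$) and that exactly one block has no outgoing edge. I would express reachability in the usual MSO way. A set of blocks is encoded as a $2^{\Lambda\text{-levels}}$-colouring: a monadic variable $\Xf Y G$ with $G$ the finite set of subsets of $\{1,\dots,2(d_0+2)\}$, recording which blocks of each tile lie in the set. The formula then states that every $G_T$-closed set containing the source contains all blocks. Closure under edges is a local check on consecutive positions, so $\phi_\valid$ is definable. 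Acyclicity needs no separate check: if every block is reachable from the source and out-degrees are at most one, then the walk from the source covers all vertices. A cycle could only occur if the source itself had an incoming edge, and this is excluded because the first block of $T(0)$ is a left entry whose in-edge would have to come from position $-1$.

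\textbf{Order on locations.} Within a single block, levels are ordered by $h$, which is immediate. For blocks $B$ in $T(i)$ and $B'$ in $T(i')$, $B<B'$ holds exactly when $B'$ is reachable from $B$ in $G_T$ and $B\neq B'$. This is again expressed by quantifying over all $G_T$-closed sets containing $B$ and requiring that they contain $B'$. The blocks containing levels $h,h'$ are determined by the tiles $T(i),T(i')$, and there are finitely many cases, so $\phi_\before^{h,h'}(Tii')$ is a finite disjunction over pairs of tiles.

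I expect the main obstacle to be the careful bookkeeping of the edge relation in both directions, in particular the index matching between the $m$-th exit and the $m$-th entry. It also has to be consistent with the entry/exit directions, which are defined inductively inside each tile, and I need to confirm that "in-degree at most one together with reachability from the source" really characterises a path. All of this is finite case analysis, local to pairs of tiles.
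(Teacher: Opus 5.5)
Your proof is correct, but it takes a different route from the paper.

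\textbf{The paper's route.} The paper never expresses reachability in $G_T$.
\begin{itemize}
\item For $\phi_\valid$, it checks the conditions on $T(0)$ and $T(|w|+1)$. At each boundary between positions $i$ and $i+1$, it compares the number of exits of one tile through that boundary with the number of entries of the neighbouring tile through it. It requires that exactly one of these comparisons is unbalanced, namely the one at the exit of the live block.
\item For $\phi_\before^{h,h'}$, it guesses a valid tiling $T'$ whose tiles are prefixes of those of $T$, i.e.\ a truncation of the path. It requires $T'$ to contain $(i,h)$ but not $(i',h')$.
\end{itemize}

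\textbf{Your route.} You encode sets of blocks by a monadic variable ranging over sets of levels. You then use the closed-set formulation of reachability, both for validity and for the order.

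\textbf{What each approach buys.} Your version checks the definition of validity literally. Local counting alone only guarantees that $G_T$ is a path from the first block of $T(0)$ to a unique dead end, possibly together with some disjoint cycles. Excluding those cycles needs an extra argument, which the paper leaves implicit: a rightward crossing with index $m$ at a boundary is always answered by the leftward crossing with index $m$. In your formulation this argument is replaced by the reachability check. Your observation that, with in-degree at most one, any cycle would have to contain the source is exactly what closes the case.

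The paper's formulation is more local, closer to what a finite automaton scanning $T$ would verify. It also reuses $\phi_\valid$ to define the order instead of a second reachability formula.
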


\begin{proof}
    Since there are finitely many tiles, "formulas" may refer to any property of tiles, as well as their levels $h$ (by ``hard-coding'' them). 
    For $\phi_\valid$, we check that
    \begin{itemize}
        \item $T(0)$ has a unique left entry and no left exits, and $T(|w|+1)$ has no right exits;
        \item for every "position" $i<|w|+1$, the number of right exits of $T(i)$ is $\leq$ the number of left entries of $T(i+1)$;
        \item for every "position" $i>0$, the number of left exits of $T(i+1)$ is $\leq$ the number of right entries of $T(i)$; and
        \item among the $2(|w|+1)$ inequalities from the two above items, exactly one of them is strict.
    \end{itemize}
%    Note that assuming $T$ is "valid", this also gives us a way to define which "location" $(i,h)$ is maximal: this is obtained by letting $i$ correspond to the unique strict equality, and taking $h$ maximal (\ie $h=|T(i)|$).

    Hence to define whether $(i,h) < (i',h')$, we guess whether there is a "tiling" $T'$ such that
    \begin{itemize}
        \item $T'$ is "valid";
        \item for each $i$, the "tile" $T'(i)$ is a prefix of $T(i)$;
        \item $(i',h')$ is not a "location" in $T'$ (\ie $h'>|T'(i)|$); and
        \item $(i,h)$ is a "location" in $T'$ (\ie $h \leq |T(i)|$).\qedhere
    \end{itemize}
\end{proof}

We say that a sequence of "splits" $s_0,s_1,\dots,s_n$ is well-formed if $s_0=\rig 0$, it satisfies item~\ref{item:directions-of-successors} from the definition of "funnels", \ie for every $k<n$, it holds that $s_{k+1}$ is either $s_k$, the "opposite" of $s_k$ or the "successor" of $s_k$, and moreover, for every "position" $i$, there are at most $2(d_0+2)$ "splits" $s_k$ in $\{\lef i, \rig i\}$.
It is easy to see that item~\ref{item:simplicity-decreases} from the definition of "funnels" (together with Lemma~\ref{lem:all-intervals-are-d-simple}) implies that for every "funnel" $I_0,s_0,\dots,I_n,s_n$, the corresponding sequence of "splits" is well-formed.

Consider a "valid" "tiling" $T$.
Then every "location" $(i,h)$ induces a "split" $s$ whose "position" is~$i$ and whose direction is $\leftarrow$ if $(i,h)$ belongs to a "block" with a "left exit", and $\rightarrow$ otherwise (note that $s$ is indeed a "split", because the leftmost "tile" has no "left exit" and the rightmost "tile" has no "right exit").
By ordering them according to the above total order on "locations", $T$ induces a sequence of "splits" $s_0,\dots,s_n$; it is a direct check that this sequence is well-formed.
Conversely, a well-formed sequence of "splits" $s_0,\dots,s_n$ induces a "valid" "tiling" $T$ (which induces that sequence of "split"), which is easily constructed by induction on $n$.

Therefore, "valid" tilings can be used to encode well-formed sequences of "splits", and each "split" $s_k$ corresponds to a "location" $(i,h)$ in $T$.
We now add labels to the tilings, so as to be able to encode "funnels".

\AP Given a nonempty finite set $F$, an ""$F$-labelled tile"" is just like a "tile", except that steps (\ie the letters of the tiles) are now from $\{\same,\opp,\suc\} \times F$.
We let $\Lambda^F$ denote the finite set of "$F$-labelled tiles" which we abbreviate as $\Lambda^\ell$ for $F=\{1,\dots,\ell\}$.
We write their letters as $\same^f, \opp^f$ or $\suc^f$ for $f \in F$.
The definitions of "valid" $F$-"labelled" tilings, and the analogue of Lemma~\ref{lem:definability-before} holds.
The "tiling" induced by a "labelled" "tiling" is obtained by removing all the labels.
%We write $\Xf T F$ for $F$-"labelled" tilings, and simplify the notation to $\Xf T \ell$ for $F=\{1,\dots,\ell\}$, which we call $\ell$-"labelled" tilings.

Likewise, an $F$-"labelled" (well-formed) "split" sequence is just like a (well-formed) "split" sequence except that every "split" also comes with a label $f \in F$.
As in the case of tiles and tilings, we write labels as exponents, \eg $s_0^{f_0},\dots,s_n^{f_n}$.
There is a one-to-one correspondence between $F$-"labelled" tilings and $F$-"labelled" well-formed "split" sequences, defined just like in the unlabelled case.
When $F=\{1,\dots,\ell\}$, we speak of $\ell$-"labelled" tilings and $\ell$-"labelled" "split" sequences.

\subparagraph*{Encoding of "funnels"}
We encode a "reachable" "funnel" as an $\ell$-"labelled" "tiling", where $\ell$ is given by Lemma~\ref{lem:children-funnels}, \ie such that every "open" "funnel" has $\ell$ "children@@funnel".
Define an ""address"" to be a nonempty sequence $1=j_0,j_1,\dots,j_{n}$ of integers in $\{1,\dots,\ell\}$. \AP
An "address" induces a "funnel" (which may or may not be defined) as follows:
\begin{itemize}
    \item the "address" $j_0=1$ induces the root "funnel" $I_0,\rig 0$, where $I_0$ is the "interval" of all "configurations@@msosi";
    \item assuming $j_0,\dots,j_{n-1}$ induces an "open" "funnel" $F=I_0,s_0,\dots,I_{n-1},s_{n-1}$ with "output" $I$, the "address" $j_0,\dots,j_n$ induces the "funnel" $F^{j_n}$ which is the $j_n$-th "child@@funnel" of $F$; formally this is the unique "funnel" $I_0,s_0,\dots,I_{n-1},s_{n-1},I_n,s_n$ such that $\triangles w \models \phi_{\childint}^{j_n}(II_ns_{n-1})$ and the relationship $x \in \{\same, \opp,\suc\}$ between $s_{n-1}$ and $s_n$ is so that $\triangles w \models \phi_{\childsplx}^{j_n}(Is_{n-1})$.
\end{itemize}
Clearly every "reachable" "funnel" is induced by an "address". % (it could be that some "funnel" is induced by several addresses).
We say that a "reachable" "funnel" $F=I_0,s_0,\dots,I_n,s_n$ is encoded by an $\ell$-"labelled" "tiling" $T$ if $T$ induces a "valid" "tiling" and the $\ell$-"labelled" sequence of "split" corresponding to $T$ is $s_0^{1},s_1^{j_1},\dots,s_n^{j_n}$ so that the "address" $j_0=1,j_1,\dots,j_{n}$ induces $F$.

In particular, the root "funnel" $F=I_0,s_0$ is encoded by the "valid" $\ell$-"labelled" "tiling" whose leftmost "tile" is the one-letter $\ell$-"labelled" "tile" $\suc^1$, and whose other tiles are all empty.
We refer to this "tiling" as the root "tiling".

\begin{lemma}[Definability of "funnels"]\label{lem:definability-funnels}
    There is a "formula" $\phi_\out$ such that for every "funnel" $I_0,s_0,\dots,I_n,s_n$ and every "interval" $I$ it holds that 
    \[
        \triangles{w} \models \phi_\out(TI) \quad \iff \quad I=\bigcap_k I_k,
    \]
    where $T$ is an encoding of $I_0,s_0,\dots,I_n,s_n$ as a $\ell$-"labelled" "tiling".
\end{lemma}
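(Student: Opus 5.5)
The plan follows the high-level description given above. The formula $\phi_\out(TI)$ existentially guesses two things: a "configuration@@msosi" $A$, and an $\ell_\cont$-"labelled" "tiling" $P$ whose underlying unlabelled "tiling" coincides with that of $T$. Writing the "locations" of $T$ in the order given by Lemma~\ref{lem:definability-before}, $P$ attaches to the $k$-th "location" (carrying "split" $s_k$ and address label $j_k$) an index $p_k\in\{1,\dots,\ell_\cont\}$. The intended meaning is that $J_k$, the unique "interval" with $\triangles{w}\models\phi_\cont^{p_k}(AJ_ks_k)$, equals $I_k$. Such a $P$ exists whenever $A\in\bigcap_k I_k$: each $I_k$ lies in $B_{s_k}$ and contains $A$, so by the sections part of Lemma~\ref{lem:definability-basis-formulas} it is one of the section "intervals" of $A$ at $s_k$.

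\textbf{Consistency check.} $\phi_\out$ verifies, by universal quantification over pairs of "locations", that the guessed sequence obeys the child relation. The root "location" must give $J_0=I_0$, the "interval" of all "configurations@@msosi". Consecutive "locations" are detected with $\phi_\before^{h,h'}$ plus the absence of an intermediate "location". For each consecutive pair $(k,k+1)$, the formula checks that $J_{k+1}$ is the $j_{k+1}$-th child of the funnel $J_0,s_0,\dots,J_k,s_k$. Concretely, letting $I^{(k)}=\bigcap_{m\le k}J_m$, it checks $\triangles w\models\phi_\childint^{j_{k+1}}(I^{(k)}J_{k+1}s_k)$ and that the step type between $s_k$ and $s_{k+1}$ read in $T$ matches the unique $x$ with $\phi_\childsplx^{j_{k+1}}(I^{(k)}s_k)$. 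Finally, $I$ is asserted to equal $I^{(n)}$, the intersection over all "locations", where $n$ is read off the "live" "block". By induction along the "location" order, these checks force $J_k=I_k$ for all $k$, since children are uniquely determined by Lemma~\ref{lem:children-funnels}. This gives both directions, because $A\in I$ for any genuine witness and $I$ is nonempty.

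\textbf{Main obstacle.} The hard part is that $I^{(k)}$ is an intersection over unboundedly many earlier "locations", and MSO cannot name it as a pair of "configurations@@msosi" at every step at once. I would first remove the need to name it. Membership of a "configuration@@msosi" $C$ in $I^{(k)}$ is definable: for all "locations" $m\le k$, $C$ lies in the section "interval" $J_m$, each of which is definable from $A$, $s_m$ and $p_m$ (given by the local tile). The endpoints of $I^{(k)}$ are then the least and greatest such $C$, which are first-order definable over $\phi_<$ using second-order quantification over "configurations@@msosi". Hence, for each fixed "location" $k$, one can quantify existentially over a pair $I^{(k)}$ and assert that it is that intersection. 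This works because the "intervals" are convex, so an intersection of "intervals" is an "interval" and is determined by membership.

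Everything is therefore a formula with free variables $T,I$ together with position variables for "locations". Quantification over "locations" is quantification over a position and a hard-coded level $h\le 2(d_0+2)$. Uniqueness and definability come from Lemmas~\ref{lem:definability-basis-formulas}, \ref{lem:children-funnels} and~\ref{lem:definability-before}. The remaining care is to treat the last "location" correctly when the "funnel" is "closed" or "open".
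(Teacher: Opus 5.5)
Your proposal is correct and takes essentially the paper's route. Like the paper, you guess a configuration in the output together with an $\ell_{\mathrm{sec}}$-labelled tiling sharing $T$'s underlying tiling, read off each $I_k$ as a section interval, check the root and the child relation at each consecutive pair of locations against the running intersection, and conclude by induction using uniqueness of children. Your explicit construction of $\bigcap_{m\le k} J_m$, through membership and its least and greatest elements, spells out a step the paper leaves implicit.
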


\begin{proof}
    Let $s_0^{j_0},\dots,s_n^{j_n}$ denote the $\ell$-"labelled" sequence of "splits" corresponding to $T$.
    On input $TI$, the "formula" $\phi_\out$ guesses a "configuration@@msosi" $C$ together with an $\ell_\cont$-"labelled" "tiling" $T_\cont$, and checks that the following conjunction holds:
    \begin{enumerate}[(a)]
        \item $T_\cont$ and $T$ induce the same well-formed sequence of "splits" $s_0,\dots,s_n$ (\ie for every "position" $i$, the tilings obtained from $T_\cont(i)$ and from $T(i)$ by removing the labels are the same); so we let $s_0^{p_0},\dots,s_n^{p_n}$ denote the $\ell_\cont$-"labelled" sequence of "splits" corresponding to $T_\cont$, and for all $k$ we let $I'_k$ denote the unique "interval" such that $\triangles{w} \models \phi_\cont^{p_k}(C I'_k s_k)$ ($I'_k$ is definable from $T$ together with the encoding of $k$ as a "location" $(i,h)$);
        \item\label{item:base-case} the "interval" $I'_0$ matches the "interval" $I_0$ of all "configurations@@msosi";
        \item\label{item:interval} for every $k$ (encoded as a "location" $(i,h)$), the "interval" $I'_{\leq k}= \bigcap_{m \leq k} I'_m$ (this is defined using Lemma~\ref{lem:definability-before}) is such that $\triangles{w} \models \phi_{\childint}^{j_{k+1}}(I'_{\leq k} I_{k+1} s_k)$;
        \item\label{item:split} for every $k$, we have $\triangles{w} \models \phi_{\childsplx}^{j_{k+1}}(I'_{\leq k} s_k)$ if and only if $x \in \{\same,\opp,\suc\}$ corresponds to the relationship between $s_k$ and $s_{k+1}$; and
        \item $I'_{\leq n} = I$.
    \end{enumerate}

    Let us argue that the wanted equivalence holds.
    If $\triangles{w} \models \phi_\out(TI)$ then by items~(\ref{item:interval}) and~(\ref{item:split}), it holds that, $I'_0,s_0,\dots,I'_{k+1},s_{k+1}$ is the $j_{k+1}$-st "child@@funnel" of $I'_0,s_0,\dots,I'_k,s_k$. 
    Since moreover, $I_0=I'_0$ thanks to item~(\ref{item:base-case}), and since by definition, $I_0,s_0,\dots,I_{k+1},s_{k+1}$ is the $j_{k+1}$-st "child@@funnel" of $I_0,s_0,\dots,I_k,s_k$, a direct induction yields $I'_k=I_k$ for all $k$.
    Hence $I'_{\leq n}=\bigcap_k I_k = I$ as required.

    Conversely, if $I=\bigcap_k I_k$, then taking $C$ to be any "configuration@@msosi" in $I$ (which is nonempty) and letting $T_\cont$ be given by the $\ell_\cont$-"labelled" "split" sequence $s_0^{p_0},\dots,s_n^{p_n}$, where for every $k$, $p_k \in \{1,\dots,\ell_\cont\}$ is such that $\triangles{w} \models \phi_\cont^{p_k}(C I_k)$, we get that all items are validated and therefore $\triangles{w} \models \phi_\out(TI)$.
\end{proof}

We easily derive the following definability results, where the order on encodings $T$ of "funnels" is given by the lexicographic order on the "address" (in particular, it is definable).

\begin{corollary}\label{cor:definable-stuff}
    There are "formulas" $\phi_\childspltilex^j$, for $j \in \{1,\dots, \ell\}$ and $x \in \{\same, \suc, \opp\}$, a "formula" $\phi_\closed$, "formulas" $\phi_{\produce \gamma}$ for $\gamma \in \Gamma$, and a "formula" $\phi_{\producenothing}$, such that for every "reachable" "funnel" $F$ encoded as an $\ell$-"labelled" "tiling" $T$, the following hold:
    \begin{itemize}
        \item for every $j$, $\triangles{w} \models \phi_\childspltilex^j(T)$ holds if and only if $x \in \{\same,\suc,\opp\}$ corresponds to the relationship between the last "split" of $F$ and the last "split" in the $j$-th "child@@funnel" of $F$;
        \item $\triangles{w} \models \phi_\closed(T)$ if and only if $F$ is "closed";
        \item $\triangles{w} \models \phi_{\produce \gamma}(T)$ if and only if $F$ is "closed" and with "output" $[A,A]$ such that $\triangles w \models \phi_\gamma(A)$, and moreover, $T$ is the least encoding of such a "funnel"; and
        \item $\triangles w \models \phi_{\producenothing}(T)$ if and only if $F$ is "closed" but there is a smaller encoding $T'$ of a "closed" "funnel" with the same "output".
    \end{itemize} 
\end{corollary}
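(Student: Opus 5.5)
Every formula will be built from $\phi_\out$ of Lemma~\ref{lem:definability-funnels}, together with the child formulas of Lemma~\ref{lem:children-funnels} and the location machinery of Lemma~\ref{lem:definability-before}. From an $\ell$-labelled tiling $T$ we can define, with bounded quantifier rank, its live position and the last location $(i,h)$, hence the last split $s_n$ encoded as a pair of positions. The output interval $I$ of the encoded funnel is then the unique interval with $\triangles w\models\phi_\out(TI)$.

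\textbf{Child relationships and closedness.} For $\phi_\childspltilex^j(T)$, we existentially quantify $I$ and $s_n$ as above and assert $\phi_{\childsplx}^j(Is_n)$. By Lemma~\ref{lem:children-funnels}, this is exactly the relationship between $s_n$ and the last split of the $j$-th child, which only depends on the output and last split of $F$. For $\phi_\closed(T)$, we quantify $I=[A,B]$ with $\phi_\out(TI)$ and require $A=B$, meaning $AB$ agree everywhere (equivalently $\neg\phi_<(AB)\wedge\neg\phi_<(BA)$).

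\textbf{Order on encodings.} We need a definable order on encodings $T,T'$ given by the lexicographic order on addresses. The address of $T$ is read along the total order on locations of Lemma~\ref{lem:definability-before}. To compare $T$ and $T'$, we guess the longest common prefix, as a valid labelled tiling $T''$ whose tiles are pointwise prefixes of both $T$ and $T'$ and such that both induced orders restrict to it. We then compare the labels at the next location in each, using the $\phi_\before$ formulas to locate the first location outside $T''$, with a prefix being smaller. This is the only nonroutine part. The subtle point is that the next location after a common prefix in $T$ and in $T'$ is determined by the live block of $T''$ and the next step. Since the two tilings share $T''$ and the step type is determined by $\phi_\childsplx$ applied to the common funnel, this location is at the same position/level in both. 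It is therefore definable by asserting that a prefix tiling extended by one location remains valid.

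\textbf{Production.} $\phi_{\produce\gamma}(T)$ asserts: $\phi_\closed(T)$; the output $[A,A]$ satisfies $\phi_\gamma(A)$; and there is no $T'$ lexicographically smaller such that $T'$ encodes a reachable closed funnel with output $[A,A]$. Finally, $\phi_\producenothing(T)$ is $\phi_\closed(T)$ together with the existence of such a smaller $T'$. The remaining obstacle is to express ``$T'$ encodes a reachable funnel'' inside MSO. We do this with a formula $\phi_\valid$-like check by induction along locations. We guess, as in the proof of Lemma~\ref{lem:definability-funnels}, a configuration $C$ and an $\ell_\cont$-labelled tiling witnessing intervals $I'_k$. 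We then check at every location that $I'_{\le k}$ is open and that the child formulas $\phi_\childint^{j_{k+1}}$ and $\phi_\childsplx^{j_{k+1}}$ hold, so that each prefix is a genuine child of the previous one. Well-definedness follows from the uniqueness statements of Lemma~\ref{lem:children-funnels}.
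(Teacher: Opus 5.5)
Your proposal follows the paper's route. The paper treats this corollary as an immediate consequence of Lemma~\ref{lem:definability-funnels}:
\begin{itemize}
\item read the last split off the tiling via Lemma~\ref{lem:definability-before}, recover the output with $\phi_{\out}$, and apply $\phi^j_{\childsplx}$;
\item express closedness as the existence of $A$ with $\phi_{\out}(T[A,A])$;
\item handle least encodings through the definable lexicographic order on addresses.
\end{itemize}
Your explicit definition of \emph{$T'$ encodes a reachable funnel} replays the proof of Lemma~\ref{lem:definability-funnels} and additionally checks that proper prefixes are open; this fills in a detail the paper leaves implicit.

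One sentence in your order paragraph is false, though the argument does not depend on it. You claim that the next locations after the common prefix coincide in both tilings. They need not: after a maximal common prefix $T''$ the labels $j\neq j'$ differ, and the steps selected by $\phi^{j}_{\childsplx}$ and $\phi^{j'}_{\childsplx}$ may disagree. For example, in the case $\bar d_n=d_n$ of Lemma~\ref{lem:children-funnels} one child can use $\same$ and another $\opp$, so the next locations in $T$ and $T'$ lie at positions $i$ and $i\pm 1$. The fix is to define the first location outside $T''$ separately in each tiling using $\phi_{\before}$ and compare the two labels. That is all the lexicographic comparison requires.
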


We stress the fact that contrary to the "formulas" $\phi_\childsplx^j$ from Lemma~\ref{lem:children-funnels} which have free variables representing an "interval" and a "split", the $\phi_\childspltilex^j$ operate on $\ell$-"labelled" tilings.

%\pierre{J'avais mis une preuve, mais je l'ai commenté dans le .tex car je trouve finalement que c'est pas nécéssaire. On peut la remettre si besoin.}

% \begin{proof}
%     For $\phi_\childspltilex^j(T)$, we first define $s$ to be the last "split" of $F$, which is definable from $T$ thanks to Lemma~\ref{lem:definability-before}, then define $I$ to be the unique "interval" such that $\triangles w \models \phi_\out(TI)$, and then check whether $\triangles w \models \phi_\childsplx^j(Is)$.
%     (Formally, we set
%     \[
%         \phi_\childspltilex^j(\Xf X {S^\ell}) = \Xf \exists {F^2} I \  \big[\phi_\out(\Xf X {S^\ell}I) \wedge \exists s\ [ \phi_{\lastsplit}(\Xf X {S^\ell}s) \wedge \phi_\childsplx^j(I s)]\big],
%     \]
%     where $\phi_{\lastsplit}(\Xf X {S^\ell}s)$ verifies that $s$ is the last "split" in the $\ell$-"labelled" "tiling" represented by the variable $\Xf X {S^\ell}$.)
%     \pierre{on peut enlever la parenthèse. Ou bien remplacer le blabla par la parenthèse, je sais pas trop. On peut aussi enlever completement la preuve du corollaire qui est assez immédiate... j'ai plus trop de jugement à ce stade !}

%     For the "formula" $\phi_\closed$, we simply check whether the "output" is a singleton, \ie we set
%     \[
%         \phi_\closed(\Xf X {S^\ell})=\Xf \exists F A \  \phi_\out(\Xf X {S^\ell} [A,A]).
%     \]
%     For the "formula" $\phi_{\produce \gamma}$ we do the obvious thing.
% \end{proof}

\subsection{From definable funnels to yield-Hennie machines}\label{sec:funnels-to-hennie}

Let $q_0$ denote the maximal "quantifier rank" of the "formulas" defined so far; note that $q_0$ is a function of $q$ and $|F|$.
In this part of the proof, we use the convention that $\rig{-1}$ and $\lef{|w|+2}$ both denote the empty set, so that every "split" has a predecessor.
% In this part of the proof, we also use the convention that $\lef 0=\{0,1,\dots,|w|+1\}$ is a "split", whose "opposite", the empty set of "positions", is also a "split" denoted $\rig{-1}$.
% \pierre{ca ferait sûrement pas de mal de prendre cette convention dès le début, mais j'aime pas trop car ca casse un peu la symmétrie entre les deux bords, et autant faire ca tard si on peut...} 

\subparagraph*{High-level overview.}
The idea will be to construct a "yield-Hennie machine" so that on input $\triangles w$, the nodes of the run correspond to the "reachable" "funnels".
Intuitively we should produce an output letter whenever a "closed" "funnel" (which corresponds to a "configuration@@msosi") is reached.
However there is a small technical caveat here, since there may be several "closed" "funnels" corresponding to the same "configuration@@msosi".
As hinted by Corollary~\ref{cor:definable-stuff}, this is easily overcome by using a definable total order on "funnels" so that an output is produced only when a "configuration@@msosi" is reached for the first time.

We should use the tape so as to be able to perform the wanted transition at each step, \ie branch, from a given "funnel", on its $\ell$-"children@@funnel", update the tape accordingly and move in the right direction (and test if the "funnel" is "closed", in which case we output).
Since these operations were shown to be definable with "quantifier rank" $q_0$, it is sufficient to have access to the $q_0$-"type" of the current "funnel", \ie the "type" of its representation as an $\ell$-"labelled" "tiling".
To access this information, we colour the tape with the current $\ell$-"labelled" "tiling", and moreover, we keep additional information regarding its "type".

More precisely, we proceed as follows.
Given a "tiling" $T$ and a "split" $s$, define the "type" of $T$ over $s$ to be the $q_0$-"type" of $T_{s}$ over $w_s$.
From now on, when we speak of "types", we always mean $\Xf X {\Lambda^\ell}$-"types" (\ie "types" of $\ell$-"labelled" tilings) of "quantifier rank" $q_0$, and we denote their composition (\ie the operation obtained from the "compositionality" lemma which is well-behaved with respect to word concatenation) multiplicatively.

Given a "tile", define its direction in $\{\leftarrow, \rightarrow\}$ to be the "exit direction" of its last "block", and $\leftarrow$ for the empty "tile".
Given a "valid" "tiling" $T$, define the latest "split" at "position" $i$ to be the "split" $s_i$ whose "position" is $i$ and whose direction is the one of $T(i)$ (this is indeed a "split" by definition of validity.) \AP 
Given a "tiling" $T$ and a "position" $i$, we define the ""back-type"" of $T$ at $i$ to be its type over the predecessor $p_i$ of $s_i$.

Then in the execution of the machine on input $\triangles w$, we will ensure that at every moment (except during an initial preprocessing phase), if $T$ denotes the $\ell$-"labelled" "tile" encoding the "funnel" corresponding to the current node, then for every $i$, the $i$-th "position" of the working tape contains $T(i)$ together with a label indicating the "back-type" of $T$ at $i$.
Moreover, we store the "type" of $T$ over $s_i$, where $i$ is the current "position" of the mark, in the state, so that whenever a step is made say from "position" $i$ to $i+1$, we have access to the "types" of $T$ over $\lef{i+1}$ and over "positions" $\rig{i+1}$ as well as the "tile" $T(i)$.
Combining these three pieces of information by "compositionality", we recover the "type" of $T$, which is just what we need to determine whether the current "funnel" is "closed", and in this case, what should be the output, and otherwise, how to define the "children@@funnel" of the node and branch on them.

\subparagraph*{Formal definition.}
We now give the formal definition of the "yield-Hennie machine" explained above.
We let $\Typ$ denote the set of $\Xf X {\Lambda^\ell}$-"types" of "quantifier rank" $q_0$ of $(\Sigma \cup \{\triangles{,}\})$-strings and we write their composition (given by the "compositionality" lemma) multiplicatively.
For a given "tile" $\tile$ and a letter $\sigma \in \Sigma\cup \{\triangles ,\}$, we let $\type \tile \sigma$ denote the quantifier-rank $q_0$ "type" of the one-letter "tiling" $\tile$ over the one-letter word $\sigma$. 

We start with input alphabet, output alphabet, states and tape alphabet, and then define the transition function.
\begin{itemize}
    \item The input alphabet is $\Sigma$ and the output alphabet is $\Gamma$.
    \item The set of internal states is 
    \[
        Q=(\{\pre\} \cup \{\same,\opp,\suc\} \times \{1,\dots,\ell\}) \times \{\leftarrow,\rightarrow\} \times \Typ
    \]
    and the initial state is $(\pre,\rightarrow,t_\varnothing)$ where $t_\varnothing \in \Typ$ is the "type" over the empty string.
    \item The tape alphabet is 
    \[
        \Theta = (\Sigma \cup \{\triangles{,}\}) \cup \big[(\Sigma \cup \{\triangles{,}\}) \times \Lambda^\ell \times \Typ \big].
    \]
\end{itemize}

%Internal states of the form $q^\pre_\dir$ are called ``preprocessing states'', and those of the form $q^\main_\dir$ are called ``main states''.
We first define the transitions corresponding to the preprocessing phase, \ie outgoing from states in $\{\pre\} \times \{\leftarrow,\rightarrow\} \times \Typ$.
The goal of this preprocessing phase is to label each "position" $i$ of the tape with the "back-type" of the empty "tiling" at "position" $i$ (recall that the direction of the empty "tile" is $\leftarrow$); this is done by first moving the head all the way to $\triangleleft$, and them moving it back while writing the wanted "types".
Formally, we get the following transitions:
\begin{itemize}
    % \item we set
    % \[    
    %   \delta(\big(\pre,\rightarrow,t_\varnothing),\triangleright\big)=\big[(\pre,\rightarrow,t_{\suc^1,\triangleright}),(\triangleright,\suc^1,(t_\varnothing,t_\varnothing)),\rightarrow\big],
    % \] 
    % where $t_{\suc^1,\triangleright}$ is the "type" of the one-letter $\ell$-"labelled" "tile" $\suc^1$ over the word $\triangleright$ (initially, we write $\suc^1$, which is the first "tile" of the root "tiling", on the $0$-th "position" of the tape, and record its "type" in the state);
    
    \item for every $\sigma \in \Sigma \cup \{\triangleright\}$, we set 
    \[
        \delta\big((\pre,\rightarrow,t_\varnothing),\sigma\big)=\big[(\pre,\rightarrow,t_\varnothing),\sigma,\rightarrow\big],
    \]
    (we move the head to the end of the input);
    
    \item we set 
    \[
        \delta((\pre,\rightarrow,t_\varnothing), \triangleleft)=\big[(\pre,\leftarrow,t_\varnothing),\triangleleft,\stay\big]
    \]
    (we switch to state $q^\pre_\leftarrow$ when reaching the end of the input);
    
    \item for every $\sigma \in \Sigma \cup \{\triangleleft\}$ and $t \in \Typ$, we set 
    \[
        \delta\big((\pre,\leftarrow,t),\sigma\big)=\big[(\pre,\leftarrow, \type {\varepsilon_\tile}{\sigma} \cdot t),(\sigma,\varepsilon_\tile,t),\leftarrow \big],
    \]
    where $\varepsilon_\tile$ is the empty "tile"
    (we come back to the start of the input while recording the correct "back-types" on the tape);
    
    \item for every $t \in \Typ$ we set 
    \[
        \delta\big((\pre,\leftarrow,t),\triangleright\big) = \big[(\suc,1,\rightarrow,t_\varnothing),(\triangleright,\varepsilon_\tile,t),\stay \big]
    \]
    (we initialise the main computation);
    
    \item every other transition from $\{\pre\} \times \{\leftarrow,\rightarrow\} \times T$ is defined arbitrarily (these are not reached).
\end{itemize}

The next claim states the correctness of the preprocessing phase; we omit its proof which is straightforward.

\begin{claim}\label{claim:preprocessing}
    Consider the run on input $\triangles w$.
    Then there is a node "labelled" by $\big[(\suc,1,\rightarrow,t_\varnothing), u\big]$ where $u$ is a pointed $\theta$-string, such that:
    \begin{itemize}
        \item in the branch leading to that node, nodes have a single child;
        \item the first letter of $u$ is pointed; and
        \item $u \in \big[(\Sigma \cup \{\triangles{,}\}) \times \Lambda^\ell \times \Typ \big]^*$, therefore it rewrites as $u=(u^\Sigma,T,u^\Typ) \in (\Sigma \cup \{\triangles ,\})^* \times (\Lambda^\ell)^* \times \Typ^*$ so that:
        \begin{itemize}
            \item $u^\Sigma=\triangles u$;
            \item $T$ is the empty "tiling" (\ie the "tiling" comprised of empty tiles); and
            \item for every $i$, the $i$-th letter of $u^\Typ$ is the "back-type" of $T$ at $i$. 
        \end{itemize}
    \end{itemize}
\end{claim}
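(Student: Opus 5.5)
The plan is to trace the run step by step through the two phases of preprocessing. We follow it from the root labelled $\big((\pre,\rightarrow,t_\varnothing),\underline\triangleright w\triangleleft\big)$. Every preprocessing transition returns a one-element sequence consisting of a single state–letter–direction triple, so each node on this branch has exactly one child. This proves the first item once we have shown that the branch reaches the node described.

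\textbf{Rightward sweep.} For $\sigma\in\Sigma\cup\{\triangleright\}$, the transition leaves the letter unchanged, keeps the state $(\pre,\rightarrow,t_\varnothing)$ and moves right. After $|w|+1$ steps the head is therefore on $\triangleleft$ and the tape still reads $\triangles w$. The $\triangleleft$-transition then changes the state to $(\pre,\leftarrow,t_\varnothing)$ without moving the head.

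\textbf{Leftward sweep.} We prove by downward induction on $i$, from $|w|+1$ to $0$, the following invariant. When the head first reaches position $i$ in a $(\pre,\leftarrow,\cdot)$ state:
\begin{itemize}
\item the state is $(\pre,\leftarrow,t)$, where $t$ is the type of the empty tiling over $\lef{i+1}$ (for $i=|w|+1$ this is the empty set, by the convention $\lef{|w|+2}=\emptyset$, so $t=t_\varnothing$);
\item every position $j>i$ carries $(\sigma_j,\varepsilon_\tile,t_j)$, where $\sigma_j$ is the $j$-th letter of $\triangles w$ and $t_j$ is the type of the empty tiling over $\lef{j+1}$;
\item every position $j\le i$ still carries the plain letter $\sigma_j$.
\end{itemize}

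The key point is that the empty tile has direction $\leftarrow$. So for the empty tiling the latest split at $j$ is $\lef j$, and its predecessor is $\lef{j+1}$. Hence the back-type at $j$ is exactly the type over $\lef{j+1}$, and the value $t_j$ written at $j$ is the correct back-type.

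\textbf{Inductive step.} For $i\ge 1$, the transition at $i$ writes $(\sigma_i,\varepsilon_\tile,t)$, which is correct because $t$ is the type over $\lef{i+1}$. It then moves left in state $(\pre,\leftarrow,\type{\varepsilon_\tile}{\sigma_i}\cdot t)$. By compositionality, $\type{\varepsilon_\tile}{\sigma_i}\cdot t$ is the type of the empty tiling over $\{i\}\cup\lef{i+1}=\lef i$. This is exactly the state required at position $i-1$, so the invariant is preserved.

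\textbf{Final step at $\triangleright$.} At $i=0$ the state is $(\pre,\leftarrow,t)$ with $t$ the type over $\lef 1$. The $\triangleright$-transition writes $(\triangleright,\varepsilon_\tile,t)$, which is the correct back-type at $0$. It stays at position $0$ and enters state $(\suc,1,\rightarrow,t_\varnothing)$. This node has the required state, its pointed letter is the first one, and every cell of its tape lies in $(\Sigma\cup\{\triangles,\})\times\Lambda^\ell\times\Typ$. Writing $u=(u^\Sigma,T,u^\Typ)$, the invariant gives that $u^\Sigma=\triangles w$, that $T$ is the empty tiling, and that each entry of $u^\Typ$ is the back-type of $T$ at that position.

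\textbf{Expected obstacle.} There is no real difficulty. The only delicate point is the bookkeeping at the boundaries: using the empty-split convention at $|w|+1$, and using the fact that the empty tile's direction $\leftarrow$ makes the predecessor of the latest split equal to $\lef{i+1}$ at every position, including position $0$.
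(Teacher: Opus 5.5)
Your proof is correct and is the routine verification the paper omits (it only says the proof is straightforward): a rightward sweep that leaves the tape unchanged, then a leftward sweep whose state keeps, by compositionality, the type of the empty tiling over $\lef{i+1}$. Your boundary reading, with the empty set at position $|w|+1$ and $\lef{1}$ at position $0$ (both following from the empty tile having direction $\leftarrow$), is also the one the base case of the later main inductive claim relies on.
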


%\pierre{Si vous trouvez ca bizarre de mettre ce claim ici alors que le reste de la machine n'est pas définie, on peut aussi le mettre à l'intérieur du lemme plus bas}

We now define the transitions for the main part of the computation.
The idea is to maintain the fact that when we reach a node with state $(\step,j_0,\dir,t^\state)$ and we read $(\sigma,\tile,t^\tape)$ from the tape, it holds that the "type" of the "tiling" $T$ written on the (second coordinate of the) tape can be recovered from $t^\state$ and $t^\tape$, as well as the updated values for these "types".
The precise definitions depend on the "type" of the state, \ie on $\step$ and~$\dir$.

Let $\step \in \{\same,\opp,\suc\}, j_\pre \in \{1,\dots,\ell\}, t^\state,t^\tape \in \Typ ,\sigma \in \Sigma \cup \{\triangles , \}$ and $ \tile \in \Lambda^\ell$.
We set
\[
   \delta\big((\step,j_\pre,\rightarrow,t^\state), (\sigma,\tile,t^\tape)\big) = \begin{cases}
       \varepsilon & \text{ if } \phi_{\producenothing} \in t\\
       \gamma  &\text{ if } \phi_{\produce \gamma} \in t \text{ for some\footnotemark  } \ \gamma \in \Gamma \\
       \delta_1 \dots \delta_\ell &\text{ otherwise},
   \end{cases}
\]
\footnotetext{Note that in this case, $\gamma$ is uniquely defined, because a "configuration@@msosi" can only have one output.}
where $t=t^\state \cdot \type {\tile \cdot \step^{j_\pre}}{\sigma} \cdot t^\tape$,
% \[
%     t= \begin{cases}
%         t^\state \cdot \type {\tile \cdot \step^{j_\pre}}{\sigma} \cdot t^\tape & \text { if } \dir = \ \rightarrow \\
%         t^\tape \cdot \type {\tile \cdot \step^{j_\pre}}{\sigma} \cdot t^\state & \text { if } \dir = \ \leftarrow \\ 
%     \end{cases}
% \]
and for every $j \in \{1,\dots,\ell\}$, we let
\[
    \delta_j = \big[(\step_j,j,\dir',t_j),(\sigma,\tile',t'),\move_j\big]
\]
where
\[
    \dir' = \begin{cases}
        \rightarrow \text{ if } \step \in \{\same,\suc\} \\
        \leftarrow \text{ otherwise,}
    \end{cases}
\]
$\tile' = \tile \cdot \step^{j_\pre}$,
\[
    t'= \begin{cases}
        t^\state & \text{ if } \step = \suc \\
        t^\tape & \text{ if } \step \in \{\same,\opp\},
    \end{cases}
\]
and for every $j \in \{1,\dots,\ell\}$,
\begin{itemize}
    \item $\step_j$ is the unique $x \in \{\same,\opp,\suc\}$ such that $\phi_{\childspltilex}^j \in t$;
    \item $t_j$ is defined by the following table:
%     \[
%     \begin{array}{|r ||c|c|c|}\hline
%         & \step_j=\same                              & \step_j \in \{\suc,\opp\}   \\\hline \hline
% \step = \same &   t_j = t^\state                     & t_j = t^\tape \cdot \type \tile \sigma   \\ \hline

% \step = \opp & t_j =  t^\state & t_j = \type \tile \sigma \cdot t^\tape \\ \hline

% \step = \suc & t_j = t^\tape  &      t_j = t^\state \cdot \type \tile \sigma \\ \hline
        
%     \end{array}
% \]
\vspace{-0.6\baselineskip}
\[
\renewcommand{\arraystretch}{1.4}
\begin{array}{r|cc}
  & \step_j=\same & \step_j\in\{\suc,\opp\} \\ \hline
  \step=\same
    & t_j = t^{\state}
    & t_j = t^{\tape}\,\cdot\,\type {\tile'} \sigma \\[2pt]
  \step=\opp
    & t_j = t^{\state}
    & t_j = \type {\tile'} \sigma \,\cdot\,t^{\tape} \\[2pt]
  \step=\suc
    & t_j = t^{\tape}
    & t_j = t^{\state}\,\cdot\,\type {\tile'} \sigma 
\end{array}
\]
    \item $\move_j$ is defined by 
    \[
        \move_j=\begin{cases}
            \stay &\text{ if } \step_j=\same \\
            \dir' &\text{ otherwise}.
        \end{cases}
    \]
\end{itemize}
Transitions of the form $\delta\big((\step,j_\pre,\leftarrow,t^\state), (\sigma,\tile,t^\tape)\big)$ are defined the same way, except that products of "types" are performed in the opposite way.
The next lemma concludes the proof of Theorem~\ref{thm:msosi-to-hennie}.

\begin{lemma}
    The above transition function defines a "yield-Hennie machine", which, on input $\triangles w$, outputs $\Phi(w)$.
\end{lemma}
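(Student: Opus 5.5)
The plan is to prove an invariant by induction on the depth of nodes in the run, after the preprocessing phase described in Claim~\ref{claim:preprocessing}. The invariant says the following. Every node below the preprocessing node labelled $\big[(\suc,1,\rightarrow,t_\varnothing),u\big]$ corresponds to a "reachable" "funnel" $F=I_0,s_0,\dots,I_n,s_n$. The state of the node is $(\step,j,\dir,t^\state)$, where $\step^{j}$ is the labelled step leading to $s_n$, not yet written on the tape. The tape content $T'$, with these pending steps appended at the current "position" $i$, is an encoding $T$ of $F$ as an $\ell$-"labelled" "tiling". For every position, the third component of the tape is the "back-type" of the written tiling. Finally, $t^\state$ is the "type" of $T$ over the part of the word strictly on the side of $s_i$ opposite to the one recorded on the tape at $i$.

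\textbf{Base case.} The base case is exactly Claim~\ref{claim:preprocessing}, together with the fact that the root "funnel" is encoded by the root "tiling" $\suc^1$ at position $0$.

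\textbf{Inductive step.} From the invariant and "compositionality", $t=t^\state\cdot\type{\tile\cdot\step^{j_\pre}}{\sigma}\cdot t^\tape$ (or the product in reverse order when $\dir=\leftarrow$) is the $q_0$-"type" of $T$ over the whole word. Since $\phi_\closed$, $\phi_{\produce\gamma}$, $\phi_\producenothing$ and $\phi_\childspltilex^j$ have quantifier rank at most $q_0$, the case distinction in $\delta$ is correct by Corollary~\ref{cor:definable-stuff}. It remains to check that each child node satisfies the invariant for the $j$-th "child@@funnel". After writing $\tile'=\tile\cdot\step^{j_\pre}$ at position $i$, the encoding of the child differs only by the pending step $\step_j^{j}$. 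That step is either at the same position (for $\same$), or at $i\pm1$ in direction $\dir'$ (for $\opp$ and $\suc$). The proof then checks, case by case in the table defining $t_j$ and $t'$, that the written value $t'$ is the "back-type" at $i$ of the new tiling and that $t_j$ is the "type" over the complementary side. This uses the definition of the latest "split" and of the direction of a "tile", and the fact that the tiling is unchanged away from $i$, so the other back-types stay correct. This bookkeeping is the main obstacle. The direction of $T(i)$ may flip when an $\opp$ block is added, and the type stored in the state must switch from one side to the other. I would handle it by first proving a small lemma: for a "valid" "tiling" with "live" "position" $i$, the "back-type" at every non-live position depends only on tiles on one side. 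The table entries would then be verified using associativity of the type product.

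\textbf{Bounded visit and output.} Along any branch, the "funnels" form a chain of children, so by item~\ref{item:simplicity-decreases} and Lemma~\ref{lem:all-intervals-are-d-simple} each position is the "position" of at most $2(d_0+2)$ "splits" in the sequence. Each step of the machine in the main phase corresponds to one "split", plus a bounded number of preprocessing visits, so the "bounded visit property@@hennie" holds. For the output, note that children are ordered by the lower bounds of their intervals (last item of Lemma~\ref{lem:children-funnels}) and cover the parent "output". An induction then shows that the yield of the subtree at a "funnel" with "output" $I$ lists the configurations of $I$ in increasing $\phi_<$ order, each exactly once. Duplicates from overlapping children are suppressed because only the least encoding produces a letter, thanks to $\phi_\producenothing$ and the lexicographic order on addresses. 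The compatibility of the address order with the tree order is what makes this argument work. Applied to the root, this yields $\Phi(w)$.
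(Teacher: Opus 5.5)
Your invariant and the bounded-visit argument follow the paper's proof closely: its claim with items (a)--(f) is proved top-down from the main root, and the bound comes from item~\ref{item:simplicity-decreases} of the definition of funnels. Your side lemma on back-types at non-live positions is exactly what the paper uses tacitly when it asserts that the tape letters at $i\neq i_0$ stay correct.

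\textbf{The gap: the output argument.} You propose to prove by induction that the yield below a funnel with output $I$ lists every configuration of $I$ exactly once, in increasing order. This is false for non-root funnels. The children given by \Cref{lem:children-funnels} have overlapping outputs: the basis intervals used in its second step overlap one another and extend past the intermediate intervals $I^j$. The suppression via $\phi_\producenothing$, on the other hand, is global: a configuration $A$ is emitted only at the closed funnel with output $[A,A]$ that has the least address in the whole run. So if $A$ lies in the outputs of two siblings, the later sibling's subtree emits nothing for $A$, even though $A$ is in its output. The hypothesis therefore fails for that child and cannot be invoked for the parent. As local statements, \enquote{each exactly once} and \enquote{duplicates are suppressed} contradict each other. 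For the same reason, sorting children by lower bounds does not by itself make the concatenated yields sorted. You still have to say which sibling is responsible for a configuration in an overlap.

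\textbf{The missing argument.} It is global, as in the paper.
\begin{itemize}
\item Every configuration $A$ lies in the output of some closed reachable funnel, because children cover the parent's output and the run has bounded depth. Hence exactly one node emits a letter for $A$, namely the least-addressed closed funnel with output $[A,A]$, and it emits the correct letter.
\item Take $A<A'$, and let $F'=I'_0,s'_0,\dots,I'_n,s'_n$ be the funnel emitting $A'$. Choose $k$ minimal with $A\notin I'_k=[A'_k,B'_k]$; then $A<A'_k$. Also $A$ lies in the output of the prefix $I'_0,s'_0,\dots,I'_{k-1},s'_{k-1}$. So $A$ lies in the output of some child of that prefix whose lower bound is at most $A<A'_k$. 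That child therefore precedes $I'_0,s'_0,\dots,I'_k,s'_k$ among the children. A closed funnel with output $[A,A]$ below it has a smaller address than $F'$.
\item Closed funnels have no funnel children, so the lexicographic order on their addresses is the left-to-right order of leaves. This gives the yield.
\end{itemize}

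If you want to keep an induction on subtrees, weaken the hypothesis to \enquote{the yield below $F$ lists, in increasing order, exactly those configurations whose least closed funnel lies below $F$}. The sibling step is then the same argument. Suppose $A$ is emitted below child $j$, whose interval is $[C_j,D_j]$, and $A'<A$ is emitted below a later child $j'$. Then $C_j\le C_{j'}\le A'<A\le D_j$ puts $A'$ in the output of child $j$. So $A'$ has a closed funnel with a smaller address, contradicting minimality.

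\textbf{Minor omission.} You also skip the end-marker constraint on $\delta(\_,\triangleright)$ and $\delta(\_,\triangleleft)$. The paper derives it from the fact that splits of funnels never leave $\{0,\dots,|w|+1\}$.
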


\begin{proof}
    Consider the run on input $\triangles w$. \AP
    Consider the\footnote{It is easy to see that this node is uniquely defined, for instance because none of the descendants nodes have the empty $\ell$-"labelled" "tiling" in their label.} node %"labelled" by $\big[(\suc,1,\rig,\type{\varepsilon_\tile}{\triangleright}), u\big]$  
    obtained from Claim~\ref{claim:preprocessing}, and call it the ""main root"" of the run. \AP
    The descendants of the "main root" are called the ""main nodes"" of the run.

    It follows immediately from the definition of the machine that there are four types of "main nodes":
    \begin{itemize}
        \item "main nodes" with $\ell$ non-leaf "children@@funnel", that we call open nodes;
        \item "main nodes" with a single child which is a leaf, that we call "closed" nodes;
        \item "main nodes" which are leaves "labelled" by $\Gamma$, that we call output leaves; and
        \item "main nodes" which are leaves "labelled" by $\varepsilon$, that we call skip leaves.
    \end{itemize}

    Hence to every "main node" which is not a leaf, we may associate an address $1=j_0,\dots,j_{n}$, so that
    \begin{itemize}
        \item the address of the "main root" is $j_0=1$; and
        \item for every open node with address $j_0,\dots,j_{n}$ and every $j \in \{1,\dots,\ell\}$, the address of its $j$-th child is $j_0,\dots,j_{n},j$.
    \end{itemize}
    
    Consider a "main node" which is not a leaf.
    Its label is of the form $\big[(\step,j_\pre,\dir,t^\state),u \big]$, and $u$ belongs to $\big[(\Sigma \cup \{\triangles ,\}) \times \Lambda^\ell \times \Typ \big]^*$, so it can be written $(u^\Sigma,T,u^\Typ) \in (\Sigma \cup \{\triangles ,\})^* \times (\Lambda^\ell)^* \times \Typ^*$.
    Here comes the main inductive statement.
    
    \begin{claim}
        Consider a "main node" which is not a leaf, with an address $j_0,\dots,j_{n}$ that encodes a "reachable" "funnel".
        Its label is of the form $\big[(\step,j_n,\dir,t^\state),u \big]$ where $u$ rewrites as $(u^\Sigma,T^\pre,u^\Typ)$.
        Let $i_0$ denote the "position" of the pointer in $u$, and let $s$ be the "split" with "position" $i_0$ and direction $\dir$.
        Then the following hold:
        \begin{enumerate}[(a)]
            \item\label{item:correct-sigma-word} $u^\Sigma=\triangles w$;
            \item\label{item:correct-state-type} if $\step \in \{\opp,\suc\}$ then $t^\state$ is the "type" of $T^\pre$ over the predecessor of $s$, otherwise $t^\state$ is the "type" of $T^\pre$ over $\bar s$;
            \item\label{item:correct-tape-type} for every "position" $i$, the $i$-th letter of $u_{\Typ}$ is the "back-type" of $T^\pre$ at "position" $i$;
            \item\label{item:correct-tiling} the $\ell$-"labelled" "tiling" $T$ obtained from $T^\pre$ by appending $\step^j$ to $T^\pre(i_0)$ encodes the "funnel" induced by the "address" $j_0,\dots,j_{n}$;
            \item\label{item:correct-position} $i_0$ is the "live" "position" in $T$; and
            \item\label{item:correct-direction} $\dir$ is the "entry direction" of the "live" "block" of $T$.
        \end{enumerate}
    \end{claim}

    \begin{claimproof}
        We prove the claim by (top-down) induction on the run, starting from the "main root".
        
        \vskip1em
        \noindent
        \textbf{Base case.} Claim~\ref{claim:preprocessing} tells us that the "main root", which has address $j_0=1$, has label $\big[(\suc,1,\rightarrow,t_\varnothing),u \big]$, and that  $u$ is of the form $(u^\Sigma,T^\pre,u^\Typ)$ so that:
        \begin{itemize}
            \item $T^\pre$ is the empty "tiling";
            \item $u^\Sigma=\triangles u$ so item~(\ref{item:correct-sigma-word}) holds;
            \item we have $s=\rig 0$, whose predecessor is the empty set, so $t_\varnothing$ is the "type" of $T^\pre$ over the predecessor of $s$ therefore item~(\ref{item:correct-state-type}) holds;
            \item for every $i$, the $i$-th letter of $u^\Typ$ is the "back-type" of $T^\pre$ at $i$, so item~(\ref{item:correct-tape-type}) holds;
            \item the "tiling" $T$ obtain by appending $\suc^1$ to $T^\pre(0)$ is the root "tiling", which encodes the "funnel" with "address" $j_0=1$ (\ie the root "funnel") so item~(\ref{item:correct-tiling}) holds;
            \item $i_0=0$ is both the pointed "position" of $u$ and the "live" "position" in $T$, so item~(\ref{item:correct-position}) holds;
            \item $\dir=\ \rightarrow$ which is the "entry direction" of the unique "block" in $T(0)$, so item~(\ref{item:correct-direction}) holds.
        \end{itemize}

        \vskip1em
        \noindent
        \textbf{Inductive case.} 
        Consider a node with address $j_0,\dots,j_{n}$, with pointed "position" $i_0$ and with label $\big[(\step,j_n,\dir,t^\state),u \big]$ with $u=(u^\Sigma,T^\pre,u^\Typ)$ and so that the claim holds for this node, and an integer $j \in \{1,\dots,\ell\}$.
        Assume that the node encodes a "reachable" "funnel" $F$ which is "open", so that its $j$-th "child@@funnel" is also a "reachable" "funnel", with "address" $j_0,\dots,j_n,j$.
        We assume for concreteness that $\dir = \ \rightarrow$.
        Let $(\sigma,\tile,t^\tape)$ denote the pointed letter in $u$ and let $t=t^\state \cdot \type {\tile \cdot \step^{j_n}}{\sigma} \cdot t^\tape$.
        First, we should prove that the node is open, so that its $j$-th child in the run is well-defined.

        We know by induction that the $\ell$-"labelled" "tiling" $T$ obtained from $T^\pre$ by appending $\step^{j_n}$ to $T^\pre(i_0)$ encodes $F$.
        Note that the "type" of $T$ is $t$, and therefore by our assumption that $F$ is "open", we know that $\triangles w \models \phi_\producenothing(T)$ does not hold, and neither does $\triangles w \models \phi_{\produce \gamma}$ for any $\gamma \in \Gamma$.
        Hence the node is indeed open, and we let $\big[(\step_j,j,\dir',t_j),(\sigma,\tile',t'),\move_j\big]$ be the $j$-th letter $\delta_j$ of $\delta\big((\step,j_n,\dir,t^\state), (\sigma,\tile,t^\tape)\big)$.
        
        Consider the node with address $j_0,\dots,j_{n},j$.
        By definition of the machine, it is the node with label $\big[(\step_j,j,\dir',t_j), u_j\big]$, where $u_j$ is obtained from $u$ by replacing the pointed letter by $(\sigma,\tile',t')$, and moving the pointer according to $\move_j$.
        More precisely, the "position" $i_j$ in the child node is either $i_0-1,i_0$ or $i_0+1$ according to whether $\move_j$ is $\leftarrow,\rightarrow$ or $\circ$.
        We let $(u_j^\Sigma,T,u_j^\Typ)$ denote $u_j$.

        We prove that the required items hold for the child node.
        \begin{enumerate}[(a)]
            \item By induction we know that $\sigma_{i_0}$ is the $i_0$-th letter of $\triangles w$, and therefore the first item follows.
            
            \item \begin{itemize}
                \item If $\step_j \in \{\opp,\suc\}$.
                Then we should prove that $t_j$ is the "type" of $T$ over the predecessor of $s$.
                Then we have three different cases.
                \begin{itemize}
                    \item If $\step=\same$.
                    Then $\dir'=\dir=\ \rightarrow$, hence $\move_j=\dir'=\ \rightarrow$ so $i_j=i_0+1$ hence $s=\rig{i_0+1}$ and its predecessor is $\rig{i_0}$.
                    We know by induction that $t^\tape$ is the "type" of $T^\pre$ over $\rig{i_0-1}$.
                    Therefore the "type" of $T$ over $\rig{i_0}$ is $t^\tape \cdot \type {\tile'} \sigma$, which indeed matches the definition of $t_j$ in this case.

                    \item If $\step=\opp$. 
                    Then $\dir'= \ \leftarrow$ hence $\move_j=\ \leftarrow$ so $i_j=i_0-1$ hence $s=\lef{i_0-1}$ and its predecessor is $\lef{i_0}$.
                    We know by induction that $t^\tape$ is the "type" of $T^\pre$ over $\lef{i_0+1}$.
                    Therefore the "type" of $T$ over $\lef{i_0}$ is $\type {\tile'} \sigma \cdot t^\tape$, which matches the definition of $t_j$.

                    \item If $\step=\suc$.
                    Then $\dir'=\ \rightarrow$ hence $\move_j=\ \rightarrow$ so $i_j=i_0+1$ hence $s=\rig{i_0+1}$ and its predecessor is $\rig{i_0}$.
                    Since $\step \in \{\opp,\suc\}$, we know by induction that $t^\state$ is the "type" of $T^\pre$ over the predecessor $\rig{i_0-1}$ of $\rig{i_0}$.
                    Therefore the "type" of $T$ over $\rig{i_0}$ is $t^\state \cdot \type{\tile'}{\sigma}$, which matches the definition of $t_j$.
                \end{itemize}
                \item If $\step_j = \same$.
                Then $\move_j=\circ$ so $i_j=i_0$.
                We should prove that $t_j$ is the "type" of $T$ over $\bar s$.
                Again, we have three cases.
                \begin{itemize}
                    \item If $\step=\same$.
                    Then $\dir'=\ \rightarrow$ so $s=\rig{i_0}$ hence $\bar s=\lef{i_0+1}$.
                    Since $\step =\same$ we know by induction that $t^\state$ is the "type" of $T^\pre$ over $\lef{i_0+1}$, therefore we conclude because $t_j=t^\state$ and $i_0 \notin \lef{i_0+1}$ (thus the "types" of $T$ and $T^\pre$ are the same over $\lef{i_0+1}$).
                    \item If $\step=\opp$.
                    Then $\dir'=\ \leftarrow$ so $s=\lef{i_0}$ hence $\bar s=\rig{i_0-1}$.
                    Since $\step \in \{\opp,\suc\}$, we know by induction that $t^\state$ is the "type" of $T^\pre$ over the predecessor $\rig{i_0-1}$ of $\rig{i_0}$, therefore we conclude because $t_j=t^\state$ and $i_0 \notin \rig{i_0-1}$.

                    \item If $\step=\suc$.
                    Then $\dir'=\ \rightarrow$ so $s=\rig{i_0}$ hence $\bar s=\lef{i_0+1}$.
                    We know by induction that $t^\tape$ is the "type" of $T^\pre$ over $\lef{i_0+1}$ therefore we conclude because $t_j=t^\tape$ and $i_0 \notin \lef{i_0+1}$.
                \end{itemize}
            \end{itemize}
            \item Let $i$ be a "position".
            We know by induction that the $i$-th letter of $u^\Typ$ is the "back-type" of $T^\pre$ at "position" $i$.
            If $i \neq i_0$, then this coincides with the $i$-th letter of $u_j^\Typ$ as well as the "back-type" of $T$ at "position" $i$.
            Hence we focus on $i=i_0$; recall that the $i_0$-th letter of $u_j^\Typ$ is $(\sigma,\tile',t')$, so we should prove that $t'$ is the "back-type" of $T$ at $i_0$.
            \begin{itemize}
                \item If $\step=\suc$.
                Then $t'=t^\state$, and we know by induction that $t^\state$ is the "type" of $T^\pre$ over the predecessor $\rig{i_0-1}$ of $\rig{i_0}$.
                We conclude because $i_0\notin \rig{i_0-1}$ and because the direction of $T$ at $i_0$ in this case is $\rightarrow$, so the "back-type" at $i_0$ indeed corresponds to $\rig{i_0-1}$.
                \item If $\step =\same$.
                Then $t'=t^\tape$ and we know by induction that $t^\tape$ is the "back-type" of $T^\pre$ at $i_0$.
                In this case, the "back-type" of $T^\pre$ at $i_0$ is its "type" over the predecessor $\rig{i_0-1}$ of $\rig{i_0}$, and since $i \notin \rig{i_0-1}$ this is also the "back-type" of $T$ at $i_0$.
                \item If $\step = \opp$.
                Then $t'=t^\tape$ and we know by induction that $t^\tape$ is the "back-type" of $T^\pre$ at $i_0$, which in this case is its "type" over the predecessor $\lef{i_0+1}$ of $\lef{i_0}$.
                This is also the "back-type" of $T$ at $i_0$.
            \end{itemize}

            \item As observed at the beginning of the proof, $t$ is the "type" of $T$, and therefore by definition $\step_j$ is the unique $x$ so that
            $
                \triangles w \models \phi_{\childspltilex}(T).
            $
            Therefore the $j$-th "child@@funnel" $F_j=I_0,s_0,\dots,I_n,s_n,I^j,s^j$ of $F$ is such that $\step_j$ is the relationship between $s_n$ and $s^j$.
            Hence the $\ell$-labelling which encodes $F_j$ is obtained from $T$ by appending $\step_j^j$ to $T(i)$, where $i$ is $i_0-1,i_0$ or $i_0+1$ respectively according to whether $\step$ is $\opp,\same$ or $\suc$.
            Unravelling the definition of $\move_j$, we get that this labelling is the labelling $T_j$ obtained from $T$ by appending $\step^j$ to $T(i_j)$, as required.
            
            \item The fact that $i_j$ is the "live" "position" in $T_j$ follows immediately from the analysis above.
            \item Similarly, we indeed get that $\dir'$ is the "entry direction" of the "live" "block" of $T$ (which is the "block" 
            $i_0$).\claimqedhere
        \end{enumerate}

    \end{claimproof}

        Therefore by item~\ref{item:simplicity-decreases} in the definition of "funnels", the machine under consideration indeed has the "bounded" visit property (with bound $2(d_0+2)$).
        Moreover, the facts that $\delta(\_,\_,\triangleright) \in (\Gamma \cup Q \times \{\triangleright\} \times \{\stay,\ra\})^*$ and $\delta(\_,\_,\triangleleft) \in (\Gamma \cup Q \times \{\triangleleft\} \times \{\la,\stay\})^*$ follow immediately from the definition of "funnels" (this is because the "children@@funnel" of "funnels" are "funnels", and in particular, the associated "splits" cannot exceed the bounds of the input string).
        Hence it is indeed a "yield-Hennie machine"; there remains to prove that its output coincides with $\Phi(\triangles{w})$.

        Consider a "closed" "reachable" "funnel" $F$ with "output" $[A,A]$ for some "configuration@@msosi" $A$, with "address" $j_0,\dots,j_n$.
        Consider the node in the run with the same address.
        Using the notations from the claim, we know by item~\ref{item:correct-tiling} that $T$ encodes $F$, and that therefore the "type" $t$ given by $t^\state \cdot \type {\tile \cdot \step^{j_n}}{\sigma} \cdot t^\tape$ if $\dir=\ \rightarrow$ and the reversed multiplication otherwise, is the "type" of $T$.
        Since $F$ is "closed", we have that either $\triangles w \models \phi_\producenothing(T)$ or that exactly one of the $\triangles w \models \phi_{\produce \gamma}$ holds; therefore the node is a "closed" node, and its unique child is a skip leaf in the former case, and an output leaf (with label $\gamma$) in the later.

        Thanks to the definitions of $\phi_\producenothing$ and $\phi_{\produce \gamma}$, for every "configuration@@msosi" $A$ there is a unique "reachable" "funnel" $F$ with "output" $[A,A]$ whose unique child is an output leaf, and its label is the letter produced by $A$, \ie the unique $\gamma$ such that $\triangles w \models \phi_\gamma(A)$.
        We say that $F$ is the "funnel" which outputs $A$; by definition it is the least "closed" "funnel" with "output" $[A,A]$.
        Therefore performing the yield produces all the required outputs, and there only remains to prove that these are produced in the correct order, as stated in the following claim.

        \begin{claim}
            Let $A<A'$ be two "configurations@@msosi" and let $F$ and $F'$ be the "reachable" "closed" "funnels" which "output" $A$ and $A'$ respectively.
            Then $F$ is smaller than $F'$.
        \end{claim}

        \begin{claimproof}
            Let $F'=I'_0,s'_0,\dots,I_n',s_n'$.
            Let $k$ be minimal so that $A \notin I'_k$.
            Since $A' \in I_k=[A_k,B_k]$, we have $A < A_k$ and we conclude thanks to the last property in Lemma~\ref{lem:children-funnels}.
        \end{claimproof}

        This concludes the proof.
\end{proof}

%% file: hennie-to-ariadne.tex
% !TEX root = main-set-interpretations.tex

This section proves the following translation.

\begin{lemma}\label{lemma:yHm2At}
    For every "yield-Hennie machine", there is an equivalent "Ariadne transducer".
\end{lemma}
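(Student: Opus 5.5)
The Ariadne transducer $\aria$ will perform a depth-first, left-to-right traversal of the "run@@hennie" of the given "yield-Hennie machine" $\Hh$, with $\push$ for descending and $\pop$ for returning. At each step, the current node of the run is represented by the current "stack". Whenever a leaf labelled $\gamma\in\Gamma$ is visited, $\aria$ emits $\gamma$; since leaves are visited in tree order, the "production@@ariadne" is then exactly the "yield" of the run. Let $k$ be the visit bound of $\Hh$ and $m$ the maximal length of a word $\transhennie(q,\theta)$.

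\textbf{States and local view.} A stack entry $(p,i)$ will store a pair $(q,c)$, where $q\in\stateshennie$ is the state of the node just entered at position $i$ and $c\in\{0,\dots,m\}$ counts how many children of that node have already been explored. The counter $c$ is what lets us order the states: put $(q,c)<(q',c')$ iff $q=q'$ and $c<c'$. The update function $f(q,c)=(q,c+1)$ used on $\pop$ is then "inflationary".

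The main point is that $\aria$ must know the current tape content at position $i$ without storing the whole tape. Here the "local view" does the job. The tape letter at $i$ at the current node is determined by the initial letter and by the most recent transition taken while the head was at $i$. That transition is recorded in the last entry of $\locview s i$ that precedes the top: its state $q'$ together with its counter $c'$ identify which letter $\delta_{c'}$ of $\transhennie(q',\theta')$ was taken, and therefore what was written. Here $\theta'$ is itself recovered recursively from the earlier entries of the local view, and the recursion bottoms out at the input letter $\sigma$ that $\transariadne$ receives. Hence the current tape letter is a function of $\sigma$ and $\locview s i$.

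By the "bounded visit property@@hennie" the local view has length at most $k$, up to a constant, so we set the "local stack bound" to that value. One subtlety is the $\stay$ moves, which push at the same position. They also count as visits, so the bound still applies.

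\textbf{Transitions.} Given $\sigma$ and the local view ending in the top $(q,c)$, compute the current letter $\theta$ and $\transhennie(q,\theta)=\delta_1\cdots\delta_r$. If $c<r$, then $\aria$ handles child $c+1$. If $\delta_{c+1}=\gamma$, we must output $\gamma$ and advance $c$ without leaving the node. To do this, push a dummy entry via $\push_\stay$ and immediately pop it with the update $c\mapsto c+1$. The dummy entry uses a special marker state; it is incomparable to the pairs, and the local-view decoding ignores it. Its extra contribution to the local view is bounded by $r$ per visit, so we enlarge the bound accordingly. If instead $\delta_{c+1}=(q',\theta',d)$, then $\aria$ pushes $(q',0)$ with direction $d$. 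The written letter $\theta'$ is then implicit in $(q,c)$ as explained above. If $c=r$, then $\aria$ pops with the update $(p,c)\mapsto(p,c+1)$ on the parent. At the root there is nothing to pop, so $\aria$ halts.

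Boundary cases at the markers are safe, because $\transhennie$ already never moves beyond the markers. Initialisation uses the "initial stack" $(\init,0)$ with $\init=(q_0,0)$.

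\textbf{Correctness and main obstacle.} The main step is an invariant, proved by induction along the run of $\aria$: the stack corresponds to the path from the root of the run to the current node, and for every position $i$ the tape letter at $i$ is the decoding of $\locview s i$. Given this invariant, the traversal visits each node in preorder and emits its leaves in order.

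The delicate point I expect is the decoding of the tape letter after a $\pop$. Once the subtree under child $c$ has been popped, the parent's entry $(q,c+1)$ no longer shows that the child wrote anything, and the child's own entries are gone. This is in fact correct: siblings in the run all start from the parent's tape, and the writes made inside the child's subtree are discarded. The one exception is the parent's own position $i$, where the letter rewritten by the parent's transition is a property of the child edge, not of the parent. To handle it, the decoding uses the entry above each node only when it exists. The tape seen at a node is the one created by its parent's chosen transition, and that transition is readable from the parent's state and counter, which index the edge that was taken. Getting this indexing exactly right is the step where I would be most careful, possibly storing the written letter $\theta'$ explicitly in the pushed state to simplify it.
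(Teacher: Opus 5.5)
Your proposal is correct and is essentially the paper's construction: a depth-first traversal of the run in which descending is a $\mathtt{push}$, returning is a $\mathtt{pop}$ whose inflationary update advances a child pointer, each $\gamma$-leaf is emitted via a $\mathtt{push}_\stay$ of a dummy state that is popped immediately, and the current tape letter is decoded from the local view. The only differences are in bookkeeping. The paper gives each node a second same-position entry $(q,\delta_1\cdots\delta_k)$ storing the pending transitions (shortened by $\mathsf{nextchild}$, ordered by decreasing length, with $\top$ marking a finished node), so the written letter is read directly off the first pending $\delta$. Your counter $(q,c)$ instead needs the recursive decoding you describe, and the edge taken from an entry $(q',c')$ is $\delta_{c'+1}$, not $\delta_{c'}$.
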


The translation is rather straightforward: the "Ariadne transducer" traverses the whole run of the "yield-Hennie machine" in a depth-first fashion, where going down the tree corresponds to $\push$ and going back up corresponds to $\pop$.

\begin{proof}\AP%
\newfreshHennie{yHm2At}{}%
\newfreshAriadne{yHm2At}{_{\Ariadne}}%
\knowledgenewrobustcmd\nDelta{\cmdkl{\Delta}}%
Consider a "yield-Hennie machine" $\intro*\Hennie$ from $\Sigma^*$ to $\Gamma^*$, with states $\intro*\stateshennie$, initial state $q_0$, "tape alphabet@@hennie" $\intro*\tapehennie$ and "transition function@@hennie" $\intro*\transhennie\colon \stateshennie \times \tapehennie \to \nDelta^*$, where $\intro*\nDelta=\Gamma \cup \stateshennie \times \tapehennie \times \{\la,\stay,\ra\}$.

\AP We define an "Ariadne transducer" $\intro*\aria$ with states $\intro*\statesariadne:=\set{\top} \cup \stateshennie \cup \stateshennie \times \nDelta^*$.
The initial state of $\aria$ is $q_0$.
Here is the intuition.
To simulate visiting a subtree in the run of $\Hennie$ rooted at a node with position $i$, label $u$ and state $q$, assuming that the top of the current "stack" is $(q,i)$ (which will be guaranteed by induction), we proceed as follows.
\begin{enumerate}[(1)]
    \item Since $\aria$ has access to the local view at position $i$ in the current "stack", the value of the $i$-th letter $\theta$ of $u$ is available, and therefore we have access to $\transhennie(q,\Theta) = \delta_1\dots\delta_k \in \nDelta^*$.
    \item\label{item:step2} We use $\push_\stay$ to push $((q,\delta_1\dots\delta_k),i)$ on the "stack", which initialises the next phase.
    \item When the "top" of the "stack" is of the form $(q,\delta_1\dots\delta_k)$, we proceed as follows.
    \begin{enumerate}[(a)]
        \item\label{item:step-pop-top} If $k=0$ this means that we have finished visiting the subtree (or that there was no child to begin with), so we pop using the function that replaces $q$ with $\top$ on the "stack".
        \item[]\hspace*{-1.4\labelwidth}\hspace*{-\labelsep} {We now assume $k>0$.}
        \item\label{item:step3b} If $\delta_1 \in \Gamma$ we produce a letter.
        \item\label{item:step3c} Otherwise $\delta_1=(q',\theta,\dir)$ so we push state $q'$ in the direction $\dir$.
        Then the machine will inductively visit the first child of the node, ending with a $\pop$ operation which passes a function $\nextchild$ removing $\delta_1$ from the sequence $\delta_1 \dots \delta_k$ (see next item).
    \end{enumerate} 
    \item\label{item:step-pop-nextchild} Eventually, after an occurrence of step~\ref{item:step-pop-top}, the top of the "stack" is $\top$ which means that we have finished visiting the tree below $q$; therefore we pop and pass the function $\nextchild$.
\end{enumerate}

\AP We now formally define the "transition@transition function@ariadne" table $\intro*\transariadne$ of $\aria$.
First, the partial order over $\statesariadne$ is given by
\begin{itemize}
    \item the order over $\stateshennie \times \nDelta^*$ corresponds to the reverse order of the length in the $\nDelta^*$ component, \ie $(q,\alpha) < (q',\alpha')$ if and only if $|\alpha| > |\alpha'|$;
    \item $\top$ is strictly greater than all elements in $\stateshennie \cup \stateshennie \times \nDelta^*$.
\end{itemize}

\AP
We will use two update functions which corresponds to Steps~(\ref{item:step-pop-top}) and~(\ref{item:step-pop-nextchild}) above.
Step~(\ref{item:step-pop-top}) uses the function $\finish$ which maps every "state@@hennie" $q \in \stateshennie$ to $\top$ (and is undefined otherwise).
Step~(\ref{item:step-pop-nextchild}) uses the function $\intro*\nextchild$ which maps $(p,\delta_1\alpha) \in \stateshennie\times \nDelta\nDelta^*$ to $(p,\alpha)$.
Note that both are indeed "inflationary".

Let $\nDelta(p,\theta)=\delta_1\cdots\delta_k$ be a transition in $\Hennie$.
Following the above description, we get the following definition for the "transition function@@ariadne"\footnote{For readability we don't specify the productions of the transitions when they are $\varepsilon$.} $\transariadne$ of $\aria$.
\begin{itemize}    
    \item $\transariadne(\theta,p)=(\push_\stay,(p,\delta_1\cdots\delta_k))$, if $\theta\in \Sigma \cup \{\triangles,\}$ \hfill  (base case) $ \qquad \qquad \qquad$

    \item $\transariadne(\sigma,\rho (q,\theta,\dir)p)=(\push_\stay,(p,\delta_1\cdots\delta_k))$, with $\rho\in R^*$ \hfill (see Step~(\ref{item:step2})) $ \qquad \qquad \qquad$

    \item $\transariadne(\sigma,(p,\rho \gamma))=(\gamma,\push_\stay,\top)$, with $\gamma\in \Gamma$, $\rho\in R^*$ \hfill (see Step~(\ref{item:step3b})) $ \qquad \qquad \qquad$

    \item $\transariadne(\sigma,(p,(q,\theta,\dir)\rho))=(\push_\dir,q)$ \hfill (see Step~(\ref{item:step3c})) $ \qquad \qquad \qquad$

    \item $\transariadne(\sigma,(p,\epsilon))=(\pop,\finish)$ \hfill (see Step~(\ref{item:step-pop-top})) $ \qquad \qquad \qquad$

    \item $\transariadne(\sigma,\rho\top )=(\pop,\nextchild)$.\hfill (see Step~(\ref{item:step-pop-nextchild})) $ \qquad \qquad \qquad$
\end{itemize}

We omit the proof that $\aria$ computes the same function as $\Hennie$, which is a straightforward induction.
\end{proof}

%% file: ariadne-to-msosi.tex
% !TEX root =  main-set-interpretations.tex

This section gives a translation from Ariadne transducers to MSO set interpretations.

\begin{theorem}\label{thm:ariadne-to-msosi}
    For every Ariadne transducer, there exists an MSO set interpretation realising the same function.
\end{theorem}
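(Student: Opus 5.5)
The plan is to encode stacks of the Ariadne transducer $\aria$ as colourings of the input and read off the output order lexicographically. The colour set is $F=\statesariadne^{\leqslant k'}\times\{\text{markers}\}$ for a suitable bound $k'$. A stack $s$ is mapped to the colouring $i\mapsto \locview s i$, enriched with bounded extra information: the top position, and for each entry of a local view whether the next stack entry lies left, right, or at the same position. This lets the whole sequence $s$ be recovered by following this information, exactly as valid tilings encode sequences of splits in Section~\ref{sec:encoding-funnels}. First I will check that every stack occurring in the run has local views of length at most $k$, plus one for the final push, since transitions are only defined on views of length $\leqslant k$.

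Second, I will check that the encoding is MSO-recoverable. Validity of an encoding can be expressed in MSO in the style of $\phi_\valid$. The lexicographic preorder on stacks is then MSO definable by comparing the first location where two encodings diverge, which is the analogue of Lemma~\ref{lem:definability-before}. Since stacks in a run are pairwise distinct and strictly increasing, restricted to the run this order is exactly the temporal order. Each stack may produce a word $u\in\Gamma^*$ of bounded length. To handle this, the universe will consist of pairs (stack, index $m\leqslant$ max production length), encoded by adding a constant extra colour. The order is lexicographic on the stack first and then on $m$. The letter predicates $\phi_\gamma$ read the transition applied at the top of the stack, which is determined by the top letter and the local view there, hence definable.

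The remaining and main obstacle is defining $\phi_\pos$, i.e.\ the set of stacks reached by the run (and not just any valid stack). This reduces to proving that for Ariadne automata the language of (encodings of) reachable stacks, as words over $\Sigma\times F$, is regular. I will establish this in Section~\ref{sec:reflection} by the route sketched in Section~\ref{sec:equivalence}. Given a stack encoding on the tape, an alternating Hennie automaton verifies reachability by walking back along the stack. At each prefix $s$, Exists guesses the increasing sequence $q_1<\dots<q_n$ of states successively occupying the next slot; this is bounded because updates are inflationary. Forall challenges one index $i$, and the automaton then checks, recursively and with bounded visits per cell, that from $s\,q_i$ the run returns by a pop updating $q_i$ to $q_{i+1}$. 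Then I invoke \cite[Corollary 8.1]{DTP26} to conclude that the automaton recognises a regular language, and take $\phi_\pos$ to be the corresponding MSO formula (intersected with validity and the condition that the production index is below the production length).

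The hard part is making the bounded-visit property hold and the recursion well founded. For this I would measure progress by the lexicographic rank of stacks together with the fact that each position carries at most $k$ local entries.
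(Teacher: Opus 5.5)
Your proposal is correct in outline and follows the paper's architecture almost point by point.

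\begin{itemize}
\item Stacks are encoded by local views, each entry decorated with the direction of the next entry. This matches the paper's colouring by sequences of (state, direction) pairs of length at most $k$, with its enumeration of locations.
\item A constant extra colour records the index in the production.
\item $\phi_<$ is the lexicographic order, and $\phi_\gamma$ is read off the transition at the top of the stack.
\item $\phi_\pos$ comes from regularity, obtained via alternating Hennie automata and \cite[Corollary~8.1]{DTP26}.
\end{itemize}

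The one real difference is how you obtain $\phi_\pos$. The paper first builds an auxiliary Ariadne automaton $\mathcal A'$ over $(\Sigma\cup\{\triangleright,\triangleleft\})\times F$. It simulates the run of the transducer and keeps, on its own stack, a flag saying whether the current stack is a prefix of the encoded one. The paper then applies the general \Cref{thm:regularity-ariadne}. Its proof (\Cref{lem:aa-to-aha}) is exactly your Exists/Forall game, but played for $\mathcal A'$ from its initial stack, so all comparison with the target is delegated to $\mathcal A'$. You instead play the game directly against the target encoding written on the tape.

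Each route buys something:
\begin{itemize}
\item The paper's factoring gives a reusable theorem (Ariadne automata recognise regular languages). It also makes the comparison with the target trivial, because that comparison happens inside the Ariadne model.
\item Your inlined version skips the intermediate automaton, but you must do by hand the bookkeeping that $\mathcal A'$ gets for free.
\end{itemize}

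That bookkeeping has two requirements. When Forall challenges at a prefix $s=t_{\leq j}$ of the target $t$, the spawned branch must see exactly the local views of $s$; the entries of $t$ at levels $>j$ must be invisible to it. The main branch must also be able to move from one prefix to the next.

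With forward pointers only, walking back from the top is not locally determined. For $t=(a,0)(b,1)(c,2)(d,1)$, the last entries at positions $0$ and $2$ both point into position $1$, yet only $c$ is the predecessor of the top entry $d$. You can fix this in either of two ways:
\begin{itemize}
\item store a backward pointer with each entry and erase entries as you walk down; or
\item walk up from the bottom, marking the entries of the current prefix, and let the side branches ignore unmarked entries.
\end{itemize}

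Once this is in place, bounded visits follow directly. Every visit to a cell adds, removes or marks one entry there, giving at most about $2(k+1)$ visits per cell on any branch. Correctness is then a downward induction on stack length, as in \Cref{lem:aa-to-aha}; no argument via the lexicographic rank of stacks is needed.
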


\AP The difficulty lies in defining in MSO which stack encodings are accessible.
This is overcome by applying \Cref{thm:regularity-ariadne}, which is proved in the next section, and states that "Ariadne automata" "accept@@aa" regular languages.

\AP Here, an ""Ariadne automaton"" is just like an "Ariadne transducer" with no "production", and with a subset of ""accepting states@@aa"" $\intro*\acceptariadne \subseteq\statesariadne$; it ""accepts@@aa"" a word $w$ if the "run@@ariadne" over $w$ contains a "stack" which contains an accepting state.%
\phantomintro*"transition function@@aa"
\phantomintro*"run@@aa"

\begin{proof}[Proof of Theorem~\ref{thm:ariadne-to-msosi} assuming Theorem~\ref{thm:regularity-ariadne}]%
\newfreshAriadne{ariadne-to-msosi}{}

\AP Consider an Ariadne transducer $\intro*\aria$ from $\Sigma$ to $\Gamma$, with states $\intro*\statesariadne$ and "local stack bound" $k$.
Let $N$ be an upper bound to the size of its "productions@@ariadne".
\nathan{we can assume that productions have size at most $1$, without loss of generality.}
\nathan{je pense qu'on peut simplifier un peu la preuve: l'automate lit une configuration et simule le ariadne transducer jusqu'à arriver à la configuration en question. Par clôture par look-around le Ariadne automaton peut vérifier à toute étape que la configuration atteinte correspond à la config en input. Un façon de voir ça c'est de dire qu'à chaque transition on push un état \texttt{check} qui lit le mot en entier pour vérifier si on a atteint la bonne configuration. sinon on pop tous les checks et on revient à la position en cours et on recommence}
We construct an "MSO set interpretation" which assigns the output of $\aria$ on $w$ to $\triangles w$; to obtain the wanted set interpretation, it suffices to precompose this one by the regular function mapping $w$ to $\triangles w$ (see item 1 in \Cref{thm:closure-composition}).

Let $F_1=(Q \times \{\la,\stay,\ra\})^{\leqslant k}$ and let $F=F_1 \times \{1,\dots,N\}$, which will be the set of colours for the MSO set interpretation.
Consider a $\Sigma$-string $w$.
Consider an $F_1$-colouring $A_1$ of $w$.
A location in $A_1$ is a tuple $\ell=(q,i,\dir,h) \in Q\times \{1,\dots,|w|\}  \times \{\la,\ra\} \times \{1,\dots,k-1\}$, such that $A_1(i)$ has size $\geq h$ and its $h$-th letter is $(q,\dir)$.
We say that $i$ is the position of the location $\ell$, and $h$ is its height; there is at most one location per position and height.

We define an enumeration of locations with the following algorithm (if it fails, the enumeration is undefined):
\begin{itemize}
    \item the first location is the one with position $1$ and height $1$;
    \item assuming that $\ell_0,\dots,\ell_n=(q,i,\dir,h)$ have been enumerated, we let $\ell_{n+1}$ be the location
    \begin{itemize}
        \item with position $i'$ equal to $i-1,i$ or $i+1$ according to whether $\dir$ is $\la,\stay$ or $\ra$ (if this is out of bounds, the algorithm fails), and
        \item with height $h'$, where $h'$ is one plus the maximal height of a location in $\ell_0,\dots,\ell_n$ with position $i'$ (the maximum of the empty set is $0$).
    \end{itemize}
\end{itemize}
We say that $A_1$ is admissible if the above algorithm does not fail, and enumerates all the locations of $A_1$.
% Provided $A_1$ is admissible, its head is defined to be the last location on this enumeration, denoted $\head(A_1)$.
% The next easy claim is left to the reader.

% \begin{claim}
%     oefizqjf
% \end{claim}
Assume $A_1$ is admissible, the stack induced by a $A_1$ is the element of $(Q \times \{0,\dots,|w|+1\})^*$ obtained from the enumeration of $A_1$ by projecting each location to its first two coordinates.

Let $s_1,\dots,s_n$ denote the run of the Ariadne machine over $w$.
We say that an $F$-colouring $A=(A_1,A_2)$ is a configuration over $w$ if:
\begin{itemize}
    \item $A_2$ is a constant labelling, \ie there is $j \in \{1,\dots,N\}$ such that $A_2(i)=j$ for every $i$;
    \item $A_1$ is admissible and the stack it induces coincides with $s_p$ for some $p \in \{1,\dots,n\}$; and
    \item $j \in \{1,\dots,m\}$, where $m$ is the size of the production of $s_p$.
\end{itemize}
If the above items hold, we say that $A$ is the $j$-th occurrence of the $p$-th stack (of the run over $w$).
We order configurations in the natural way: if $A$ and $A'$ are the $j$-th and $j'$-th occurrences of the $p$-th and $p'$-th stack, then $A<A'$ if and only if $p<p'$ or ($p=p'$ and $j<j'$).

Here comes the crucial claim; its proof is relatively straightforward.

\begin{claim}
    There is an "Ariadne automaton" $\aria'$ with alphabet $(\Sigma \cup \{\triangles, \}) \times F$, which accepts a word $(\triangles w, A)$ if and only if $A=(A_1,A_2)$ is a configuration over $w$.
\end{claim}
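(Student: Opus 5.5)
The automaton $\aria'$ simulates $\aria$ on the $\Sigma$-component of its input, ignoring the $F$-component. At every stack it reaches, it launches a verification phase that compares the current stack with the stack encoded by $A_1$. If they match and $j=A_2$ is within the production size, it enters an accepting state.

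Concretely, the states of $\aria'$ will be $\statesariadne$ extended with finitely many auxiliary "check" states, and the local stack bound becomes $k+c$ for a constant $c$. The transition at a stack $s$ with top position $i$ first does a $\push_\stay$ of a check state $c_{\mathrm{start}}$. The check phase then visits every position from left to right by $\push_\ra$ and back. At each position $i'$, the local view of the current stack contains $\locview s {i'}$ (below the check symbols), together with the letter $A_1(i')$ from the input. It compares the sequence of states in $\locview s {i'}$ with the state components of $A_1(i')$. The direction components are compared with the directions actually used: the direction of each entry in the local view is determined by where the next stack entry lies. This information must be recorded in the stack, so I would enrich the states of the simulation with the direction of the push that follows them. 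This is possible by updating the top when pushing, via an inflationary tagging in the state order, since the tag only refines the state.

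Checking that the local views at all positions coincide suffices, because a stack is determined by its local views together with the direction labels: this is exactly the enumeration algorithm defining admissibility. The check also verifies that $A_2$ is constant with value $j$, and that $j$ is at most the size of the production $\transariadne(\sigma,\locview s i)$ at the top position. The top position $i$ is remembered in the check symbol pushed at the start, since the local view there is visible. Also, $j\geq 1$ forces production nonempty, so stacks with empty production give no configuration, matching the definition.

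If the check fails, the check phase pops back down with inflationary updates. A check state $c$ is updated to a larger "done" state, and finally the original top is updated, via an inflationary marking, to a state recording "checked". Then $\aria'$ performs the transition $\aria$ would perform from $s$. This marking is needed because pops must be inflationary. I would duplicate $\statesariadne$ into an unchecked copy below a checked copy, so the post-check update is inflationary and the order between the original states is preserved lexicographically. Pushes from a checked state behave as $\aria$'s pushes, and pops apply $f$ composed with the copy map.

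The main obstacle is precisely this interaction with inflationarity and the lexicographic well-foundedness. I need the extra check pushes not to disturb the simulated run, in particular that pops of $\aria$ still reach the right entries after the check symbols have been removed. I must also handle the local stack bound: each position receives at most two extra check symbols during a phase, and these are removed before the simulation resumes, so a bound $k+2$ suffices. With this in place, $\aria'$ reaches an accepting state exactly when some $s_p$ in the run equals the stack induced by $A_1$ with a valid $j$, which proves the claim.
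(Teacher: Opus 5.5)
Your construction is correct in substance, but it takes a different route from the paper.

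\textbf{Comparison with the paper.} The paper's $\mathcal A'$ simulates $\mathcal A$ while storing, in the stack entries, a flag recording whether the current stack is a prefix of the stack induced by $A_1$. This flag is maintained locally along pushes and pops, so the tape is not revisited at every step. You instead re-verify equality from scratch at every simulated stack. You do this with a left-then-right sweep of pushed check symbols that compares every local view with $A_1$. You then return by pops, duplicating the states into checked/unchecked copies so that the return is inflationary.

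Your route tests equality directly. A prefix flag alone only says that the current stack is an initial segment of the enumeration of $A_1$. Concluding that it is the whole induced stack, and that $A_1$ is admissible at all (no locations outside the enumeration), is a non-local condition that the paper's sketch leaves implicit. Your sweep obtains both at once, from the correct observation that local views together with the directions of the non-top entries determine a stack. The cost is the bookkeeping for the return: a constant more in the local stack bound and a duplicated state set.

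\textbf{Three details to fix.}

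First, a push never modifies the current top in this model, so the direction tag cannot be written ``when pushing''. No tag is needed, however. An entry never changes while something lies above it. Hence the direction of the push that followed a non-top entry at position $m$ is the direction of the push action in $\delta(\sigma_m,v)$, where $v$ is the prefix of the local view at $m$ ending at that entry. The sweep can recompute it. Alternatively, you can insert the tag in the inflationary pop that closes the check phase.

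Second, the duplicated states must be ordered lexicographically with the $Q$-component dominant, so that $p_u<p_c<f(p)_u$ whenever $f(p)>p$. If all unchecked states lay below all checked ones, the simulated pops $p_c\mapsto f(p)_u$ would not be inflationary.

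Third, the direction stored in $A_1$ at the top location must not be compared with any move of the simulation, since the top has no successor entry yet. Only the in-bounds requirement of the enumeration applies to it.
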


\begin{claimproof}
    First $\aria'$ makes a pass over the input to check whether $A_2$ is constant, and if this is the case it remembers the constant $j$ in the subsequent states (if $A_2$ is not constant, the $\aria'$ stops, and thus rejects).
    Then $\aria'$ simulates the run of $\aria$ over $A$, while storing on the stack the information whether or not the current stack is a prefix of the stack $s$ induced by $A$.
    This allows $\aria'$ to check, at every step of the computation, whether the stack matches $s$.
    If this is the case, then go to an accepting state if $j$ is indeed less or equal to the length of the production, and terminate.
\end{claimproof}

Therefore by Theorem~\ref{thm:regularity-ariadne}, we obtain that there is an MSO formula $\phi_\pos$ such that for every $F$-colouring $A$,
\[
    \triangles w \models \phi_\pos(A) \quad \iff \quad \text{ $A$ is a configuration over $w$.}
\]
Defining $\phi_<(AA')$ is simple: we just check whether $A<A'$ lexicographically; this is similar to Lemma~\ref{lem:definability-before} about comparisons of tilings, so we omit the proof.
For $\gamma \in \Gamma$, we let $\phi_\gamma(A)$ express the facts that $\phi_\pos(A)$ holds, and moreover, the $j$-th letter of the production of the corresponding stack is $\gamma$, where $j$ is such that $A_2$ is constant and equal to $j$.
\end{proof}

%% file: ariadne-reflection.tex
% !TEX root =  main-set-interpretations.tex

Recall the definition of "Ariadne automata" from the previous section.
The main theorem from this section is the following.

\begin{theorem}\label{thm:regularity-ariadne}
	The word language "accepted@@aa" by an "Ariadne automaton" is effectively regular.
\end{theorem}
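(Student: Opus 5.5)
The proof goes through the two-step route announced in Section~\ref{sec:equivalence}. First, we use the \emph{"alternating Hennie automata"}: one-tape alternating Turing machines in which, on every branch of the run tree, each cell is visited at most $k$ times. Second, we translate a given "Ariadne automaton" $\aria$ into such a machine $\mathcal H$ recognising the same language. Regularity of $\mathcal H$ then comes for free from \cite[Corollary 8.1]{DTP26}. Indeed, an alternating Hennie automaton is the inverse image, under a string-to-tree Hennie machine, of the regular tree language of true boolean formulas. Since the construction of $\mathcal H$ from $\aria$ is explicit and the result of \cite{DTP26} is effective, the overall result is effective.

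The heart of the translation is a recursive predicate. For a stack $s$ reached in the run, with top $(p,i)$, define $\mathrm{Succ}(s,p,p')$ to hold when the run from $s$ eventually first modifies the top by a $\pop$ landing on $s$ with its top state replaced by $p'$, or never does so ($p'=\bot$, meaning termination or acceptance happens above). The tape of $\mathcal H$ stores, at each position $j$, the local view $\locview s j$. This view has length at most $k$, and by the bounded visit reasoning it only changes when the head is at $j$. The head of $\mathcal H$ sits at the top position $i$, and the state stores the current top state. To verify the claim $\mathrm{Succ}(s,p,p')$, the machine proceeds as follows.

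\begin{itemize}
\item It applies $\transariadne$ to the current letter and local view.
\item On a $\push_{\dir}$ of $q$, Exists guesses the sequence $q=q_1<q_2<\dots<q_m$ of successive versions of the pushed cell, and its final fate. The sequence is increasing because updates are "inflationary", so $m\le|Q|$.
\item Forall then either picks an index $r$ and challenges $\mathrm{Succ}(s\cdot(q_r,i'),q_r,q_{r+1})$ by moving to $i'$ and writing $q_r$ on the local view there, or accepts the chain and continues from $s\cdot(q_m,\cdot)$ if that cell is the one that finally gets popped, which means it is handed back to position $i$ with the update $f$ applied.
\item A $\pop$ with function $f$ ends the subgame: the claim is checked as $f(p)=p'$.
\end{itemize}

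Acceptance is handled by letting Exists claim at any point that an accepting state is reached in the current subtree. Stack underflow and termination are handled by the $\bot$ outcome, and the initial stack is checked with $\mathrm{Succ}((q_0,0),q_0,\bot)$. Consistency holds because the real run is deterministic: a correct guess always survives Forall's challenges, and a wrong guess is refuted by some challenge on a finite branch. Finiteness comes from the lexicographic increase of stacks.

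\textbf{Main obstacle: the bounded visit property of $\mathcal H$.} Along any branch of the game, the sequence of stacks visited is $s_1<s_2<\dots$ lexicographically. Each visit to cell $j$ corresponds to either pushing a new entry at $j$, which increases the length of $\locview{\cdot}{j}$ and is bounded by $k$, or advancing the top state at $j$, which is strictly increasing and bounded by $|Q|$. One has to check that a branch never revisits an older, popped-and-restored version of $\locview{\cdot}{j}$ without one of these two measures increasing. Because Forall only descends into a single child, a branch follows one path in the tree of stack prefixes. It returns to a lower level only through the ``continue'' move, which replaces an entry by a strictly larger one. So each position gets a potential, the tuple of states in its local view ordered lexicographically with length at most $k$, which strictly increases at each visit. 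This bounds the visits by roughly $(|Q|+1)^{k}$.

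There is one remaining subtlety: when the game continues at level $i$ after a subrun, the local views at positions other than $i$ must be those of the current stack. They are, because every cell pushed inside the subrun has been popped, so the other positions are unchanged on this branch and the tape is correct.
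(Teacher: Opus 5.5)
Your overall route matches the paper's. You reduce to alternating Hennie automata and get effective regularity from \cite[Corollary~8.1]{DTP26}. Exists guesses the strictly increasing chain of versions of a cell, and Forall challenges one subrun.

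\textbf{Bounded visits.} The gap is at the step you flag as the main obstacle, and your ``continue'' move causes it. Your potential needs the local view at a position to increase lexicographically from one visit to the next. Lexicographic increase of the simulated stacks only gives this at the position where two stacks first differ. A concrete failure:
\begin{enumerate}
\item from $u(p,i)$ the game pushes $(a,j)$ with $j\neq i$ and visits $j$ with view $\locview{u}{j}a$;
\item it continues back to $u(f(p),i)$;
\item the next transition pushes $(b,j)$ with $b<a$ or $b$ incomparable to $a$;
\item the next visit to $j$ sees $\locview{u}{j}b$.
\end{enumerate}
The view at $j$ has neither become longer nor had its top state advance. The entry at $j$ was popped and replaced by an unrelated one. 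So your dichotomy, and with it the $(|Q|+1)^k$ bound, is unjustified. Visits might still be bounded by a different count: at most $|Q|$ horizontal returns per level, and the levels entered along a branch form one stack, so each position carries or neighbours $O(k)$ of them. But that argument depends on what the game does after a continue, which your description leaves undetermined.

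\textbf{The predicate.} This is the second problem: $\mathrm{Succ}$ is inconsistent with the continue move, because $\mathrm{Succ}(s,p,p')$ only speaks about the \emph{first} modification of the top cell. If a subgame ends after one continue, the root claim $\mathrm{Succ}((q_0,0),q_0,\bot)$ asserts that the bottom cell is never updated by a pop. That is not equivalent to acceptance: the bottom cell typically goes through several versions first, and non-accepting termination must not count as success. If instead the game keeps simulating a level after a continue, it is no longer verifying $\mathrm{Succ}$.

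\textbf{How the paper avoids this.} It uses a predicate about the whole remaining lifespan of the \emph{current} cell, stored in states $(q,f)$: from $s(q,i)$, staying above $s$, the run either accepts or reaches some $s(r,i)$ that pops with update $f$. The game then runs as follows:
\begin{enumerate}
\item Exists guesses all versions $q=q_1<\dots<q_\ell$ of the current cell, with $q_\ell$ accepting or popping with $f$.
\item Exists also guesses the updates $f_j$ with which the cell pushed from $q_j$ will be popped, so that $f_j(q_j)=q_{j+1}$.
\item Forall picks $j$. The machine writes $vq_j$ at $i$, where $v=\locview{s}{i}$ is the tape content there, and moves to the pushed position in state $(r_j,f_j)$.
\end{enumerate}
A Hennie machine only rewrites the cell it is leaving, so the pushed state travels in the control state. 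This is unlike your ``writing $q_r$ on the local view there''. There are no horizontal moves, so every visit to a position lengthens its local view by one, and visits are bounded by $k$. The root starts with a dummy update $f_0$ that no pop at the bottom can produce. This lifespan predicate is the idea your proposal is missing.
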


\newfreshAriadne{regularity-ariadne}{}%
\newfreshHennie{regularity-ariadne}{}%
Let~$\intro*\Ariadne$ be an "Ariadne automaton".
\AP The proof strategy is to construct an ""alternating Hennie automaton"" that accepts the same language.
Syntactically, an "alternating Hennie automaton" is defined just like a "yield-Hennie machine" with empty output alphabet except that the transitions, instead of producing words, produce positive Boolean combinations.
Therefore, the ""transition function@@aha"" is of the form:
\[
	\intro*\transhennie\colon\intro*\stateshennie\times \intro*\tapehennie \to B^+(\stateshennie\times\tapehennie\times \{\la,\stay,\ra\})
\]
in which $B^+(X)$ denotes the set of positive Boolean expressions built on top of~$X$.
\nat{est-ce qu'il faudrait pas mettre $B^+(\set{\texttt{true},\texttt{false}}\cup\stateshennie\times\tapehennie\times \{\la,\stay,\ra\})$  ?}

\AP A ""run@@aha"" of an "alternating Hennie automaton" is defined just like a "run@@hennie" of a "Hennie machine", except that this time, each node is decorated by a positive Boolean formula of its children.
\AP Therefore a "run@@aha" naturally defines a large positive Boolean formula; an input string~$w$ is ""accepted@@aha"" by the "alternating Hennie automaton" if this formula evaluates to $\true$. We shall also say that that a "alternating Hennie automaton" ""accepts from@@aha""~$(q,u)$ for~$u$ some "$\stateshennie$-string with end markers", of the "run@@aha" constructed starting with $(q,u)$ as root node.

The following result is a consequence of~\cite[Corollary~8.1]{DTP26}.

\begin{lemma}\label{lem:regularity-alternating-Hennie}
	Languages recognised by alternating Hennie automata are regular\footnote{This refutes a conjecture of Niwi\'nski~\cite[top of page 19]{niwinskiexercises}, because single-tape alternating Turing linear time Turing machines are the same as alternating Hennie automata.}.	
\end{lemma}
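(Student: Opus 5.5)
The plan is to reduce the statement to the regularity-reflection result of~\cite{DTP26} for tree-to-tree (or string-to-tree) yield-Hennie-style machines. The key observation is that an "alternating Hennie automaton" is the same thing as a Hennie machine that outputs a tree instead of a word. Concretely, given an "alternating Hennie automaton" $\Hennie$, I will build a string-to-tree branching Hennie machine $\Hennie'$ over a ranked output alphabet $\{\wedge,\vee,\true,\false\}$. On input $w$, the output of $\Hennie'$ is a Boolean formula tree, and it evaluates to $\true$ exactly when $\Hennie$ accepts $w$.

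\textbf{Construction of $\Hennie'$.} I first normalise each transition formula in $B^+(\stateshennie\times\tapehennie\times\{\la,\stay,\ra\})$ into a binary tree of $\wedge$/$\vee$ nodes whose leaves are atoms, together with the constants $\true$ and $\false$. This addresses the remark that the constants may be needed. Then $\Hennie'$ works as follows.
\begin{itemize}
\item In state $q$, reading $\theta$, it emits the root symbol of the tree for $\transhennie(q,\theta)$.
\item It spawns one subprocess per child of that root, using states that record the current subformula position. These subprocesses do not move the head and do not rewrite the tape.
\item When a subprocess reaches an atom $(q',\theta',\dir)$, it performs the real move.
\end{itemize}
Each transition tree has bounded depth, so every branch stays at a position only a bounded number of extra times. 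The bounded-visit property of $\Hennie$ therefore transfers to $\Hennie'$, with the bound multiplied by a constant. Note that the branches of the run of $\Hennie'$ correspond to branches of the run of $\Hennie$, interleaved with finitely many stationary steps.

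\textbf{Invoking~\cite{DTP26}.} The output of $\Hennie'$ is then exactly the Boolean formula defined by the run of $\Hennie$ on $w$. The set $T_\true$ of $\{\wedge,\vee,\true,\false\}$-trees that evaluate to $\true$ is a regular tree language, recognised by a two-state deterministic bottom-up automaton. By~\cite[Corollary~8.1]{DTP26} (their Hennie machines are regularity reflecting; a string input is a special case of a tree input, via the usual encoding of strings as unary trees), $\tau_{\Hennie'}^{-1}(T_\true)$ is regular. This inverse image is exactly the language of $\Hennie$.

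\textbf{Main obstacle.} The main obstacle is matching the precise syntactic model of~\cite{DTP26}, in particular its conventions for end markers, initial configuration, how branching and output are specified, and whether outputs are required to be finite ranked trees. I would handle this by a direct, purely syntactic simulation, padding ranks with dummy children where needed. Termination needs no separate argument, since bounded visit already forces depth at most $k(|w|+2)$.
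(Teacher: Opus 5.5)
Your proposal is correct and follows essentially the same route as the paper. Both compile the alternating Hennie automaton into a string-to-tree Hennie machine whose output is the Boolean formula of the run, by blowing up each transition formula into finitely many auxiliary states; both note that the formula trees evaluating to $\true$ form a regular tree language, and both conclude by the regularity reflection of~\cite[Corollary~8.1]{DTP26}. Your explicit handling of the constants $\true,\false$ and of the bounded-visit transfer through stationary auxiliary steps spells out details the paper leaves implicit.
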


\begin{proof}
	We say that a tree whose leaves are labelled by $\{\wedge,\vee,\true,\false\}$ such that every non-leaf node has exactly one child labelled by either $\wedge$ or $\vee$ (therefore it is either a $\wedge$-node, which computes a conjunction, or a $\vee$-node, which computes a disjunction) encodes a formula.
	Consider the language $T$ of trees that encode a formula that evaluates to true (with the obvious definition).
	This is easily seen to be a regular language of trees (by performing the natural bottom-up evaluation).

	Now consider an alternating Hennie automaton $H_\text{alt}$.
	By blowing up each positive Boolean combination defining its transitions into a finite set of states, it is easy to transform it into a string-to-tree Hennie machine $H_{\text{tree}}$ that, on input $w$, outputs a tree which encodes a formula corresponding to the run of $H_{\text{alt}}$ over $w$.
	Then by definition, the language of $H_\text{alt}$ is the inverse image of $T$ by the string-to-tree function computed by $H_\text{tree}$, therefore by~\cite[Corollary~8.1]{DTP26}, it is regular.
\end{proof}

Therefore, there remains to prove the following.

\begin{lemma}\label{lem:aa-to-aha}
	For every "Ariadne automaton" there exists (effectively) an "alternating Hennie automaton" recognising the same language.
\end{lemma}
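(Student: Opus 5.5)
We will follow the two-player idea sketched in Section~\ref{sec:equivalence}. The tape of the alternating Hennie automaton $\mathcal H$ stores, at each position $i$, the local view $\locview s i \in \statesariadne^{\leqslant k}$ of the current stack $s$ of $\Ariadne$. The head sits at the top position, and the control state stores the top state of $s$ (marked as top) together with a bounded amount of bookkeeping. Since the local view at the head is written on the tape, the automaton can compute $\transariadne(\sigma,p_1\dots p_\ell)$ locally.

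The central object is a game-style predicate $\mathrm{Reach}(s\cdot(q,i), q')$, meaning: starting from stack $s\cdot(q,i)$, the run of $\Ariadne$ eventually pops back to $s$ with the top updated to $q'$, or else reaches an accepting state before that. We write $\bot$ for ``never returns'' (the run ends or accepts above $s$). From a stack with top $(q,i)$, the run alternates two kinds of steps: $\push$ steps that create a new cell $(q_1,i')$, and returns from that cell via a $\pop$ with some inflationary $f$, which replaces the top by a greater state. Consequently, the successive values taken by the top entry form a strictly increasing chain $q=r_0<r_1<\dots<r_m$ in the poset $\statesariadne$, so $m\le|\statesariadne|$.

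The automaton for ``from the current configuration, the eventual outcome is $X$'' works as follows. Exists guesses the chain $r_0<\dots<r_m$ and the final outcome: accept, stop, or pop with function $f$ to the caller. Forall then picks an index $j<m$ (or the last step), and the automaton checks that step locally.
\begin{itemize}
\item If the step is a $\push_{\dir}$ with state $q_1$, Exists must additionally guess how the pushed cell returns, namely with which update $f$ yielding $r_{j+1}=f(r_j)$. The automaton then writes $q_1$ on the tape at position $i'$ (appending it to the local view), moves there, and verifies recursively that the sub-run returns via $f$.
\item If the step is a $\pop$, it is simply checked against $\transariadne$.
\end{itemize}
Before spawning such a check, $\mathcal H$ overwrites the tape cell at $i$ so that the local view ends in $r_j$. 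When it returns, a pop is simulated by deleting the last entry of the local view at the current position and moving back to the caller position, which is stored in the state as a direction. Acceptance is a disjunction: either an accepting state occurs in the chain, or Exists claims that some sub-push accepts and Forall cannot refute it. Termination of the simulated run is guaranteed because a run contains no stack twice.

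The main obstacle is the bounded visit property. A branch of $\mathcal H$'s run follows one path of nested pushes, and on that path each position $i$ is visited once per entry ever occupying the local view at $i$ along the branch. Since $|\locview s i|\le k$, there are at most $k$ nested entries at $i$. Each entry contributes a bounded number of visits, namely one arrival and then one step per verified element of its increasing chain, bounded by $|\statesariadne|$. Forall follows only one index, so a branch visits each position at most $O(k\,|\statesariadne|)$ times.

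The subtle point is that a pop returning to position $i$ from a neighbouring cell must restore the caller's top, which is not at the head position. We handle this by having the child process, when it terminates its verification, not move back at all: its job is only to confirm the claimed $f$. The caller's continuation is verified in a sibling branch spawned by the caller, which uses the guessed $r_{j+1}$. Thus no branch ever returns, and the bound becomes at most $k$ arrivals per position per branch, plus stays.

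Finally, correctness is proved by induction on the length of the (finite) sub-run: Exists has a winning strategy iff the guessed chains are the true ones. The resulting language is regular by Lemma~\ref{lem:regularity-alternating-Hennie}.
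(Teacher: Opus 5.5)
Your final design is essentially the paper's construction: the tape holds local views, the control state holds the current top together with the update $f$ that the sub-run is committed to pop with (the paper's states $Q\times\mathrm{Update}(Q)$), Exists guesses the strictly increasing chain of top states together with the pop updates, Forall descends into a single push, and no branch ever returns, so each local view only grows and visits are bounded via $k$. Two cleanups are needed: discard your earlier ``delete the last entry and move back'' simulation of pops, which contradicts the final design, and keep on the tape only the local view of the untop, with the new top held in the state, since the automaton cannot write at the target cell $i'$ before moving there.
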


\begin{proof}
\AP\newfreshAriadne{aa-to-aha}{_{\Ariadne}}\newfreshHennie{aa-to-aha}{_{\Hennie}}
Consider an "Ariadne automaton"~$\intro*\Ariadne$ with "states@@ariadne" $\intro*\statesariadne$, "initial state@@ariadne"~$q_0$, "local stack bound"~$k$, "transition function@@aa"
\[
	\intro*\transariadne\colon  (\Sigma \cup \{\triangleright, \triangleleft\}){\times} \statesariadne^{\leqslant k}  \rightharpoonup (\{\push_\ra,\push_\stay,\push_\la\}{\times} \statesariadne \cup \set{\pop}{\times}\update(\statesariadne))
\] and "accepting states@@aa"~$\intro*\acceptariadne$.

\AP We construct an "alternating Hennie automaton"~$\intro*\Hennie$ as follows:
\begin{itemize}
\item the "states@@hennie"~$\intro*\stateshennie$ are $\statesariadne\times\update(\statesariadne)$;
\item the "initial state@@hennie" is $(q_0,f_0)$, where $f_0\in\update(\statesariadne)$ is some update that is different from every $f$ such that $\transariadne(\triangleright,q)=(\pop,f)$ for some $q \in Q$; \footnote{This is always possible since there are strictly more possible updates than states. This is a trick that allows to not duplicate the arguments in the proof for treating the case of the length one stacks, for which it is disallowed to pop. This $f_0$ can be understood as the commitment to not pop.}
\item the "tape alphabet@@hennie" is $\intro*\tapehennie:=(\Sigma \cup \{\triangleright, \triangleleft\})\times\statesariadne^{\leqslant k}$;
\item the "transition function@@aha"
	\[	
		\intro*\transhennie\colon \stateshennie\times \tapehennie\to B^+(\stateshennie\times \tapehennie\times\{\la,\stay,\ra\}),
	\]
%	is better described, as is classical, as a game between the players Exist and Forall: Exist controls the disjunctions and aims at proving the %formula true, while Forall controls the conjunctions and aims at falsifying it.
	is described as follows.
	Given a "state@@hennie" $(q,f)\in \stateshennie =  \statesariadne\times\update(\statesariadne)$ and a "tape letter"~$(\sigma,v)\in \tapehennie$, the positive Boolean formulas is: 
	\begin{enumerate}
	\item a disjunction over all choices of sequence $q_1,q_2,\dots,q_\ell\in\statesariadne$ of distinct\footnote{Hence it is a finite disjunction.} "states@@ariadne", and updates $f_1,\dots,f_{\ell-1}\in\update(\statesariadne)$  such that 
		\begin{itemize}
		\item $q_1=q$;
		\item either $q_\ell\in F$ or $\transariadne(a,vq_\ell)=(\pop,f)$; and
		\item for all~$j\in\{1,\dots,\ell-1\}$, there is~$r_j$ and $\dir_j$ such that $\transariadne(\sigma,vq_j)=(\push_{\dir_j},r_j)$,
		\end{itemize}
	\item of the conjunction of $((r_j,f_j),(\sigma,vq_j),\dir_j)$ over $j$ ranging in $\{1,\dots, \ell-1\}$.
	\end{enumerate}
	Note that, as is usual, if there is no disjunct available at step (1), the formula evaluates to false, and if there is no conjunct available at step (2), the formula evaluates to true.
\end{itemize}

\AP For this definition of~$\Hennie$ to be a valid "alternating Hennie automaton", we have to check the "bounded visit property@@hennie".
For this, it is sufficient to remark that for each transition of the "alternating Hennie automaton", the "local view" in $\statesariadne^{\leqslant k}$ which is "pointed" gets to increase its size by one. Therefore the bounded visit assumption is satisfied: the visits are bounded by the "local stack bound"~$k$ of~$\Ariadne$.

\AP We should prove that $\Ariadne$ and $\Hennie$ accept the same language.
For this, let us fix a "$\Sigma$-string with end markers" ~$\triangles w=\sigma_0\sigma_1\dots\sigma_{|w|+1}$, and show that it is accepted by one device if and only if the other also does.
\AP For $s(q,i)$ some "stack@@ariadne" of the "Ariadne automaton" over~$w$ (with "top" $(q,i)$ and "untop" $s$),
we define $\overline {s(q,i)}$ to be the "pointed $\Theta$-string"
that has as $m$-th "tape letter" $(\sigma_m,\locview s m)$ for every position $m$ and is "pointed" at $i$.
	
We prove by downward induction on~$n$, starting from $n=k(|w|+2)+1$, the following:
	\begin{quote}\AP
 	 \textbf{""Induction hypothesis@@aa2aha""}: For all~"stacks@@ariadne"~$s(q,i)$ of~$\Ariadne$ over~$w$ of length~$n$, and all updates~$f\in\update(\stateshennie)$, the following properties are equivalent:
	\begin{itemize}
	\item $\Hennie$ "accepts~$w$ from@@aha"~$((q,f),\overline{s(q,i)})$. 
	\item There is a "run@@ariadne" of $\Ariadne$ over~$w$ that starts with "stack" $s(q,i)$,
		"stays above"\footnote{We write that a "run@@ariadne" ""stays above~$s$"" if all the "stacks" it contains have~$s$ as strict prefix.}~$s$,
		and either is "accepting@@aa"
		or reaches a "stack" $s(r,i)$ such that $\transariadne(\sigma_i,\locview s ir)=(\pop,f)$.
	\end{itemize}
	\end{quote}
	
	Since "stacks@@ariadne" of $\Ariadne$ over~$w$ have length at most $k(|w|+2)$, the "induction hypothesis@@aa2aha" is established for the base case.
	For the induction step, we fix $n<k(|w|+2)+1$ and assume that the "induction hypothesis@@aa2aha" holds for "stacks" of length $n+1$.
	Let us now consider a "stack@@ariadne"~$s(q,i)$ of length~$n$ and $f\in\update(\stateshennie)$.

	\smallskip\noindent
	\textbf{"Downward implication@induction hypothesis@aa2aha".}
	Assume that the $\Hennie$ "accepts from@@aha" $((q,f),\overline{s(q,i)})$. 
	This means at step (1) that there exists a sequence of distinct "states@@ariadne"~$q_1,\dots,q_\ell$, and updates $f_1,\dots,f_{\ell-1}$ such that the assumptions of (2) are fulfilled, and the formula is true at step (2).
	In particular for all $j \in \{1,\dots,\ell-1\}$, there are~$r_j$ and $\dir_j$ such that $\transariadne(a,vq_j)=(\push_{\dir_j},r_j)$.
	Let $i_j$ be the position reached from $i$ after a $\dir_j$ move (\ie $i_j$ is $i-1,i$ or $i+1$ according to whether $\dir_j$ is $\la,\stay$ or $\ra$).
	Since $\Hennie$ "accepts from@@aha" $((q,f),\overline{s(q,i)})$, by step (2), it has to "accept from@@aha"~$((r_j,f_j),\overline{s(q,i)(r_j,i_j)})$, for all~$j\in\{1,\dots,\ell-1\}$.
	Hence, by "induction hypothesis@@aa2aha", there exists a "run@@ariadne"~$\rho_j$ that starts in $s(q_j,i)(r_j,i_j)$, "stays above"~$s(q_j,i)$, and is either "accepting@@aa", or ends in $s(q_j,i)(r'_j,i_j)$ such that $\transariadne(\sigma_{i_j},\locview{s(q_j,i)}{i_j})=(\pop,f_j)$.
	If none of these "runs@@aa" reach an accepting state, then
	\[
		\rho = s(q_1,i),\rho_1,s(q_2,i),\rho_2,\dots,\rho_{\ell-1},s(q_\ell,i)
	\]
	is a "run@@aa" of $\Ariadne$ that starts in $s(q,i)$, "stays above"~$s$, and ends in $s(q_\ell,i)$ which is such that $\transariadne(\sigma_i,\locview s iq_\ell)=(\pop,f)$.
	Otherwise, for~$j_0\in\{1,\dots,\ell-1\}$ minimal such that~$\rho_{j_0}$ is "accepting state@@aa", the run
	\[
		\rho := s(q_1,i),\rho_1,s(q_2,i),\rho_2,\dots,s(q_{j_0},i),\rho_{j_0}
	\]
	starts in $s(q,i)$, "stays above"~$s$ and is "accepting@@aa".
	In both cases, the second condition of the "induction hypothesis@@aa2aha" is established.

	\smallskip\noindent
	\textbf{"Upward implication@induction hypothesis@aa2aha".} Assume that there exists a "run@@ariadne"~$\rho$ of~$\Ariadne$ over~$w$
	that starts in~$s(q,i)$, "stays above"~$s$, and is either "accepting@@aa" or ends in a configuration~$s(q',i)$ such that $\transariadne(\sigma_i,\locview s iq')=(\pop,f)$.
	
	This "run@@ariadne" can be decomposed:
	\begin{align*}
		\text{as}\quad\rho &:= s(q_1,i),\rho_1,s(q_2,i),\dots,s(q_{\ell-1},i),\rho_{\ell-1}&&\text{if it is "accepting@@aa",}\\
		\text{or as}\quad\rho&:= s(q_1,i),\rho_1,s(q_2,i),\dots,s(q_{\ell-1},i),\rho_{\ell-1},s(q_\ell,i)&&\text{otherwise}
	\end{align*}
	where
	\begin{itemize}
	\item $q_1=q$,
	\item $\rho_j$ "stays above" $s(q_j,i)$ for all~$j\in\{1,\dots,\ell-1\}$,
	\item $\transariadne(\sigma,\locview s i q_j)$ is of the form $(\push,\dir_j,r_j)$, for all~$j\in\{1,\dots,\ell-1\}$, 
	\item $\rho_j$ starts with a "stack" of the form $s(q_j,i)(r_j,i_j)$ where $h_j$ is the position reached from $i$ by move~$\dir_j$, for all~$j\in\{1,\dots,\ell-1\}$, 
	\item $\rho_j$ ends with a "stack"~$s(q_j,i)(r'_j,i_j)$ such that $\transariadne(\sigma_{i_j},\locview{s(q_j,i)}{i_j}r'_j)$ is of the form $(\pop,f_j)$, for all~$j\in\{1,\dots,\ell-1\}$, such that $q_{j+1}$ is defined,
	\item if $\rho$ is "accepting@@aa", then~$\rho_\ell$ is "accepting@@aa",
		otherwise $q_\ell=q'$.
	\end{itemize}
	
	We should prove that the formula produced by the "alternating Hennie automaton"~$\Hennie$ from~$(q,f),\overline{s(q,i)}$ evaluates to true.
	For this, let us choose in the definition of $\transhennie$ the disjunct corresponding to~$q_1,q_2,\dots,q_{\ell-1},q_\ell$ (if $q_\ell$ is not defined, \ie if $\rho$ is "accepting@@aa", then any $q_\ell\in\statesariadne$ can be chosen), and $f_1,\dots,f_{\ell-1}$  (if $f_{\ell-1}$ is not defined, \ie if $\rho$ is "accepting@@aa", then any $f_\ell \in\update(\statesariadne)$ can be chosen).
	By "induction hypothesis@@aa2aha" applied to $s(q_j,i)(r_j,i_j)$, and $f_j$, using~$\rho_j$ as witness, we know that $\Hennie$ "accepts from@@aha" 
	$(r_j,f_j),\overline{s(q_j,i)(r_j,i_j)}$ for every $j$.
	Hence, all conjuncts are satisfied, and thus $\Hennie$ "accepts from@@aha"~$((q,f),\overline{s(q,i)})$. The upward implication of the "induction hypothesis@@aa2aha" is established.

	By applying the induction hypothesis with $s(q,i)$ being the "initial stack" $(q_0,0)$ (\ie $s$ is the empty word), we get that $\Hennie$ accepts a word $w$ if and only if either
	\begin{itemize}
		\item $\Ariadne$ accepts $w$; or
		\item on input $w$, $\Ariadne$ reaches a stack $(r,0)$ such that $\transariadne(\triangleright,r)=(\pop,f_0)$.
	\end{itemize}
	Thanks to our choice of $f_0$, the second item cannot occur, which concludes.
\end{proof}